\definecolor{DarkBlue}{rgb}{0.1,0.1,0.5}
\definecolor{DarkGreen}{rgb}{0.1,0.5,0.1}
\newcommand{\extra}[1]{}
\newtheorem{corollary}{Corollary}
\newtheorem{definition}{Definition}
\newtheorem{lemma}{Lemma}
\newtheorem{proposition}{Proposition}
\newtheorem{claim}{Claim}
\newtheorem*{remark2}{Remark}
\def\squareforqed{\hbox{\rlap{$\sqcap$}$\sqcup$}}
\def\qed{\ifmmode\squareforqed\else{\unskip\nobreak\hfil
\penalty50\hskip1em\null\nobreak\hfil\squareforqed
\parfillskip=0pt\finalhyphendemerits=0\endgraf}\fi}
\def\endenv{\ifmmode\;\else{\unskip\nobreak\hfil
\penalty50\hskip1em\null\nobreak\hfil\;
\parfillskip=0pt\finalhyphendemerits=0\endgraf}\fi}
\renewenvironment{proof}{\noindent \textbf{{Proof~} }}{\qed\medskip}
\newenvironment{proof+}[1]{\noindent \textbf{{Proof #1~} }}{\qed\medskip}
\newenvironment{remark}{\noindent \textit{{Remark.~}}}{\qed}
\mathchardef\ordinarycolon\mathcode`\:
\def\vcentcolon{\mathrel{\mathop\ordinarycolon}}
\newcommand{\nc}{\newcommand}
\DeclareMathOperator*{\argmin}{arg\,min}
\nc{\barA}{\overline{A}}
\nc{\barB}{\overline{B}}
\nc{\barC}{\overline{C}}
\nc{\barD}{\overline{D}}
\nc{\barR}{\overline{R}}
\nc{\barX}{\overline{X}}
\nc{\barY}{\overline{Y}}
\nc{\barU}{\overline{U}}
\newcommand{\NSW}{\mathrm{NSW}}
\newcommand{\SW}{\mathrm{SW}}
\newcommand{\RW}{\mathrm{EW}}
\newcommand{\I}{\mathcal{I}}
\newcommand{\Cake}{\mathcal{C}}
\newcommand{\Eval}{\mathrm{Eval}}
\newcommand{\Cut}{\mathrm{Cut}}
\newcommand{\RD}{\mathrm{RD}}
\newcommand{\MK}{\mathrm{MK}}
\begin{document}

\title{{\bfseries Fair Cake Division Under Monotone Likelihood Ratios}}
\author{Siddharth Barman\thanks{Indian Institute of Science. {\tt barman@iisc.ac.in}} \and Nidhi Rathi\thanks{Indian Institute of Science. {\tt nidhirathi@iisc.ac.in}}}

\date{}
\maketitle

\thispagestyle{empty}
\begin{abstract}
This work develops algorithmic results for the classic cake-cutting problem in which a divisible, heterogeneous resource (modeled as a cake) needs to be partitioned among agents with distinct preferences. We focus on a standard formulation of cake cutting wherein each agent must receive a contiguous piece of the cake. While multiple hardness results exist in this setup for finding fair/efficient cake divisions, we show that, if the value densities of the agents satisfy the \emph{monotone likelihood ratio property} (MLRP), then strong algorithmic results hold for various notions of fairness and economic efficiency. 

Addressing cake-cutting instances with MLRP, first we develop an algorithm that finds cake divisions (with connected pieces) that are envy-free, up to an arbitrary precision. The time complexity of our algorithm is polynomial in the number of agents and the bit complexity of an underlying Lipschitz constant. We obtain similar positive results for maximizing social (utilitarian) and egalitarian welfare. In addition, we show that, under MLRP, the problem of maximizing Nash social welfare admits a fully polynomial-time approximation scheme (FPTAS).

Many distribution families bear MLRP. In particular, this property holds if all the value densities belong to any one of the following  families: Gaussian (with the same variance), linear, binomial, Poisson, and exponential distributions. Furthermore, it is known that linear translations of any log-concave function satisfy MLRP. Therefore, our results also hold when the value densities of the agents are linear translations of the following (log-concave) distributions: Laplace, gamma, beta, Subbotin, chi-square, Dirichlet, and logistic. Hence, through MLRP, the current work obtains novel cake-cutting algorithms for multiple distribution families.
\end{abstract}

\clearpage

\thispagestyle{empty}
\tableofcontents
\clearpage

\setcounter{page}{1}

\section{Introduction}

Cake division is a quintessential model in the study of fair division. This setup captures the allocation of a divisible resource (metaphorically, the cake) among agents with equal entitlements, but distinct preferences. Over the past several decades, a significant body of work in mathematics, economics, and computer science has been devoted to cake cutting; see \cite{brams1996fair, robertson1998cake, procaccia2015cake} for excellent expositions and motivating applications (e.g., border negotiations and divorce settlements) of this framework. 

Some of the central solution concepts and axiomatic characterizations in the fair-division literature stem from the cake-cutting context \cite{moulin2004fair}. Indeed, the work of Steinhaus, Banach, and Knaster \cite{S48problem}---which lays the mathematical foundations of fair division---addresses cake division. The notion of \emph{envy-freeness} was also mathematically formalized in this setup \cite{stern1958puzzle, foley1967resource}. This well-studied notion deems a cake division to be fair if every agent prefers the piece assigned to her over that of any other agent, i.e., if no agent is envious of others. 

Formally, the cake is modeled as the unit interval $[0,1]$ and the cardinal preferences of the agents over pieces of this divisible resource are specified via valuation functions: $v_i(I) \in \mathbb{R}_+$ denotes the value that an agent $i$ has for interval (piece) $I \subset [0,1]$. These valuations $v_i$s are typically assumed to be induced by value-density functions $f_i$s, i.e., $v_i(I) \coloneqq \int_{x \in I} f_i(x) dx$, for each agent $i$ and interval $I$. 

This work focuses on a standard formulation of cake division in which every agent must receive a contiguous piece of the cake. That is, the goal is to partition the cake $[0,1]$ into exactly $n$ disjoint intervals (connected pieces) and assign them  among the $n$ participating agents. This connectivity requirement is naturally motivated by settings in which a contiguous part of the resource needs to be allocated to every agent \cite{brams1996fair}; consider, e.g., division of land, transmission spectrum, or processing time on a machine. Note that a partition of the cake $[0,1]$ into intervals $I_1, I_2, \ldots, I_n$---wherein interval $I_i$ is assigned to agent $i \in [n]$---is said to be envy-free iff $v_i(I_i) \geq v_i(I_j)$ (i.e., iff $\int_{I_i} f_i \geq \int_{I_j} f_i$) for all agents $i$ and $j$. 

The appeal of envy-freeness is substantiated by strong existential results: under mild assumptions, a contiguous envy-free cake division always exists~\cite{stromquist1980cut, simmons1980private, edward1999rental}.  While these results are built upon interesting mathematical connections,\footnote{For instance, the proof by Su \cite{edward1999rental} invokes Sperner's lemma.} they are, however, nonconstructive. In fact, Stromquist \cite{stromquist2008envy} has shown that there does not exist a finite-time algorithm for finding envy-free cake divisions with connected pieces; this result holds in a setup wherein the valuations are provided through an (adversarial) oracle. In addition, the work of Deng et al. \cite{deng2012algorithmic} establishes {\rm PPAD}-hardness of finding envy-free cake divisions with contiguous pieces, under ordinal valuations. 

Algorithms for envy-free cake division remain elusive even if we relinquish the contiguity requirement. It was not until the work of Brams and Taylor~\cite{brams1995envy} that a bounded-time algorithm was obtained for noncontiguous envy-free cake division. In general, the best-known result for this problem is by Aziz and Mackenzie \cite{aziz2016discrete}, who develop a {hyper-exponential} time algorithm for finding envy-free divisions with noncontiguous pieces.\footnote{The problem of finding an \emph{approximate} envy-free division (not necessarily with connected pieces) admits a fully-polynomial time approximation scheme~\cite{lipton2004approximately}.}
 
In light of these algorithmic barriers, identification of computationally-tractable instances in the cake-cutting context stands as a meaningful direction of work. The current paper addresses this consideration and, in particular, identifies an encompassing property---called the \emph{monotone likelihood ratio property}---which enables the development of efficient algorithms for fair cake-cutting (with connected pieces).

The ordered value-density functions $(f_i, f_j)$ are said to satisfy the {monotone likelihood ratio property} (MLRP) iff, for every $x \leq y$ in the domain, we have $\nicefrac{f_j(x)}{f_i(x)}  \leq \nicefrac{f_j(y)}{f_i(y)}$. In other words, the {likelihood ratio} $\nicefrac{f_j(x)}{f_i(x)}$ is nondecreasing in the argument $x \in \mathbb{R}$. Intuitively, this property asserts that, in comparison to $f_i$, the density $f_j$ is higher  towards the right end of the domain. We note that MLRP does not require $f_i$ and $f_j$ to be monotonic (or unimodal) by themselves.

In the cake-division context, we will say that an ordered collection $(f_i)_{i \in [n]}$ of value densities (of the $n$ agents) satisfies the {monotone likelihood ratio property} iff for each $i \in [n-1]$, the likelihood ratio $\nicefrac{f_{i+1}(x)}{f_{i} (x)}$ is nondecreasing in $x \in [0,1]$. That is, the agents are indexed with the property that consecutive likelihood ratios bear MLRP. This property is transitive and, hence, in cake-division instances with MLRP, value densities $f_i$ and $f_j$ satisfy MLRP for all $i < j$. 



Many distribution families are also known to bear MLRP \cite{larsen2001introduction, casella2002statistical}. In particular, this property holds if all the value densities belong to any one of the following   families: Gaussian distributions (with the same variance but different means), Poisson distributions, binomial distributions, and single-parameter exponentials; see Appendix~\ref{appendix:mlrp-use-cases} for details. Furthermore, it is known that linear translations of any log-concave function satisfy MLRP \cite{saumard2014log}. In particular, linear translations of the following (log-concave) distributions also satisfy this property: Laplace, uniform, multivariate Gaussian, gamma, beta, Subbotin, chi-square, Dirichlet, and logistic. Hence, the current work obtains novel results for many distribution families in a unified manner. 

MLRP is a common assumption on agents' utilities and type distributions in many economic contexts; see~\cite{jewitt1991applications} for a survey. As a stylized application of MLRP in cake division, consider a setting wherein each agent $i$ has a most preferred point $\mu_i$ on the divisible resource (cake) and $i$'s valuation density decreases as a Gaussian function (with a variance parameter that is common across the agents) of the distance from $\mu_i$. Indeed, the distance here can be geographical (as in case of land division), temporal (i.e., wait time), or it can be an abstract metric. 

Considering similar single-peaked preferences, but with a linear drop in value densities, Wang and Wu~\cite{wang2019cake} developed an efficient algorithm for noncontiguous cake division. Note that while linear densities bear MLRP, this property does not hold for piecewise linear densities. Hence our results do not directly address the setting considered in \cite{wang2019cake}. However, in absence of the contiguity requirement (as is the case in \cite{wang2019cake}) one can find a fair cake division by first partitioning the cake into intervals, in each of which the agents' value densities are linear, and then applying the MLRP result separately.\footnote{Recall that, in contrast to such a result, our focus is on finding cake divisions in which each agent receives a contiguous piece of the cake.} 


We focus on cake-division instances with MLRP and develop algorithmic results for almost all the standard notions of fairness and economic efficiency. Our algorithms only require oracle access to the valuations. In particular, the developed algorithms operate under the standard Robertson-Webb model~\cite{robertson1998cake}, wherein we have access to the agents' valuations through \emph{eval} and \emph{cut} queries; see Section~\ref{section:notations} for details. MLRP implies that these cut and eval queries (functions) are $\lambda$-Lipschitz (Appendix~\ref{appendix:mlrp-lipschitz}). 

The time complexities of our algorithms depend polynomially on the the bit complexity of this Lipschitz constant $\lambda \geq 1$. Such a runtime dependency on $\log \lambda$ is unavoidable (Appendix~\ref{appendix:example-precision-loss}): there exist cake-division instances (with $\lambda$-Lipschitz cut and eval queries) wherein for all the agents the value of the cake is almost entirely concentrated in an interval $L$ of length ${1}/{\lambda}$. Here, an envy-free cake division can be obtained only by finely partitioning $L$ among the agents. In particular, the cut points that induce an envy-free allocation (and, hence, correspond to the output of a fair-division algorithm) must be $1/\lambda$ close to each other, i.e., the bit complexity of the output has to be ${\Omega} \left( \log \lambda \right)$. In fact, one can construct instances in which a contiguous envy-free division can be obtained only by cutting the cake at irrational points (Appendix~\ref{appendix:example-precision-loss}). Hence, in general, (and even under MLRP) one cannot expect an efficient algorithm that outputs an \emph{exact} envy-free division, with contiguous  pieces.\footnote{Indeed, the bit complexity of a computationally-bounded algorithm is bounded as well.} Therefore, when considering efficient algorithms for cake division, a precision loss in the output is inevitable. However, our algorithms ensure that this precision loss in value, $\eta$, is arbitrarily small; specifically, the developed algorithms run in time $\mathcal{O}\left( {\rm poly} \left(n, \log \lambda, \log \frac{1}{\eta} \right) \right)$ and, hence, the precision parameter $\eta$ can be driven exponentially close to zero in polynomial (in the bit complexity of $\eta$) time.  Note that this bit-precision issue is akin to the one faced in the convex-optimization problems (where again the optimal solutions can be irrational) and our runtime bound, with respect to the precision parameter $\eta$, is analogous to the one obtained by the ellipsoid method~\cite{grotschel2012geometric}. \\

\noindent
{\bf Our Results and Techniques:} Next, we summarize our results for various notions of fairness and (economic) efficiency.  \\

\noindent
\emph{Envy-Freeness:} We prove that, given a cake-division instance (in the Robertson-Webb query model) with MLRP, 
an envy-free allocation can be computed, up to an arbitrary precision, in time that is polynomial in the number of agents and $\log \lambda$; here $\lambda$ is the Lipschitz constant of the Robertson-Webb (cut and eval) queries. 

To establish this result, we define a class of divisions, referred to as \emph{ripple divisions} (Definition~\ref{defn:RD}), and prove that, under MLRP, every ripple division induces a contiguous envy-free cake division (Theorem~\ref{theorem:RD-EF}). Specifically, a collection of points $x_0 = 0 \leq x_1 \leq x_2 \leq x_{n-1} \leq x_n = 1$ (in the cake $[0,1]$) is said to form a {ripple division} of the cake if, for each $i \in [n-1]$, agent $i$ is indifferent between the consecutive intervals $[x_{i-1}, x_i]$ and $[x_i, x_{i+1}]$, i.e., $v_i(x_{i-1}, x_i) = v_i(x_i, x_{i+1})$. Note that a ripple division induces a contiguous cake division---by assigning interval $[x_{i-1}, x_i]$ to agent $i$---with the property that agent $i$ does not envy agent $i+1$. That is, in and of itself, a ripple division mandates absence of envy only between consecutive agents, and not between all pairs of agents. We will show that, interestingly, under MLRP this relaxation suffices--the cake division induced by a ripple division is  guaranteed to be envy free (Theorem~\ref{theorem:RD-EF}). 
Recall that the agents are indexed following the MLRP order: for each $i \in [n-1]$, the likelihood ratio $f_{i+1}/f_i$ is nondecreasing. Hence, allocating $[x_{i-1}, x_i]$ to agent $i \in [n]$ ensures that the intervals are assigned (left to right on the cake) in accordance with the MLRP order.

We establish the universal existence of ripple divisions through the intermediate value theorem, i.e., a one-dimensional fixed-point argument (Lemma~\ref{RDexistence}). Since one can use binary search to find fixed points in the one-dimensional setting, this proof in fact leads to an algorithm for finding ripple divisions and, hence, envy-free divisions. 
Indeed, the notion of ripple divisions and their connection with envy-freeness, under MLRP,  are two key contributions of this work.  \\

\noindent
\emph{Pareto Optimality:} We show that in cake-division instances with MLRP,  Pareto optimal cake divisions, with connected pieces, conform to the MLRP order (Lemma~\ref{theorem:POorder}). This structural result implies that for maximizing welfare we can restrict attention to allocations wherein the intervals are assigned (left to right on the cake) in accordance with the MLRP order. Intuitively, this leads us to a welfare-maximizing algorithm---specifically, a dynamic program---that recursively finds optimal allocations for intervals placed at the left end of the cake (i.e., for intervals of the form $[0,x]$).

We also establish an extension of Weller's theorem in the MLRP context. Weller's theorem~\cite{weller1985fair} asserts that there always exists some cake division---though, not necessarily with connected pieces---which is both envy-free (fair) and Pareto optimal. While this theorem holds in general,\footnote{Weller's theorem applies even in the absence of MLRP.} it does not guarantee that envy-freeness and Pareto optimality can be achieved together through contiguous cake divisions. We show that, by contrast, under MLRP \emph{every} contiguous envy-free division is Pareto optimal  (Theorem~\ref{theorem:ef-po}). Therefore, given a cake-division instance with MLRP, the allocation computed by our algorithm is not only envy-free but also Pareto optimal, up to an arbitrary precision. \\

\noindent
\emph{Social Welfare:} Social (utilitarian/Benthamite) welfare is a standard measure of collective value. For a cake division $\{I_1, I_2, \ldots, I_n\}$ it is defined to be the sum of the values that the division generates among the agents, $\sum_i v_i(I_i)$.  Maximizing social welfare is a well-studied objective in resource-allocation contexts. In the cake-cutting setup, this maximization problem is known to be {\rm APX}-hard under general valuations \cite{arunachaleswaran2019fair}. Complementarily, if the value densities bear MLRP, then we can find (up to an arbitrary precision) a social welfare maximizing division with connected pieces in $\mathcal{O} \left( {\rm poly } \left(n, \log \lambda \right) \right)$ time (Theorem~\ref{theorem:SocialWelfare}). As mentioned previously, our algorithm for this problem is based on a dynamic program. \\

\noindent
\emph{Egalitarian Welfare:} The egalitarian (Rawlsian) welfare of a cake division $\{I_1, \ldots, I_n\}$ is defined as the value of the least well-off agent, i.e., $\min_i \  v_i (I_i)$. From a welfarist perspective, maximizing this minimum value among cake divisions with connected pieces is an important fairness objective. However, no nontrivial approximation guarantees are known for this problem under general valuations; the work of Aumann et al. \cite{aumann2013computing}  shows that maximizing egalitarian welfare across all contiguous cake divisions is {\rm APX}-hard. Complementing this hardness result, we develop an algorithm that, under MLRP, maximizes egalitarian welfare (up to an arbitrary precision) and runs in $\mathcal{O} \left( {\rm poly } \left(n, \log \lambda \right) \right)$ time (Theorem~\ref{theorem:Maxmin}).  

Our algorithm for maximizing egalitarian welfare is based on a ``moving-knife'' procedure. This procedure, for a given a target value $\tau >0$, iteratively selects points $x_0 = 0, x_1, x_2, \ldots, x_n \leq 1$ such that the each interval $[x_{i-1}, x_i]$ is of value $\tau$ to agent $i \in [n]$. Let $\tau^*$ denote the optimal egalitarian welfare in the given cake-division instance. The useful observation here is that this moving-knife procedure will succeed for all $\tau \leq \tau^*$. This follows from the fact that here the intervals are assigned (left to right) in the MLRP order\footnote{Note that the agents are indexed accordingly.} and this ordering is also satisfied by an egalitarian welfare maximizing (in particular, a Pareto optimal) division. Therefore, by performing a binary search with $\tau$, we can find a contiguous division with egalitarian welfare arbitrarily close to the optimal.  \\

\noindent
\emph{Nash Social Welfare:} A balance between social and egalitarian welfare is obtained by considering the Nash social welfare \cite{nash1950bargaining, kaneko1979nash}. This welfare objective is defined as the geometric mean of the agents' values. It is known that, in general, it is {\rm APX}-hard to find a contiguous cake division that maximizes Nash social welfare \cite{arunachaleswaran2019fair}. Under MLRP, however, the problem of maximizing Nash social welfare admits a fully polynomial-time approximation scheme (Theorem~\ref{theorem:NSW}). We obtain this result via a dynamic program that considers the agents in the MLRP order. \\

\noindent
{\bf Additional Related Work:} Recently, approximation algorithms---with both additive \cite{hollender2019contiguous} and multiplicative \cite{arunachaleswaran2019fair} approximation guarantees---have been developed for finding contiguous envy-free cake divisions. The work of Brânzei and Nisan \cite{branzei2017query} develops query complexity upper and lower bounds for computing approximately envy-free allocations.  In contrast to these results, the current work focuses on cake-division instances with MLRP and shows that in such settings arbitrarily low envy can be achieved among the agents.  

The work of Bei et al. \cite{bei2012optimal} also studies contiguous cake division and provides computational results for maximizing social welfare subject to proportional fairness. Under this fairness constraint each agent $i \in [n]$ must receive an interval of value at least $1/n$ times $i$'s total value for the cake. Bei et al. \cite{bei2012optimal} show that, if the value densities are linear, then this problem admits a fully polynomial-time approximation scheme (FPTAS). We note that every pair of linear densities bear MLRP and, hence, such value-density functions fall within the purview of the current work. However, our algorithm is incomparable to the FPTAS of Bei et al. \cite{bei2012optimal}--we focus on maximizing social welfare without the fairness constraints.  Also, envy-freeness is not addressed in \cite{bei2012optimal}. 

Another well-studied fairness notion is that of {equitability.} Specifically, a cake division $\{I_1, I_2, \ldots, I_n \}$ is said to be \emph{equitable} iff all the agents derive the same value from the intervals assigned to them, $v_i(I_i) = v_j(I_j)$ for all $i$ and $j$ \cite{dubins1961cut, alon1987splitting}. In other words, equitability ensures that all the agents are equally well-off. Cechl{\'a}rov{\'a} and Pill{\'a}rov{\'a} \cite{cechlarova2012computability} consider the computation of equitable cake divisions with connected pieces. They showed that---given access to ``reverse'' cutting queries---such divisions can be efficiently computed, up to an arbitrary precision. We note that value densities that satisfy MLRP have, by definition, full support over the cake. In such a case, the reverse cutting queries can be simulated by standard (cut) queries in the Robertson-Webb model. Hence, under MLRP, strong algorithmic results hold for equitability as well. 

Cake-division algorithms for specific classes of valuations have been studied in \cite{cohler2011optimal} and \cite{kurokawa2013cut}. The work of Kurokawa et al.~\cite{kurokawa2013cut} provides a query-efficient algorithm for envy-free, noncontiguous cake division under piecewise linear densities. Cohler et al.~\cite{cohler2011optimal} also address the noncontiguous version of the problem, and for piecewise constant densities they develop a polynomial-time algorithm that computes an envy-free division with optimal social welfare. In contrast to these results our focus is on contiguous cake division. 

\section{Notation and Preliminaries}
\label{section:notations}

This work studies the problem of dividing a cake $[0,1]$ among $n$ agents. Throughout, we will focus on a well-studied formulation of cake cutting which requires that each agent should receive a contiguous piece of the cake, i.e., the goal is to partition the cake $[0,1]$ into $n$ pairwise disjoint intervals and assign them among the $n$ agents in a fair/efficient manner.  

The cardinal preferences of each agent $i \in [n]$ is induced by a value-density function $f_i : [0,1] \mapsto \mathbb{R}_+$. Following standard conventions, we will assume that each value-density function $f_i$ is (Riemann) integrable. In particular, the (finite) integral of $f_i$ induces agent $i$'s valuation function over the intervals contained in $[0,1]$ (i.e., over the pieces of the cake): $v_i(I) \coloneqq \int_\ell^r  f_i(x) \ {d}x$ denotes the value that agent $i \in [n]$ has for any interval $I =[\ell, r] \subset [0,1]$. For notational convenience, we will write $v_i(a, b)$ to denote agent $i$'s value for interval $[a,b] \subseteq [0,1]$.

The integrability\footnote{Our results hold for integrable value densities and do not necessarily require $f_i$s to be continuous. Recall that, by definition, Riemann integrable functions are bounded. Also, every (bounded) continuous function on an (bounded) interval is Riemann integrable, but the converse is not true.} and nonnegativity of value-densities $f_i$ imply that the corresponding valuations $v_i$ are (i) nonnegative, (ii) divisible: for every interval $[\ell,r]$ and parameter $ \kappa \in [0,1]$, there exists a $z \in [\ell,r]$ with the property that $v_i(\ell,z) = \kappa \  v_i(\ell,r)$,\footnote{This implication can be obtained by applying the intermediate value theorem to the antiderivative of $f_i$.} and (iii) sigma additive: $v_i(I \cup J) = v_i(I) + v_i(J)$, for all disjoint intervals $I, J \subset [0,1]$. The divisibility property ensures that the valuations are non-atomic, i.e., $v_i([x,x]) = 0$ for all $i \in [n]$ and $x \in [0,1]$. Furthermore, this property allows us, as a convention, to regard two intervals to be disjoint even if they intersect exactly at an endpoint.

We additionally assume that the valuations are normalized such that the value of the entire cake is equal to one for every agent $i \in [n]$, i.e., $\int_0^1 f_i(x)dx = v_i(0,1) =1$. Hence, the value-densities $f_i$s constitute probability density functions over the cake $[0,1]$. {We note that this is a standard assumption in the cake-cutting framework and we conform to it for the purpose of brevity. All of our results hold true, even otherwise.} \\


  
\noindent
{\bf Problem Instances:} A \emph{cake-division instance} $\Cake$ is a tuple $\langle [n], \{f_i \}_{i \in [n] } \rangle$ where $[n]=\{1,2, \ldots, n\}$ denotes the set of $n \in \mathbb{Z}_+$ agents and $f_i$s denote the value-density functions of the agents. We will use the notation $(f_i)_{i \in [n]}$ to denote an ordered set of value-density functions of $n$ agents.\\ 

\noindent
{\bf Robertson-Webb Query Model:} While, for exposition, we specify $f_i$s as part of the problem instance, our algorithms only require oracle access to the valuations. In particular, the developed algorithms operate under the Robertson-Webb model~\cite{robertson1998cake}, which supports oracle access to agents' valuations in the form of \emph{eval} and \emph{cut} queries: \\
\noindent 
(i) \emph{Eval queries:} for each agent $i \in [n]$, we have (blackbox) access to function $\Eval_i: [0,1] \times [0,1] \mapsto \mathbb{R}_+$, which  when queried with any interval $[\ell, r]$ returns (in unit time) the value that agent $i$ has for this interval, i.e., $ \Eval_i (\ell, r) = v_i (\ell, r)$. \\
\noindent
(ii) \emph{Cut queries:} for each agent $i \in [n]$, we can also query function $\Cut_i: [0,1] \times \mathbb{R}_+ \rightarrow [0,1]$, which given an initial point $\ell \in [0,1]$ and a target value $\tau \in \mathbb{R}_+$, returns $\Cut_i(\ell, \tau) = y$ where $y \in [0,1]$ is the leftmost point with the property that $v_i(\ell, y) = \tau$. {If for a given $\ell \in [0,1]$ and $\tau \in \mathbb{R}_+$, there does not exist a $y \in [\ell,1]$ such that $v_i(\ell, y) = \tau$, then we have, by convention, $\Cut_i(\ell, \tau) = 1$.}  \\

\noindent
{\bf Allocations and Cake Divisions:} 
As mentioned previously, the goal is to assign each agent a single interval. Towards this end, for any cake-division instance with $n$ agents, we define an  \emph{allocation} to be a collection of $n$ pairwise-disjoint intervals, $\mathcal{I} = \{I_1, I_2, \ldots, I_n \}$, where interval $I_i$ is assigned to agent $i \in [n]$ and $\cup_{i \in [n]} \ I_i = [0,1]$. {Note that here the subscript of each interval identifies unique agent who has been assigned this interval.} In addition, we will refer to a collection of pairwise-disjoint intervals $\mathcal{J} = \{J_1, J_2, \ldots, J_n\}$ as a \emph{partial allocation} if they do not cover the entire cake, $\cup_{i=1}^n J_i  \subsetneq [0,1]$.

For an allocation $\mathcal{I}=\{I_1, \ldots, I_n\}$, the endpoints of the constituent intervals will be referred to as the \emph{cut points} of $\mathcal{I}$, i.e., if $I_i = [x_{i-1}, x_i]$ for $ 1 \leq i \leq n$, then the cut-points are $\{x_0 = 0, x_1, \ldots, x_n = 1\}$. 

We will throughout use the term allocation to specifically refer to partitions of the cake in which each agent receives a connected piece, i.e., receives exactly one interval. More generally, a \emph{cake division} will be used to denote partitions of the cake $\mathcal{D}=\{D_1, D_2, \ldots, D_n\}$ in which agent $i$ receives $D_i$, a finite collection of intervals. Here, the bundles $D_i$s are pairwise disjoint and their union covers the entire cake $[0,1]$. \\

\noindent
In this work we develop algorithmic results for the following notions of fairness and economic efficiency. 

\noindent
{\bf Envy-Freeness:} For a cake-division instance $\Cake$, an allocation $\mathcal{I} = \{I_1, \ldots, I_n \}$ is said to be \emph{envy-free} iif each agent prefers its own interval over that of any other agent, $v_i(I_i) \geq v_i(I_j)$ for all agents $i, j \in [n]$. \\

\noindent
{\bf Pareto Optimality:} Given a cake-division instance $\Cake$, a division $\mathcal{D}=\{D_1, \ldots, D_n \}$ is said to \emph{Pareto dominate} another division $\mathcal{C}=\{C_1, \ldots, C_n \}$ iff $v_i(D_i) \geq v_i(C_i)$ for all agents $i \in [n]$ and, there exists at least one agent $k \in [n]$ such that $v_k(D_k) > v_k(C_k)$. Consequently, a cake division is said to be \emph{Pareto optimal} iff it is not Pareto dominated by any other division. 

Recall that a cake division refers to a partition of the cake in which agent receives a finite collection of intervals. By contrast, in an allocation each agent receives a single interval. The algorithms developed in this work compute allocations. Interestingly, though, the Pareto optimality guarantees achieved by our algorithms are stronger in the sense that optimality holds across all cake divisions; specifically, under MLRP, we establish that particular allocations are Pareto optimal not only among the set of all allocations but also among all cake divisions. \\ 

\noindent 
{\bf Social Welfare:} Social welfare is a standard measure of collective value. Specifically, \emph{social welfare} for an allocation $\mathcal{I} = \{I_1, \ldots, I_n\}$ is defined to be sum of the agents' valuations, $\SW(\mathcal{I}) \coloneqq \sum_{i=1}^n v_i(I_i)$. \\

\noindent
{\bf Egalitarian (Rawlsian) Welfare:} For an allocation $\mathcal{I} = \{I_1, \ldots, I_n\}$, the \emph{egalitarian welfare} is defined as the minimum value achieved across the agents, $\RW(\mathcal{I}) \coloneqq \min_{i \in [n]} v_i(I_i)$. \\ 


\noindent
{\bf Nash Social Welfare:} For an allocation $\mathcal{I}=\{I_1, \ldots, I_n\}$, the \emph{Nash social welfare} is defined to be the geometric mean of the agents' valuations, $\NSW(\mathcal{I}) \coloneqq \big( \prod_{i=1}^n v_i(I_i) \big)^{1/n}$. \\ 

Finding allocations that maximize the above-mentioned welfare notions is known to be {\rm APX}-hard, in general; see, e.g.,~\cite{aumann2013computing, arunachaleswaran2019fair}. Complementing these negative results, a key contribution of this work is to identify a broad class of cake-division instances that admit strong algorithmic results for these welfare objectives and envy-freeness. Specifically, we focus on value densities (distributions) that satisfy the {monotone likelihood ratio property} (MLRP). We will next define this property and note that our results hold for multiple distribution families that satisfy MLRP. \\






\noindent
{\bf Monotone Likelihood Ratio Property:} 
Probability density functions $f_i$ and $f_j$ (in order) are said to satisfy the \emph{monotone likelihood ratio property} (MLRP) iff, for every $x \leq y$ in the domain, we have $\nicefrac{f_j(x)}{f_i(x)}  \leq \nicefrac{f_j(y)}{f_i(y)}$.
That is, the {likelihood ratio} $\nicefrac{f_j(x)}{f_i(x)}$ is non-decreasing in the argument $x \in \mathbb{R}$. 


Note that MLRP does not require $f_i$ and $f_j$ to be monotonic (or unimodal) by themselves. We also observe that MLRP is transitive: if two pairs of distributions $(f_i, f_j)$ and $(f_j, f_k)$ satisfy MLRP separately, then the pair $(f_i, f_k)$ also conforms to MLRP. Furthermore, this property continues to hold under positive scaling: if $f_i$ and $f_j$ satisfy MLRP, then so do $\gamma_i f_i$ and $\gamma_j f_j$, for any positive scalars $\gamma_i, \gamma_j \in \mathbb{R}_+$. This fact, in particular, allows us to restrict MLRP densities (which are typically defined over the real line) to the cake $[0,1]$ and, at the same time, assume normalization $\int_{0}^1 f_i = 1$. 

In the cake-division context, we will say that a given collection $\{f_i\}_{i \in [n]}$ of value-density functions satisfies the \emph{monotone likelihood ratio property} iff there exists an order $\pi: [n] \rightarrow [n]$, among the $f_i$s, such that, for all $i \in [n-1]$, the consecutive likelihood ratios $\frac{f_{\pi (i+1)} \ (x)}{f_{\pi(i)} \ (x)}$ are non-decreasing in $x \in [0,1]$. That is, for each $i \in [n-1]$, the densities $f_{\pi(i)}$ and $f_{\pi(i+1)}$ bear MLRP over $[0,1]$. 

We will refer to this order $\pi$ as the {MLRP order} of the value densities. Lemma~\ref{MLRPorder} (in Appendix~\ref{appendix:find-mlrp-order}) shows that, given a cake-division instance in the Robertson-Webb query model, with the promise that the underlying value densities satisfy MLRP (i.e., given a promise problem), one can efficiently find the MLRP order $\pi$. Hence, without loss of generality, we will throughout assume that the $n$ agents are indexed such that $\pi$ is the identity permutation, i.e., for all $i \in [n-1]$, the likelihood ratio $\frac{f_{i+1} (x)}{f_i(x)}$ is non-decreasing in $x \in [0,1]$. 

It is relevant to note that, to be well defined, MLRP requires the value densities $f_i$s to be strictly positive over the cake $[0,1]$. Hence, for cake-division instances $\langle [n], \{f_i\}_{i \in [n]} \rangle$ with MLRP, we have $f_i(x) >0$ for all $i \in [n]$ and $x \in [0,1]$. \\

\noindent
{\bf Instantiations of MLRP:} 
MLRP induces a total order on linear value densities $f_i(x) = a_i x + b_i$; see Appendix~\ref{appendix:mlrp-use-cases} for details. Hence, our results imply that if, in a cake-division instance, the value densities of all the agents are linear, then an envy-free (or welfare-maximizing) allocation can be computed efficiently. 

Many other distribution families are also known to bear MLRP, e.g., Gaussian distributions (with the same variance), Poisson distributions, and single-parameter exponentials. Therefore, our algorithmic results address, in particular, cake-division instances wherein all the agents have Gaussian value densities with the same variance, but different means. 





In fact, it is known that linear translations of any log-concave function $g$---i.e., densities of the form $f_{\theta}(x) \coloneqq g(x-\theta)$, for $\theta \in \mathbb{R}$---satisfy MLRP~\cite{saumard2014log}. Hence, our results also hold for linear translations of the following (log-concave) distributions: Laplace, uniform, multivariate Gaussian, gamma, beta, Subbotin, chi-square, Dirichlet, and logistic. 


These instantiations substantiate the applicability of our algorithmic results which, through MLRP, address a wide range of cake-division instances. \\

\noindent
{\bf Lipschitz Constant of {Cut} and {Eval} Queries:} We say that the cut and eval queries in a cake-division instance are $\lambda$-Lipschitz iff the following inequalities hold for each agent $i \in [n]$:
\begin{align*}
|\Eval_i(\ell',r') - \Eval_i(\ell,r)|  & \leq \lambda \ \| (\ell',r') - (\ell,r) \|_{\infty}  \ \ \quad \text{for all} \ \ (\ell',r'), (\ell,r) \in [0,1] \times [0,1]\\
|\Cut_i(\ell',\tau') - \Cut_i(\ell,\tau)| & \leq \lambda \  \| (\ell',\tau') - (\ell,\tau) \|_{\infty}  \ \ \quad \text{for all} \ \ (\ell',\tau'), (\ell,\tau) \in [0,1] \times \mathbb{R}_+
\end{align*}

A useful consequence of MLRP (and the integrability of the value densities) is that the corresponding {cut} and {eval} queries are in fact $\lambda$-Lipschitz, for a finite $\lambda \geq 1$;\footnote{If a function is $\lambda'$-Lipschitz then, it is $\lambda$-Lipschitz as well, for all $\lambda \geq \lambda'$.} see Proposition~\ref{Lip}. This proposition follows from the fact that (Riemann) integrable value densities $f_i$s are, by definition, bounded. Furthermore, as mentioned previously, MLRP mandates that the value densities are strictly positive over the cake. Therefore, for each agent $i \in [n]$ and $x \in [0,1]$ we have $0 < L \leq f_i(x) \leq U$ for some $L, U \in \mathbb{R}_+$. Proposition~\ref{Lip} asserts that the Lipschitz constant $\lambda$ can be expressed in terms of these bounding parameters, $\lambda \leq \max\{1/L, U, U/L\}$.  



It is worth pointing out that besides MLRP (and, hence, the positivity of the value densities), all the other assumptions made in this work are standard. 

The time complexities of our algorithms depend polynomially on the the bit complexity of the Lipschitz constant $\lambda \geq 1$.\footnote{In the case of linear value densities, $f_i(x) = a_i x + b_i$, the bit complexity of the Lipschitz constant $\lambda$ is proportional to the bit complexity of the coefficients $a_i$s and $b_i$s (Proposition~\ref{Lip}). Hence, if linear densities are explicitly given as input, then we have a polynomial (in the input size) runtime bound.} As mentioned previously, in general, such a runtime dependency on $\log \lambda$ is unavoidable; Appendix~\ref{appendix:example-precision-loss} provides an illustrative examples. Furthermore, it is possible---even with rational and MLRP value densities---that the exact envy-free/welfare-maximizing allocations are induced by irrational cuts (Appendix~\ref{appendix:example-precision-loss}). That is, in general, one cannot expect an efficient algorithm that outputs an {exact} envy-free (or welfare-maximizing) allocation. Therefore, when considering efficient algorithms for cake division, a precision loss is inevitable. However, our algorithms ensure that this loss is arbitrarily small. Specifically, given a cake-division instance $\Cake$ in which the value densities satisfy MLRP, we can find, in time that is polynomial in $\log (1/\eta)$ (along with $n$ and $\log \lambda$), an envy-free allocation $\mathcal{I} = \{I_1,\ldots, I_n \}$ such that, $v_i(I_i) \geq v_i(I_j) - \eta$ for all agents $i, j \in [n]$. Since the precision parameter $\eta$ can be driven exponentially close to zero in polynomial (in the bit complexity of $\eta$) time, we will say that an envy-free allocation can be computed {up to an arbitrary precision}. 


Similarly, in the context of maximizing welfare (social or egalitarian) welfare, given a cake-division instance wherein the value densities bear MLRP, we can find---in time that is polynomial in $ \log(1/\eta)$---an allocation with the (social or egalitarian) welfare $\eta$ (additively) close to the optimal. Hence, as in the case of envy-freeness, we assert that a welfare-maximizing allocation can be computed efficiently, up to an arbitrary precision.

\section{Main Results}
\label{section:MainResults}
This section presents the statements of our key results.  \\

\noindent
\textbf{Envy-Freeness:} In Section~\ref{EFdivisions} we prove that for cake-division instances, in which the value densities satisfy MLRP, the problem of finding an envy-free allocation {essentially admits a polynomial-time algorithm}. 



 \begin{restatable}{theorem}{EFdivision}
	\label{theorem:EFdivision}
	Let $\Cake = \langle [n], (f_i )_{i \in [n] } \rangle $ be a cake-division instance in which the value-density functions satisfy the monotone likelihood ratio property. Then, in the Robertson-Webb query model, an envy-free allocation of $\Cake$ can be computed (up to an arbitrary precision) in $\mathcal{O}\left( {\rm poly} ( n, \log \lambda ) \right)$ time; here $\lambda \in \mathbb{R}_+$ is the Lipschitz constant of the cut and eval queries.
\end{restatable}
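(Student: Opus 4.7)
The plan is to establish Theorem~\ref{theorem:EFdivision} through the notion of \emph{ripple divisions} previewed in the introduction. A ripple division is a sequence of cut points $0 = x_0 \leq x_1 \leq \cdots \leq x_n = 1$ with the local indifference property $v_i(x_{i-1}, x_i) = v_i(x_i, x_{i+1})$ for every $i \in [n-1]$; the induced allocation assigns $[x_{i-1}, x_i]$ to agent $i$ in MLRP order. The argument I would carry out has three stages: (i) every ripple division is envy-free under MLRP, (ii) an (approximate) ripple division can be located by a one-dimensional search, and (iii) this search runs in $\mathcal{O}\!\left(\mathrm{poly}(n, \log \lambda, \log (1/\eta))\right)$ time under the Robertson-Webb model.

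For stage (i), the indifference conditions directly preclude envy between consecutive agents in one direction: agent $i$ is equally happy with her piece as with $[x_i, x_{i+1}]$. I would then use MLRP to propagate the absence of envy to arbitrary pairs. The key structural fact is that since $f_{i+1}/f_i$ is nondecreasing, the measure $f_{i+1}$ places more mass (relative to $f_i$) on the right, so any interval inequality valid for $f_i$ moving rightward continues to hold for $f_{i+1}$. Iterating this across the chain---and using the transitivity of MLRP recorded in Section~\ref{section:notations}---extends the local indifference into a global domination of each agent's own interval over every other agent's interval. This is precisely the content of Theorem~\ref{theorem:RD-EF}, and it is the conceptual heart of the proof.

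For stage (ii), I would parametrize a sweep procedure by the first cut $x_1 \in [0,1]$: given $x_1$, define $x_2 := \Cut_1(x_1, v_1(0, x_1))$, then $x_3 := \Cut_2(x_2, v_2(x_1, x_2))$, and so on through $x_n(x_1)$, using only cut and eval queries. A ripple division corresponds to a value $x_1^\star$ with $x_n(x_1^\star) = 1$. The map $x_1 \mapsto x_n(x_1)$ is a composition of Lipschitz cut and eval queries, hence continuous and monotone in $x_1$; the intermediate value theorem supplies the existence of $x_1^\star$ (Lemma~\ref{RDexistence}), and monotonicity allows binary search. For stage (iii), because each step of the sweep is $\lambda$-Lipschitz, errors in $x_1$ are amplified by at most a factor $\lambda^{\mathcal{O}(n)}$ after $n$ iterations; to land within $\eta$ of an exact ripple cut at the right endpoint---which by another round of Lipschitz accounting is enough to guarantee envy at most $\eta$---the binary search must resolve $x_1^\star$ to accuracy $\exp(-\mathcal{O}(n \log \lambda + \log(1/\eta)))$. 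This requires only $\mathcal{O}(n \log \lambda + \log(1/\eta))$ bisection steps, each invoking $n$ Robertson-Webb queries, yielding the claimed polynomial runtime.

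The main obstacle is stage (i): extending the purely local indifference conditions to the \emph{global} envy-free guarantee. The care needed is that envy must be ruled out in both directions (agent $i$ versus some $j > i$ and versus some $j < i$), and the MLRP inequality has to be applied to intervals that are not themselves adjacent. I would handle this by an inductive comparison along the chain, rewriting differences like $v_i(I_j) - v_i(I_i)$ as telescoping sums of consecutive differences and controlling each term via the MLRP inequality applied to the relevant pair of intervals. Stages (ii) and (iii) are then a careful, but essentially routine, Lipschitz error-propagation analysis layered on top of a one-dimensional binary search.
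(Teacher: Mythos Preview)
Your three-stage plan matches the paper's proof structure (Theorem~\ref{theorem:RD-EF}, Lemma~\ref{RDexistence}, Lemma~\ref{RDcomputation}) almost exactly, including the Lipschitz amplification bound $\lambda^{\mathcal{O}(n)}$ (Proposition~\ref{RDn}) and the device of gluing the leftover segment $[x_n,1]$ onto agent $n$'s piece to pass from an $(\eta/\lambda)$-ripple division to an allocation with envy at most~$\eta$.

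The one substantive gap is your claim that $x_1 \mapsto x_n(x_1)$ is \emph{monotone}, which you invoke to justify binary search. Being a composition of Lipschitz cut and eval queries gives continuity, not monotonicity, and the paper neither proves nor uses monotonicity of $\RD_n$; a footnote in the analysis of \textsc{BinSearch} explicitly records that the argument does not require it. Instead, the paper's bisection maintains the sign invariant $\RD_n(\ell) < 1-\delta$ and $\RD_n(r) = 1$: at each step the midpoint either lands in $[1-\delta,1)$ (done), or its $\RD_n$ value determines which endpoint to replace while the intermediate value theorem re-certifies that a solution remains in the shrunken interval. Termination then follows from Lipschitz continuity alone---once $r-\ell \leq \delta/\lambda^{2(n-1)}$ the midpoint is forced into the target window. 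Your stages~(ii) and~(iii) go through with this correction; the justification for binary search is continuity plus the IVT-preserved invariant, not monotonicity.
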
 

Recall that, under MLRP, the cut and eval queries are necessarily $\lambda$-Lipschitz (Proposition~\ref{Lip}). \\

\noindent
\textbf{Pareto Optimality:} Weller's theorem~\cite{weller1985fair} asserts that there always exists some cake division (though, not necessarily with connected pieces) which is both envy-free and Pareto optimal (among all cake divisions, with or without connected pieces). We show that, in the context of MLRP,  \emph{every} envy-free allocation is in fact Pareto optimal (among all cake divisions). Therefore, for cake-division instances with MLRP, the allocation computed by our algorithm is not only envy-free (fair) but also Pareto optimal, up to an arbitrary precision. 



 \begin{restatable}{theorem}{EFPO} 
	\label{theorem:ef-po}
		Let $\Cake$ be a cake-division instance wherein the value-density functions satisfy the monotone likelihood ratio property. Then, every envy-free allocation in $\Cake$ is also Pareto optimal (over the set of all cake divisions).
\end{restatable}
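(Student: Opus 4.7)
Assume, for contradiction, that an envy-free allocation $\mathcal{I}=\{I_1,\ldots,I_n\}$ is Pareto dominated by some cake division $\mathcal{D}$. The plan is to funnel this dominator into a very restricted form and then derive an impossibility by tracking cut points.

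First I reduce to a well-structured dominator. A standard maximality argument over the space of cake divisions lets me replace $\mathcal{D}$ by a Pareto-optimal $\mathcal{D}^*$ that still Pareto dominates $\mathcal{I}$. Lemma~\ref{theorem:POorder} then forces $\mathcal{D}^*$ to be a contiguous allocation whose intervals appear left-to-right in the MLRP order, so I can write $\mathcal{D}^*=\{J_1,\ldots,J_n\}$ with $J_i=[y_{i-1},y_i]$ and $0=y_0\leq y_1\leq\cdots\leq y_n=1$.

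Next I argue that $\mathcal{I}$ may be taken to be in MLRP order too. If $\mathcal{I}$ has an adjacent inversion---agents $i<k$ (in the MLRP order) whose intervals are adjacent on the cake with $I_k$ immediately to the left of $I_i$---then envy-freeness gives $v_i(I_i)\geq v_i(I_k)$ and $v_k(I_k)\geq v_k(I_i)$. Since MLRP forces strict positivity of the densities, all four values are positive, and multiplying yields $v_i(I_i)\,v_k(I_k)\geq v_i(I_k)\,v_k(I_i)$. The MLRP ratio inequality, integrated over $A=I_k$ (left) and $B=I_i$ (right), delivers the opposite inequality $v_i(I_i)\,v_k(I_k)\leq v_i(I_k)\,v_k(I_i)$. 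Equality is therefore forced throughout, so swapping the intervals assigned to agents $i$ and $k$ preserves every agent's value (and hence envy-freeness and Pareto dominance by $\mathcal{D}^*$) while strictly decreasing the number of MLRP inversions. Iterating produces an MLRP-ordered envy-free allocation still dominated by $\mathcal{D}^*$; I relabel it as $\mathcal{I}$ with $I_i=[x_{i-1},x_i]$ and $0=x_0\leq x_1\leq\cdots\leq x_n=1$.

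Finally I compare the two sets of cut points. Let $i^*$ be the smallest index with $y_{i^*}\neq x_{i^*}$, so $y_j=x_j$ for $j<i^*$. If $y_{i^*}<x_{i^*}$ then $J_{i^*}$ is a strict subinterval of $I_{i^*}$ and strict positivity of $f_{i^*}$ yields $v_{i^*}(J_{i^*})<v_{i^*}(I_{i^*})$, contradicting Pareto dominance. If $y_{i^*}>x_{i^*}$, I claim inductively that $y_j>x_j$ for every $j\geq i^*$: given $y_{j-1}>x_{j-1}$, the alternative $y_j\leq x_j$ makes $J_j=[y_{j-1},y_j]\subsetneq[x_{j-1},x_j]=I_j$, and positivity again forces $v_j(J_j)<v_j(I_j)$, contradicting Pareto dominance at agent $j$. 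Propagating to $j=n$ gives $y_n>x_n$, impossible since both equal~$1$. The main obstacle is the reordering step: one must verify that combining envy-freeness with the integrated MLRP inequality genuinely forces tight equality on both envy-free bounds, that the ensuing swap preserves envy-freeness and Pareto dominance simultaneously, and that the inversion count decreases so the process terminates. Once both allocations share the MLRP ordering, the concluding cut-point propagation is a short consequence of strict positivity of MLRP densities.
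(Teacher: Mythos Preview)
Your proof is correct, but it takes a more elaborate route than the paper. The paper also begins by assuming a dominating cake division $\mathcal{D}$ and invokes Lemma~\ref{theorem:POorder} to obtain a dominating \emph{allocation} $\mathcal{J}$. However, it never reorders $\mathcal{I}$ into the MLRP order; instead it compares $\mathcal{I}$ and $\mathcal{J}$ directly via two short cases. If $\mathcal{I}$ and $\mathcal{J}$ share the same cut points, then $\mathcal{J}$ is a reassignment of $\mathcal{I}$'s pieces, and envy-freeness immediately gives $v_i(I_i)\geq v_i(J_i)$ for every $i$, contradicting domination. If the cut points differ, some interval $J_s$ must be strictly contained in some interval $I_t$, and envy-freeness together with full support yield $v_s(I_s)\geq v_s(I_t)>v_s(J_s)$, again a contradiction.

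The key economy is that envy-freeness already bounds agent $s$'s valuation of \emph{every} piece of $\mathcal{I}$, so no alignment of $\mathcal{I}$ with the MLRP order is needed. Your reordering step---adjacent swaps forced to be value-preserving by combining the envy-free inequalities with the integrated MLRP ratio inequality---and the subsequent cut-point induction are sound, but they replace a two-line case split by a multi-stage reduction. The paper's argument also dispenses with the maximality/Zorn step for extracting a Pareto-optimal dominator, since the first part of Lemma~\ref{theorem:POorder} applied directly to $\mathcal{D}$ already produces a dominating contiguous allocation. What your approach does buy is a reusable structural fact: under MLRP, any envy-free allocation can be put into MLRP order without changing any agent's value.
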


	

\noindent
\textbf{Social Welfare:} In Section \ref{section:social welfare} we show that, up to an arbitrary precision, a social welfare maximizing allocation can be computed efficiency under MLRP.

\begin{restatable}{theorem}{SocialWelfare}
	\label{theorem:SocialWelfare}
Let $\Cake =\langle [n], (f_i)_{i \in [n]} \rangle$ be a cake-division instance in which the value-density functions satisfy the monotone likelihood ratio property. Then, in the Robertson-Webb query model, an allocation that achieves the optimal social welfare in $\Cake$ can be computed (up to an arbitrary precision) in $\mathcal{O}\left( {\rm poly} ( n, \log \lambda ) \right)$ time; here $\lambda \in \mathbb{R}_+$ is the Lipschitz constant of the cut and eval queries.
\end{restatable}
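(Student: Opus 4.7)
The plan is to combine the MLRP-order structural result with a natural dynamic program over prefixes of the cake. Since every social-welfare-maximizing allocation is in particular Pareto optimal, Lemma~\ref{theorem:POorder} lets us restrict attention to allocations of the form $0 = x_0 \leq x_1 \leq \cdots \leq x_n = 1$ in which agent $i$ receives the interval $[x_{i-1}, x_i]$. The optimization thereby collapses from a search over all partitions into $n$ connected pieces into a choice of just the $n-1$ interior cut points.

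I would then define the prefix-value function
\[
W_k(y) \;\coloneqq\; \max_{0 = z_0 \leq z_1 \leq \cdots \leq z_k = y} \ \sum_{i=1}^{k} v_i(z_{i-1}, z_i),
\]
so that the optimum social welfare equals $W_n(1)$, and use the recurrence $W_k(y) = \max_{z \in [0,y]} \{W_{k-1}(z) + v_k(z, y)\}$ with base $W_0(0) = 0$. Each value $v_k(z,y)$ is obtained through a single $\Eval_k$ query, and the optimal cut points can be recovered by storing the $\argmax$ in every cell.

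To execute this recursion in the Robertson-Webb model, I would discretize $[0,1]$ into a uniform grid $\mathsf{G}$ of spacing $\delta$ and restrict every $y$ and $z$ appearing in the DP to $\mathsf{G}$. The $\lambda$-Lipschitz property of cut and eval queries (Proposition~\ref{Lip}) guarantees that snapping each cut point to its nearest grid value perturbs every agent's value by at most $\lambda\delta$; summed over the $n$ agents, the social-welfare loss is at most $n\lambda\delta$. Choosing $\delta$ polynomially small in $\eta/(n\lambda)$ therefore delivers an $\eta$-approximate optimum, and correctness of the DP then follows from a standard optimal-substructure argument combined with the MLRP-order reduction.

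The main obstacle is bringing the runtime down to $\mathcal{O}(\mathrm{poly}(n, \log \lambda))$ (with implicit polynomial dependence on $\log(1/\eta)$) rather than the pseudo-polynomial $\mathcal{O}(\mathrm{poly}(n, \lambda/\eta))$ that a direct grid DP yields. I expect to exploit MLRP inside each DP transition: because the likelihood ratio $f_k/f_{k-1}$ is monotone, the marginal gain $\partial/\partial z\bigl[W_{k-1}(z) + v_k(z, y)\bigr] = f_{k-1}(z) - f_k(z) + \text{(contribution from the optimal split to the left of $z$)}$ changes sign in a structured way in $z$, so the maximizer in each cell can be located by an $\mathcal{O}(\log(n\lambda/\eta))$-step binary/ternary search implemented via $\Cut$ queries rather than by enumerating the whole grid. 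Propagating this refinement through the $\mathcal{O}(n)$ DP layers, with $W_k$ values stored at $\mathcal{O}(\log(\lambda/\eta))$ bits of precision (analogous to the ellipsoid method's dependence on the precision parameter), should yield the claimed complexity bound.
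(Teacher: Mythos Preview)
Your reduction to MLRP-ordered allocations and the prefix dynamic program are correct, and the paper does exactly the same up to that point. The paper also explicitly remarks that the uniform-grid discretization you describe yields only an FPTAS with runtime polynomial in $1/\eta$ rather than in $\log(1/\eta)$, so you have correctly identified where the real work lies.

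The gap is in your last paragraph. The claim that the transition objective $z \mapsto W_{k-1}(z) + v_k(z,y)$ is amenable to binary or ternary search is asserted but not proved. Your derivative expression is confused: by the envelope theorem, at smooth points $W_{k-1}'(z) = f_j(z)$ where $j \leq k-1$ is the agent receiving the rightmost piece in the optimal split of $[0,z]$; there is no separate ``contribution from the optimal split'' term, but $j$ is not always $k-1$, and it changes with $z$. Consequently the sign of $W_{k-1}'(z) - f_k(z)$ depends on which switching point $p_{jk}$ is relevant, and you have not shown that the resulting function is unimodal (or otherwise searchable in logarithmically many queries). Without that, your proposal does not deliver the stated $\mathcal{O}(\mathrm{poly}(n,\log\lambda))$ bound.

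The paper closes this gap differently. It defines, for each pair $i<j$, the \emph{switching point} $p_{ij} = \inf\{x : f_j(x) \geq f_i(x)\}$, and proves (Lemma~\ref{switchingset}) that some social-welfare-optimal MLRP-ordered allocation has all of its cut points in the set $P = \{p_{ij}\} \cup \{0,1\}$, which has only $\mathcal{O}(n^2)$ elements. Each $p_{ij}$ can be located to precision $\gamma$ in $\mathcal{O}(\log(1/\gamma))$ eval queries by binary search (Corollary~\ref{infirmumswitch}), since MLRP forces $f_j < f_i$ to the left of $p_{ij}$ and $f_j \geq f_i$ to the right. The dynamic program is then run only over this $\mathcal{O}(n^2)$-point set, giving the claimed dependence on $\log\lambda$ and $\log(1/\eta)$. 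This is the missing structural ingredient in your proposal: rather than speeding up each DP transition, the paper shrinks the candidate cut-point set \emph{a priori}.
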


\noindent 
\textbf{Egalitarian Welfare:} Section~\ref{section:max-min} addresses the problem of maximizing egalitarian welfare. Specifically, we prove that, in cake-division instances with MLRP, an allocation with egalitarian welfare arbitrarily close to the optimal can be computed efficiently. 

 \begin{restatable}{theorem}{Maxmin}
	\label{theorem:Maxmin}
	Let $\Cake =\langle [n], (f_i)_{i \in [n]} \rangle$ be a cake-division instance in which the value-density functions satisfy the monotone likelihood ratio property. Then, in the Robertson-Webb query model, an allocation that achieves the optimal egalitarian welfare in $\Cake$ can be computed (up to an arbitrary precision) in $\mathcal{O}\left( {\rm poly} ( n, \log \lambda ) \right)$ time; here $\lambda \in \mathbb{R}_+$ is the Lipschitz constant of the cut and eval queries.	
\end{restatable}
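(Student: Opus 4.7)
My approach pairs a binary search on the target egalitarian value $\tau \in [0,1]$ with a moving-knife feasibility test. For a fixed $\tau$, define the procedure $\MK(\tau)$ as follows: set $x_0 := 0$ and iteratively compute $x_i := \Cut_i(x_{i-1}, \tau)$ for $i = 1, 2, \ldots, n-1$, where the $i$-th cut is issued to the $i$-th agent in the MLRP order. Declare $\MK(\tau)$ to \emph{succeed} if $x_{n-1} \leq 1$ and $v_n(x_{n-1}, 1) \geq \tau$; in that case the allocation $\{[x_0,x_1], \ldots, [x_{n-2}, x_{n-1}], [x_{n-1}, 1]\}$ gives each of the first $n-1$ agents value exactly $\tau$ and agent $n$ value at least $\tau$, certifying that the optimal egalitarian welfare $\tau^* := \max_{\mathcal{I}} \RW(\mathcal{I})$ is at least $\tau$.

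The crux is proving the converse: whenever $\tau \leq \tau^*$, the procedure $\MK(\tau)$ succeeds. I fix a Pareto-optimal allocation $\mathcal{I}^* = \{I_1^*, \ldots, I_n^*\}$ achieving egalitarian welfare $\tau^*$; by Lemma~\ref{theorem:POorder} its pieces conform to the MLRP order, so I can write $I_i^* = [a_{i-1}, a_i]$ with $a_0 = 0$ and $a_n = 1$. I then induct on $i$ to show $x_i \leq a_i$: the base case $x_0 = a_0 = 0$ is immediate, while the inductive step uses $v_i(x_{i-1}, a_i) \geq v_i(a_{i-1}, a_i) \geq \tau^* \geq \tau$ (since $x_{i-1} \leq a_{i-1}$), which forces $\Cut_i(x_{i-1}, \tau) \leq a_i$. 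In particular $x_{n-1} \leq a_{n-1}$, hence $v_n(x_{n-1}, 1) \geq v_n(a_{n-1}, 1) \geq \tau^* \geq \tau$, and the procedure succeeds.

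Equipped with this monotone feasibility predicate (success for every $\tau \leq \tau^*$, failure for every $\tau > \tau^*$), the algorithm runs binary search for $\tau$ on $[0,1]$ over $\mathcal{O}(\log(1/\eta))$ rounds and returns the allocation output by $\MK$ at the largest feasible $\tau$ found; its egalitarian welfare is at least $\tau^* - \eta$. Each round requires $n-1$ cut queries and a single eval query, and Proposition~\ref{Lip} guarantees that the bit precision required at each step scales with $\log \lambda$ and $\log(1/\eta)$. The total runtime is therefore polynomial in $n$, $\log \lambda$, and $\log(1/\eta)$, as claimed.

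The main obstacle is ensuring that the left-to-right greedy moving knife faithfully tracks the optimum. A priori, the greedy carver could overshoot an agent's optimal interval and fail even when $\tau \leq \tau^*$; some agent's optimal piece might lie to the right of where $\MK$ places it. The fix is the structural alignment forced by MLRP: Pareto-optimal (and hence egalitarian-optimal) allocations hand out pieces in the MLRP order of the agents, which is exactly the order in which $\MK$ issues its cuts. This alignment is what powers the induction $x_i \leq a_i$; without MLRP the agent indices in an optimum could interleave arbitrarily relative to the knife's order, and the induction would collapse.
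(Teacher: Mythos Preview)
Your proposal is correct and follows essentially the same approach as the paper: a moving-knife feasibility test in MLRP order combined with binary search on the target value $\tau$, with Lemma~\ref{theorem:POorder} supplying the key fact that an egalitarian-optimal allocation respects the MLRP order. Your explicit induction $x_i \leq a_i$ against the optimal cut points is in fact a cleaner route to feasibility of every $\tau \leq \tau^*$ than the paper's, which instead proves monotonicity of the maps $\MK_i$ in $\tau$ (Proposition~\ref{proposition:MKn}) and derives the feasibility threshold from that.
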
 


\noindent
{\bf Nash Social Welfare:} In Section~\ref{section:nsw} we show that, under MLRP, the problem of maximizing Nash social welfare admits a fully polynomial-time approximation scheme (FPTAS). 


\begin{restatable}{theorem}{NashSocialWelfare}
\label{theorem:NSW}
Let $\Cake =\langle [n], (f_i)_{i \in [n]} \rangle$ be a cake-division instance in which the value-density functions satisfy the monotone likelihood ratio property. Then, in the Robertson-Webb query model, an allocation with Nash social welfare at least $(1- \varepsilon)$ times the optimal (Nash social welfare) in $\Cake$ can be computed in time that is polynomial in $1/\varepsilon$, $\log \lambda$, and  $n$; here $\lambda \in \mathbb{R}_+$ is the Lipschitz constant of the cut and eval queries.	
\end{restatable}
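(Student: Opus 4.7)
The plan is to reduce the problem to a dynamic program over agents taken in MLRP order, combined with a geometric discretization of per-agent values. Since every NSW-maximizing allocation is Pareto optimal, Lemma~\ref{theorem:POorder} forces it to respect the MLRP order---agent $i$ is assigned the $i$-th interval $[x_{i-1}, x_i]$ from left to right---so the task reduces to choosing cut points $0 = x_0 \leq x_1 \leq \cdots \leq x_n = 1$ that maximize $\prod_i v_i(x_{i-1}, x_i)$. The reason to expect an FPTAS (rather than an exact polynomial-time algorithm) is structural: NSW is a product, so a per-factor multiplicative slack of $(1-\varepsilon')$ compounds to only $(1-\varepsilon')^n$ on the product and hence a factor $(1-\varepsilon')$ on the NSW after the $n$-th root.

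I would discretize per-agent values on the geometric grid $\{(1-\varepsilon')^k\}_{k=0}^{K}$ with $\varepsilon' = \varepsilon$. To calibrate $K$, first argue that $\NSW^* \geq 1/n$: any contiguous envy-free allocation (which exists by Stromquist's theorem) is proportional and gives each agent value at least $1/n$. Combined with $v_j^* \leq 1$ for every $j$, this forces $\min_i v_i^* \geq (1/n)^n$ in the NSW-optimal allocation, so $K = O((n \log n)/\varepsilon)$ suffices to make the smallest grid value $(1-\varepsilon')^K$ fall below $(1/n)^n$, while keeping $K$ polynomial in $n$ and $1/\varepsilon$. Without such a lower bound, $K$ would be forced to be exponential in $n$, so this step is crucial.

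Define the DP state $T[i, s]$ as the leftmost $x \in [0, 1]$ for which $[0, x]$ admits a partition among agents $1, \ldots, i$ into intervals $I_j$ with $v_j(I_j) \geq (1-\varepsilon')^{k_j}$ for non-negative integers $k_j$ summing to $s$. With $T[0, 0] = 0$ and $T[0, s] = +\infty$ for $s > 0$, the recursion
\[
 T[i, s] \ = \ \min_{0 \leq k \leq s} \ \Cut_i \big( T[i-1, s-k], \, (1-\varepsilon')^k \big)
\]
makes one Robertson-Webb cut query per evaluation, and the algorithm outputs the allocation $\widetilde{\mathcal{I}}$ witnessing the smallest $s^\star$ with $T[n, s^\star] \leq 1$. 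For the approximation guarantee, letting $k_i^\star$ be the unique integer with $(1-\varepsilon')^{k_i^\star} \leq v_i^* < (1-\varepsilon')^{k_i^\star-1}$, I would show by induction that running the DP with bucket choice $(k_1^\star, \ldots, k_n^\star)$ produces cut points $y_i \leq x_i^*$: the inductive step uses $v_i(y_{i-1}, x_i^*) \geq v_i(x_{i-1}^*, x_i^*) = v_i^* \geq (1-\varepsilon')^{k_i^\star}$, so the leftmost point where $v_i(y_{i-1}, \cdot)$ attains $(1-\varepsilon')^{k_i^\star}$ is at most $x_i^*$. Consequently $s^\star \leq \sum_i k_i^\star$, and combined with $(1-\varepsilon')^{k_i^\star} \geq (1-\varepsilon')\, v_i^*$ this gives $\NSW(\widetilde{\mathcal{I}}) \geq (1-\varepsilon') \NSW^*$. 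The DP has $O(n^2 K)$ states and $O(K)$ transitions each, yielding $O(n^4 \log^2 n / \varepsilon^2)$ cut queries overall; the Lipschitz constant $\lambda$ enters only through the bit precision needed to represent cut-query outputs, accounting for the $\log \lambda$ dependence.

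The main subtleties I anticipate are the two structural ingredients that make the approach polynomial: first, the MLRP order of Pareto-optimal allocations (Lemma~\ref{theorem:POorder}), which reduces the allocation problem to choosing monotone cut points and enables the one-dimensional DP; second, the envy-free-implies-proportional lower bound $\NSW^* \geq 1/n$, without which the crude bound $v_i^* > 0$ would force the grid size $K$ to be exponential in $n$. The inductive ``sliding'' argument certifying $y_i \leq x_i^*$ is where the proof most visibly leverages the MLRP-enforced interval ordering, and I view it as the technical crux.
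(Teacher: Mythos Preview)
Your proof is correct, but it follows a genuinely different route from the paper's. The paper discretizes the \emph{cake} rather than the \emph{values}: it picks grid points $0=c_0<c_1<\cdots<c_N=1$ with $v_i(c_{t-1},c_t)\le \varepsilon/(8n)$ for all agents (so $N\le 8n^2/\varepsilon$), then runs a dynamic program over $(k,t)\in[n]\times[N]$ where $H(k,t)$ is the best Nash product achievable by the first $k$ agents on $[0,c_t]$. For the per-agent lower bound the paper invokes the result of Arunachaleswaran et al.\ that any NSW-optimal allocation satisfies $v_i(A^*_i)\ge \tfrac14 v_i(A^*_j)$, yielding $v_i(A^*_i)\ge 1/(4n)$ directly. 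Your approach instead discretizes per-agent values on a geometric grid and runs a DP over bucket-index sums, and you derive the weaker bound $v_i^*\ge 1/n^n$ from proportionality of envy-free allocations. The paper's version is a bit leaner---its grid has $O(n^2/\varepsilon)$ points versus your $O(n\log n/\varepsilon)$ value levels, and the stronger external lower bound keeps the cake-grid analysis very short---while your version is fully self-contained (no outside citation) and arguably closer in spirit to standard product-objective FPTAS templates. Both lean on Lemma~\ref{theorem:POorder} in the same essential way to fix the agent order.
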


\section{Implications of the Monotone Likelihood Ratio Property}

This section provides useful implications of MLRP (Lemma~\ref{lemma:R2R}). This result will be used in subsequent sections towards the analysis of our algorithms.

\begin{restatable}{lemma}{RacetotheRatios}
	\label{lemma:R2R}
Let $f_i$ and $f_j$ be two (ordered) value-density functions that bear MLRP i.e., the likelihood ratios satisfy $\frac{f_{j} (b)}{f_i(b)} \leq \frac{f_{j}(c)}{f_i(c)}$ for all $0 \leq b \leq c \leq 1$. Then, $f_i$ and $f_j$ satisfy the following two properties
\begin{itemize}
	\setlength\itemsep{0.001em}
\item[(i)] The values of the intervals satisfy $\frac{\int \limits_a^b f_{j}}{\int \limits_a^b f_{i}} \leq \frac{\int \limits_c^d f_{j}}{\int \limits_c^d f_{i}}$ for all $[a,b], [c,d] \subseteq [0,1]$ with $ b \leq c$.
\item[(ii)] The (normalized) values satisfy $\frac{\int \limits_x^b f_i}{\int \limits_a^b f_i} \leq \frac{\int \limits_x^b f_{j}}{\int \limits_a^b f_{j}} $ for all intervals $[a,b] \subseteq [0,1]$ and all $x \in [a,b]$.
\end{itemize}	
Moreover, properties (i) and (ii) are equivalent.  
\end{restatable}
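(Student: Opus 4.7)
The plan is to prove property (i) directly from MLRP and then derive (ii) from (i), while separately showing (ii) implies (i), which establishes both statements and their equivalence. The whole argument will avoid any analytic subtleties (such as sup/inf of $f_j/f_i$ on intervals) by working instead with the pointwise cross-product form $f_j(x)f_i(y) - f_j(y)f_i(x)$, which behaves well under integration even though $f_i$ need only be bounded and positive.

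First I will prove (i). The desired inequality $\frac{\int_a^b f_j}{\int_a^b f_i} \leq \frac{\int_c^d f_j}{\int_c^d f_i}$ is (after cross-multiplying by the positive denominators) equivalent to
\[
\int_a^b \!\!\int_c^d \bigl[f_j(y) f_i(x) - f_j(x) f_i(y)\bigr]\, dy\, dx \;\geq\; 0.
\]
For every $(x,y) \in [a,b] \times [c,d]$ we have $x \leq b \leq c \leq y$, so MLRP gives $\tfrac{f_j(x)}{f_i(x)} \leq \tfrac{f_j(y)}{f_i(y)}$; multiplying through by $f_i(x)f_i(y) > 0$ yields $f_j(x) f_i(y) \leq f_j(y) f_i(x)$. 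Hence the integrand is pointwise nonnegative and the double integral is nonnegative, which proves (i).

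Next I will show (i) $\Rightarrow$ (ii). Fix any $[a,b]\subseteq[0,1]$ and $x \in [a,b]$, and apply (i) with the two consecutive subintervals $[a,x]$ and $[x,b]$ to obtain $\int_a^x f_j \cdot \int_x^b f_i \le \int_a^x f_i \cdot \int_x^b f_j$. Adding $\int_x^b f_j \cdot \int_x^b f_i$ to both sides and factoring gives
\[
\int_x^b f_i \cdot \int_a^b f_j \;\leq\; \int_x^b f_j \cdot \int_a^b f_i,
\]
which is exactly (ii) after dividing by the positive quantities $\int_a^b f_i$ and $\int_a^b f_j$.

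Finally I will show (ii) $\Rightarrow$ (i) by a two-step chaining argument. Given $[a,b]$ and $[c,d]$ with $b \leq c$, I apply (ii) once to the interval $[a,d]$ at the cut point $x = b$, and then apply the same algebraic rearrangement as in the previous paragraph (in reverse) to conclude $\frac{\int_a^b f_j}{\int_a^b f_i} \le \frac{\int_b^d f_j}{\int_b^d f_i}$. Applying (ii) a second time to the interval $[b,d]$ at the cut point $x = c$, and again rearranging, gives $\frac{\int_b^d f_j}{\int_b^d f_i} \le \frac{\int_c^d f_j}{\int_c^d f_i}$. Chaining these two inequalities delivers (i). There is no real obstacle here; the only mildly nontrivial point is noticing that (ii) must be applied twice, with one application providing the ``left-half versus the rest'' bound and the other the ``right-half versus the rest'' bound, so that the middle gap $[b,c]$ gets absorbed in the transitive step.
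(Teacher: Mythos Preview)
Your proof is correct and follows essentially the same overall architecture as the paper: derive (i) directly from MLRP, then show (i) $\Leftrightarrow$ (ii) by the same algebraic manipulations on adjacent subintervals. The one notable variation is in the derivation of (i): the paper sandwiches both integral ratios against the pointwise ratio $f_j(b)/f_i(b)$ at the boundary, whereas you cast the inequality as nonnegativity of the double integral $\int_a^b\!\int_c^d [f_j(y)f_i(x)-f_j(x)f_i(y)]\,dy\,dx$. Your formulation is arguably cleaner since it avoids singling out the boundary point, but the two arguments are equivalent in spirit. Your (ii)$\Rightarrow$(i) step is also more explicit than the paper's, which just says ``reapplying this bound (with $a$, $x$, and $b$ set appropriately)''; your two-step chaining through $[b,d]$ is exactly what is needed to fill that in.
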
 
Here, if the likelihood ratio $\frac{f_{j}(x)}{f_i(x)}$ is strictly increasing, then we have a strict inequality in the corresponding implications.

The proof of this lemma is delegated to Appendix~\ref{appendix:proof-race-to-ratio}. Note that, in terms of the agents' valuations $v_i$ and $v_j$, property (i) in Lemma~\ref{lemma:R2R} can be expressed as
$\frac{v_{j} (a,b)}{ v_i(a,b) } \leq \frac{v_{j}(c,d)}{v_i(c,d)}$ for all $[a,b], [c,d] \subseteq [0,1]$ with $ b \leq c$. 
Similarly, property (ii) corresponds to $\frac{v_{i} (x,b)}{ v_i(a,b) } \leq \frac{v_{j}(x,b)}{v_{j}(a,b)}$ for all $[a,b] \subseteq [0,1]$ and all $ x \in [a,b]$. 

It is well-known that MLRP implies first-order stochastic dominance (see Lemma~\ref{SD}). Interestingly, property (ii) provides a strengthening: over any interval $[a,b]$, the normalized (by $v_{j}(a, b)$) values of agent $j$ stochastically dominate the normalized (by $v_{i}(a, b)$) values of agent $i$.



\section{Envy-Freeness} \label{EFdivisions}
In this section we develop an efficient algorithm for finding envy-free allocations in cake division instances with MLRP (Theorem~\ref{theorem:EFdivision}). Towards this goal, we define a class of cake divisions, referred to as \emph{ripple divisions} (Definition~\ref{defn:RD}), and prove that, under MLRP, every ripple division induces an envy-free allocation (Theorem~\ref{theorem:RD-EF}). Existential and computational guarantees for {ripple divisions} are established in Section~\ref{section:rd-exist} (Lemma~\ref{RDexistence}) and Section~\ref{section:rd-compute} (Lemma~\ref{RDcomputation}), respectively. Section~\ref{proofthm1} builds upon these results to prove our main result (Theorem~\ref{theorem:EFdivision}) for envy-freeness.  

We establish the universal existence of ripple divisions through the intermediate value theorem, i.e., a one-dimensional fixed-point argument (Lemma~\ref{RDexistence}). Consequentially, for cake-division instances with MLRP, we develop an alternate proof of existence of envy-free allocations. Since one can use binary search to find fixed points in the one-dimensional setting, this proof in fact leads to an algorithm for finding ripple divisions and, hence, envy-free divisions. 


\begin{definition}[Ripple Division]  \label{defn:RD}
	Given a cake-division instance $\Cake = \langle [n], \{f_i \}_{i} \rangle $, a collection of points $x^*_0=0 \leq x^*_1 \leq \dots \leq x^*_{n-1} \leq x^*_n = 1$ is said to form a ripple division of the cake iff 
	\begin{align*}
	  v_{i}(x^*_{i-1}, x^*_{i}) &= v_{i}(x^*_{i}, x^*_{i+1})>0 \ \ \text{for all agents} \  i \in [n-1].
	\end{align*}
\end{definition}


For establishing existence of ripple divisions, we first consider a relaxation of Definition~\ref{defn:RD} wherein do not enforce the last cut point (i.e., $x^*_n$) to be equal to one. Under this relaxation, the intervals $\{[x_{i-1}, x_i]\}_{i=1}^n$ do not cover the entire interval $[0,1]$ (instead, they span $[0,x_n]$) and, hence, lead to a partial allocation of the cake. 

Also, by convention, the agents are indexed following the MLRP order: for each $i \in [n-1]$, the likelihood ratio $f_{i+1}/f_i$ is nondecreasing. Hence, assigning interval $[x_{i-1}, x_i]$ to agent $i \in [n]$ provides an allocation wherein the intervals are assigned (left to right on the cake) in accordance with the MLRP order. We will show that the (partial) allocation obtained by assigning interval $[x_{i-1}, x_i]$ to agent $i \in [n]$ is envy-free (Theorem~\ref{theorem:RD-EF}). 

\begin{definition}[$\delta$-Ripple Division]  \label{defn:deltaRD}
	Given a cake-division instance $\Cake = \langle [n], \{f_i \}_{i} \rangle $, a collection of points $x_0=0 \leq x_1 \leq \dots \leq x_{n-1} \leq x_n \leq 1$ is said to form a $\delta$-ripple division of the cake iff $x_n \geq 1 - \delta$ and 
	\begin{align*}
	  v_{i}(x_{i-1}, x_{i}) &= v_{i}(x_{i}, x_{i+1})>0 \ \ \text{for all agents} \  i \in [n-1].
	\end{align*}
\end{definition}

Both Definitions~\ref{defn:RD} and~\ref{defn:deltaRD} require that, for all $i \in [n-1]$, agent $i$'s value for the $i$th interval ($[x^*_{i-1}, x^*_{i}]$ and $[x_{i-1}, x_{i}]$, respectively) is positive.  Also, note that a $0$-ripple division is an exact ripple division.

To compose the value equalities that define a $\delta$-ripple division (Definition~\ref{defn:deltaRD}), we consider $(n-1)$ functions $\RD_i: [0,1] \mapsto [0,1]$, for $2 \leq i \leq n$. In particular, focusing on the equalities considered in Definition~\ref{defn:deltaRD} (i.e., $v_i(x_{i-1}, x_i) = v_i(x_i, x_{i+1})$), we note that that if we set the first cut point $x_1=x \in [0,1]$, then all the subsequent points $x_2, \ldots, x_{n}\in [0,1]$ are fixed as well. In particular, $x_2$ is the point that satisfies $v_1(0, x) = v_1(x, x_2)$ and, iteratively, $x_{i+1}$ can be identified from $v_i(x_{i-1}, x_i) = v_i(x_i, x_{i+1})$. The functions $\RD_i$s capture this ``ripple'' effect and can be expressed as compositions of cut and eval queries. Formally, the functions $\RD_i : [0,1] \mapsto [0,1]$, for $i \in \{2,3, \dots, n\}$, are recursively defined as follows\footnote{The first cut point $x$ is specified upfront and, hence, we do not require $\RD_1$. The functions $\RD_i$s are defined for $i \in \{2, \ldots, n\}$.} 
\begin{align} \label{RDfunction}
\RD_2(x) &\coloneqq \Cut_1\left(x, \Eval_1(0,x)\right) \nonumber\\
\RD_i(x) &\coloneqq \Cut_{i-1} \Big( \RD_{i-1}(x), \Eval_{i-1} \big( \RD_{i-2}(x), \RD_{i-1}(x) \big) \Big) \quad \text{ for } i \in \{3,4, \ldots, n\}
\end{align}

In particular, $\RD_n(x)$ denotes the value of the last cut point $x_n$ obtained by initializing the ripple effect with $x_1 = x$. Also, note that $\RD_n(0) = 0$ and $\RD_n(1) =1$; by convention, the response to a cut query $\Cut_i( \ell, \tau)$ is truncated to $1$ iff $\tau$ is greater than the entire value to the right of $\ell$, i.e., if $v_i(\ell, 1) < \tau$.

Since $\RD_i$s can be expressed as a composition of the cut and eval queries, these functions can be efficiently computed in the Robertson-Webb query model. Moreover, using the fact that the cut and eval queries are $\lambda$-Lipschitz, in the following proposition we establish that $\RD_n$ is also Lipschitz continuous.

\begin{restatable}{proposition}{PropositionCcnLipschitz}
If in a cake-division instance $\Cake = \langle [n], \{f_i \}_{i} \rangle $ the cut and eval queries are $\lambda$-Lipschitz, then the function $\RD_n$ is $\lambda^{2(n-1)}$-Lipschitz.
 \label{RDn}
\end{restatable}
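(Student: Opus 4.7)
The plan is to prove by induction on $i \in \{2, 3, \dots, n\}$ that $\RD_i$ is $\lambda^{2(i-1)}$-Lipschitz, using nothing more than the standard composition rule for Lipschitz maps (if $f$ is $L_f$-Lipschitz and $g$ is $L_g$-Lipschitz then $f \circ g$ is $L_f L_g$-Lipschitz) together with the hypothesis that $\Cut_i$ and $\Eval_i$ are $\lambda$-Lipschitz in the $\|\cdot\|_\infty$ norm on their two-dimensional domain. Throughout, I will use the inequality $\lambda \geq 1$ freely (so that $\max(1,\lambda) = \lambda$).

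For the base case $i = 2$, consider the map $h(x) \vcentcolon= (x, \Eval_1(0,x))$ from $[0,1]$ into $[0,1] \times \mathbb{R}_+$. Each coordinate is $\lambda$-Lipschitz (the first trivially, the second by the Lipschitz hypothesis on $\Eval_1$), so $\|h(x') - h(x)\|_\infty \leq \lambda\,|x' - x|$. Since $\RD_2 = \Cut_1 \circ h$ and $\Cut_1$ is $\lambda$-Lipschitz, the composition rule yields $|\RD_2(x') - \RD_2(x)| \leq \lambda \cdot \lambda\,|x' - x| = \lambda^2 |x' - x|$, giving the base case $\lambda^{2(2-1)} = \lambda^2$.

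For the inductive step, suppose $\RD_{i-1}$ is $\lambda^{2(i-2)}$-Lipschitz and $\RD_{i-2}$ is $\lambda^{2(i-3)}$-Lipschitz (interpret $\RD_1(x) = x$, which is $1$-Lipschitz, when $i = 3$). Define the inner two-variable map $p(x) \vcentcolon= (\RD_{i-2}(x), \RD_{i-1}(x))$; its $\|\cdot\|_\infty$-Lipschitz constant is $\max(\lambda^{2(i-3)}, \lambda^{2(i-2)}) = \lambda^{2(i-2)}$. Composing with the $\lambda$-Lipschitz eval query gives that $x \mapsto \Eval_{i-1}(\RD_{i-2}(x), \RD_{i-1}(x))$ is $\lambda \cdot \lambda^{2(i-2)} = \lambda^{2i-3}$-Lipschitz. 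Now assemble the argument of $\Cut_{i-1}$ as the two-variable map
\[
g(x) \vcentcolon= \bigl(\RD_{i-1}(x),\ \Eval_{i-1}(\RD_{i-2}(x), \RD_{i-1}(x))\bigr),
\]
whose $\|\cdot\|_\infty$-Lipschitz constant is $\max(\lambda^{2(i-2)}, \lambda^{2i-3}) = \lambda^{2i-3}$. Applying the Lipschitz hypothesis on $\Cut_{i-1}$ one more time, $\RD_i = \Cut_{i-1} \circ g$ is $\lambda \cdot \lambda^{2i-3} = \lambda^{2(i-1)}$-Lipschitz, completing the induction and yielding the claim at $i = n$.

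The only subtle point is a bookkeeping one: the recursion involves two previous functions $\RD_{i-1}$ and $\RD_{i-2}$ but only multiplies the Lipschitz constant by $\lambda^2$ per step rather than something worse, and this is because the $\max$ in the $\|\cdot\|_\infty$-Lipschitz constant of $p(x)$ is attained by the (larger) $\RD_{i-1}$ entry, so no blowup from combining the two is incurred. Beyond that, the argument is a routine chain of three Lipschitz compositions per inductive step (one for $\Eval_{i-1}$ and one for $\Cut_{i-1}$, plus the pairing of coordinates), and there is no analytic obstacle once one is careful to take the max over coordinates in the $\|\cdot\|_\infty$ norm.
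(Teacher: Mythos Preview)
Your proposal is correct and follows essentially the same approach as the paper's proof: both argue by induction on $i$ that $\RD_i$ is $\lambda^{2(i-1)}$-Lipschitz, first bounding the Lipschitz constant of the inner $\Eval_{i-1}$ composition by $\lambda^{2i-3}$ and then applying the $\lambda$-Lipschitz bound on $\Cut_{i-1}$ to pick up one more factor of $\lambda$. Your framing in terms of the composition rule and $\|\cdot\|_\infty$-Lipschitz constants of coordinate pairings is a clean way to organize the same chain of inequalities the paper writes out explicitly.
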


The proof of Proposition~\ref{RDn} is deferred to Appendix~\ref{appendix:exist-compute-rd}. The Lipschitz continuity of $\RD_n$ turns out to be a key property, both for establishing the existence of ripple divisions and in developing an efficient algorithm for finding them. Specifically, we can apply the intermediate value theorem to $\RD_n$ and prove that, for any $\delta >0$, there exists a $\delta$-ripple division; see Lemma~\ref{lemma:delta-rd-exist} below. The universal existence of $\delta$-ripple divisions, for all $\delta >0$, along with a limit argument ($\delta \to 0$), establishes our existential result (Lemma~\ref{RDexistence}) for ripple divisions, i.e., for $0$-ripple divisions.

The Lipschitzness of $\RD_n$ also ensures that, via binary search, we can find a $\delta$-ripple division (via a point $x_1 \in (0,1)$ that satisfies $\RD_n(x_1) \geq 1 - \delta$ ), in time that is polynomial in $\log \left( \frac{1}{\delta} \right)$. This runtime dependence ensues that the precision parameter $\delta$ can be driven exponentially close to zero, in polynomial (in the bit complexity of $\delta$) time; see Lemma~\ref{RDcomputation}.

\subsection{Existence of Ripple Divisions}
\label{section:rd-exist}

\begin{lemma} 
\label{lemma:delta-rd-exist}
Let $\Cake = \langle [n], \{f_i \}_i \rangle$ be a cake-division instance, with $\lambda$-Lipschitz cut and eval queries, and let parameter $\delta \in (0,1)$. Then, in $\Cake$, there always exist a point $\widehat{x} \in (0,1)$ with the property that $\RD_n\left(\widehat{x}\right) = 1- \delta$.

Furthermore, such a point $\widehat{x}$ initializes a $\delta$-ripple division: the points $x_0=0$, $x_1 = \widehat{x}$, and $x_i= \RD_i\left( \widehat{x} \right)$, for $2 \leq i \leq n$, form a $\delta$-ripple division in $\Cake$. Here, the functions $\RD_2, \ldots, \RD_n$ are defined with respect to the cut and eval queries of $\Cake$. 
\end{lemma}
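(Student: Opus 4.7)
The plan is to apply the intermediate value theorem to $\RD_n$. By Proposition~\ref{RDn}, $\RD_n$ is Lipschitz and hence continuous on $[0,1]$, and the boundary values $\RD_n(0) = 0$ and $\RD_n(1) = 1$ have already been noted (they follow from the truncation convention on $\Cut$ together with $v_i(x,x) = 0$). Since $0 < 1-\delta < 1$, the intermediate value theorem produces a point $\widehat{x} \in (0,1)$ with $\RD_n(\widehat{x}) = 1-\delta$.

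With $\widehat{x}$ in hand, I would set $x_0 = 0$, $x_1 = \widehat{x}$, and $x_i = \RD_i(\widehat{x})$ for $2 \leq i \leq n$, and verify the three conditions of Definition~\ref{defn:deltaRD}. The final-cut condition $x_n \geq 1-\delta$ is immediate, with equality, by construction. Monotonicity $x_0 \leq x_1 \leq \ldots \leq x_n$ follows from the fact that $\Cut_i(\ell, \cdot)$ always returns a point in $[\ell, 1]$. The value-equalities $v_i(x_{i-1}, x_i) = v_i(x_i, x_{i+1})$ for $i \in [n-1]$ are precisely what the recursive step $\RD_{i+1}(x) = \Cut_i(\RD_i(x), \Eval_i(\RD_{i-1}(x), \RD_i(x)))$ encodes---\emph{provided} no cut query along the chain is truncated to $1$ by the Robertson--Webb convention. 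Given these equalities, strict positivity $v_i(x_{i-1}, x_i) > 0$ follows by induction on $i$: under MLRP each $f_j$ is strictly positive on $[0,1]$, so $v_j(a,b) > 0$ iff $a < b$. The base case $v_1(0, \widehat{x}) > 0$ is immediate from $\widehat{x} > 0$; for the inductive step, $v_{i-1}(x_{i-1}, x_i) = v_{i-1}(x_{i-2}, x_{i-1}) > 0$ gives $x_{i-1} < x_i$, and hence $v_i(x_{i-1}, x_i) > 0$ since $f_i > 0$.

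The one step that requires care---and what I would flag as the main (if small) obstacle---is ruling out intermediate truncation in the $\Cut$ queries. The key observation is that if any $\Cut_j\bigl(x_j,\Eval_j(x_{j-1}, x_j)\bigr)$ along the chain returns $1$ by truncation, then every subsequent $x_k$ with $k > j$ must also equal $1$, because once the recursion reaches the right endpoint of the cake it can never move left. This would force $x_n = 1$, contradicting $x_n = \RD_n(\widehat{x}) = 1-\delta < 1$. Hence no truncation occurs along the chain, every recursive step encodes a genuine value-equality, and the points $(x_0, x_1, \ldots, x_n)$ constitute a $\delta$-ripple division as required.
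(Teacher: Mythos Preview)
Your proof is correct and follows essentially the same approach as the paper: apply the intermediate value theorem to the continuous function $\RD_n$ (using $\RD_n(0)=0$, $\RD_n(1)=1$) to obtain $\widehat{x}$, rule out truncation of any intermediate $\Cut$ query via the observation that truncation would force $x_n=1$ contradicting $x_n=1-\delta<1$, and then propagate strict positivity inductively from $\widehat{x}>0$ using that the densities are strictly positive. The only cosmetic difference is that the paper derives the strict ordering $x_0<x_1<\cdots<x_n$ directly from the positivity argument rather than first noting weak monotonicity.
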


\begin{proof} The function $\RD_n$ is continuous (Proposition \ref{RDn}) and it satisfies $\RD_n(0) = 0$ along with $\RD_n(1) =1$. Hence, the intermediate value theorem ensures that, for any $\delta \in (0,1)$, there must exist a point $\widehat{x} \in (0,1)$ which satisfies $\RD_n\left(\widehat{x} \right) = 1-\delta$. This shows that the required point $\widehat{x}$ always exists. 

We will complete the proof by establishing that the points $x_0=0$, $x_1 = \widehat{x}$, and $x_i= \RD_i\left( \widehat{x} \right)$, for $2 \leq i \leq n$, form a $\delta$-ripple division. Note that, since $\delta >0$, we have $\RD_n\left(\widehat{x} \right)  <1$. This strict inequality implies that the intermediate points $x_i$s were \emph{not} truncated to one. Indeed, in this case, all of the value equalities in Definition~\ref{defn:deltaRD} hold, i.e., $v_i(x_{i-1}, x_i) = v_i(x_i, x_{i+1})$ for all $i \in [n-1]$. It remains to show that these values are positive. 

The bound $\delta <1$ gives us $\RD_n \left( \widehat{x} \right) >0 = \RD_n(0)$. Therefore, we have $\widehat{x}>0$, i.e., $x_1 = \widehat{x} > 0 = x_0$. In the current setting, all nonempty intervals have positive value for the agents. This follows from the continuity of the cut queries.\footnote{In fact, under MLRP, we explicitly have $f_i(x) >0$ for all $x \in [0,1]$, to have the likelihood ratios be well-defined.} Since the interval $[0, x_1]$ is nonempty, agent $1$ has a positive value for it, $v_1(0, x_1) >0$.  Furthermore, the value equality $v_1(x_1, x_2) = v_1(0, x_1)$ gives us $v_1(x_1, x_2)  > 0$, i.e., $x_2>x_1$.  Extending this argument iteratively shows that $x_0 < x_1 < \ldots < x_n <1$. In other words, each agent receives a positive value under the cut points, i.e., $v_i(x_{i-1}, x_i) = v_i(x_i, x_{i+1}) >0$ for all $i \in [n-1]$. Hence, $x_i$s form a $\delta$-ripple division.
\end{proof}

Next we use Lemma~\ref{lemma:delta-rd-exist} and a limit argument ($\delta \to 0$) to establish universal existence of ripple divisions. 

\begin{restatable}{lemma}{RDexistence}
		\label{RDexistence}
	Let $\Cake$ be a cake-division instance in which the cut and eval queries are $\lambda$-Lipschitz. Then, $\Cake$ necessarily admits a ripple division.
\end{restatable}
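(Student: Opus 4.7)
The plan is to obtain the ripple division as the limit of $\delta_k$-ripple divisions provided by Lemma~\ref{lemma:delta-rd-exist} with $\delta_k \to 0$. Concretely, for each $k \in \mathbb{N}$, choose $\delta_k = 1/k$ and let $\widehat{x}_k \in (0,1)$ be the initial point supplied by Lemma~\ref{lemma:delta-rd-exist}; the associated $\delta_k$-ripple division has cut points $x_{k,0} = 0$, $x_{k,1} = \widehat{x}_k$, $x_{k,i} = \RD_i(\widehat{x}_k)$ for $2 \leq i \leq n$, with $x_{k,n} = 1 - \delta_k$ and $v_i(x_{k,i-1}, x_{k,i}) = v_i(x_{k,i}, x_{k,i+1}) > 0$ for every $i \in [n-1]$.

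Next, I invoke compactness. Since $\{\widehat{x}_k\} \subseteq [0,1]$, the Bolzano--Weierstrass theorem yields a subsequence (which I re-index as $k$) along which $\widehat{x}_k \to \widehat{x}^* \in [0,1]$. Each $\RD_i$ is a composition of $\lambda$-Lipschitz cut and eval queries, and is therefore continuous (Proposition~\ref{RDn} gives this for $\RD_n$, and the same recursive argument works for $i < n$). Hence $x_{k,i} \to x^*_i \coloneqq \RD_i(\widehat{x}^*)$ for $2 \leq i \leq n$, setting also $x^*_0 \coloneqq 0$ and $x^*_1 \coloneqq \widehat{x}^*$. In particular, $x^*_n = \RD_n(\widehat{x}^*) = \lim_k (1 - \delta_k) = 1$. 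Since each eval query is continuous (it is $\lambda$-Lipschitz), the value equalities $v_i(x_{k,i-1}, x_{k,i}) = v_i(x_{k,i}, x_{k,i+1})$ pass to the limit, yielding
\[
v_i(x^*_{i-1}, x^*_i) \;=\; v_i(x^*_i, x^*_{i+1}) \qquad \text{for every } i \in [n-1].
\]

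The remaining step, and the main obstacle, is to rule out degeneracy: I must show that these common values are \emph{strictly} positive so that Definition~\ref{defn:RD} is genuinely satisfied. Here I use that MLRP mandates $f_i > 0$ pointwise on $[0,1]$, so $v_i(a,b) = 0$ iff $a = b$. Suppose for contradiction that $x^*_i = x^*_{i+1}$ for some $i \in [n-1]$. Then $v_i(x^*_i, x^*_{i+1}) = 0$, so the value equality at index $i$ forces $v_i(x^*_{i-1}, x^*_i) = 0$, hence $x^*_{i-1} = x^*_i$. Iterating this backward through the indices $i-1, i-2, \ldots, 1$ (each step using the value equality at that index) collapses all earlier cut points, so in particular $x^*_1 = x^*_0 = 0$, and therefore $x^*_0 = x^*_1 = \cdots = x^*_{i+1} = 0$. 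Iterating forward through indices $i+1, i+2, \ldots, n-1$ in the same way propagates the collapse to the right, giving $x^*_n = 0$ and contradicting $x^*_n = 1$. Therefore $0 = x^*_0 < x^*_1 < \cdots < x^*_n = 1$ strictly, each interval has positive value, and the points $\{x^*_i\}_{i=0}^n$ form a ripple division of $\Cake$, completing the proof.
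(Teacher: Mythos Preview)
Your proof is correct and follows essentially the same approach as the paper: extract a convergent subsequence of $\delta_k$-ripple divisions via Bolzano--Weierstrass, pass the value equalities to the limit using continuity of the eval queries, and then argue non-degeneracy by propagating a hypothetical collapse to a contradiction with $x^*_n = 1$.

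One minor point: in the non-degeneracy step you write ``MLRP mandates $f_i > 0$ pointwise,'' but MLRP is \emph{not} a hypothesis of this lemma---only $\lambda$-Lipschitzness of the cut and eval queries is assumed. The paper obtains the needed fact (every nonempty interval has positive value) directly from Lipschitz continuity of the cut queries: if $v_i(a,b)=0$ with $a<b$, then $\Cut_i(a,0)=a$ while $\Cut_i(a,\tau)\geq b$ for every $\tau>0$, violating Lipschitzness. Substituting this justification makes your argument self-contained under the lemma's actual hypothesis.
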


 \begin{proof}
Lemma~\ref{lemma:delta-rd-exist} asserts that, for any $\delta \in (0,1)$, there exists a collection of points $x^{\delta}_0=0 < x^{\delta}_1 < \ldots < x^{\delta}_{n-1} < x^{\delta}_n = 1-\delta$ that form a $\delta$-ripple division. Note that here $x^\delta_n = 1- \delta$. We consider the sequence of these $\delta$-ripple divisions, $\mathcal{S} \coloneqq  \left\langle (x^{\delta}_0, x^{\delta}_1, \ldots, x^{\delta}_n) \right\rangle_{\delta \in (0,1)}$.
 
Since $\mathcal{S}$ is a nonempty and bounded sequence in $[0,1]^n$, the Bolzano-Weierstrass theorem ensures that $\mathcal{S}$ contains a convergent subsequence, say $\left\langle ({x}^{\delta_j}_0, {x}^{\delta_j}_1, \dots, {x}^{\delta_j}_n)\right\rangle_{\delta_j>0}$. Write $(x^*_0, x^*_1, \dots, x^*_n) \in [0,1]^n$ to denote the limit of this subsequence as $\delta_j$ tends to zero. We will show that the points $x^*_0, x^*_1, \ldots, x^*_n$ form a ripple division in $\Cake$ (see Definition~\ref{defn:RD}), i.e., establish that $x^*_n =1$ and $v_i(x^*_{i-1}, x^*_i) = v_i(x^*_i,x^*_{i+1})>0$ for all agents $i \in [n-1]$.

First, note that $x^*_n =1$, since the sequence $\langle 1-\delta_j \rangle \to 1$ as $\delta_j \to 0$. Also, $x^*_0 =0$, since the constant sequence $\langle 0\rangle$ tends to $0$. 

We will next prove that  $v_{i}(x^*_{i-1}, x^*_{i}) = v_{i}(x^*_{i}, x^*_{i+1})$ for all agents $i \in [n-1]$. Given that the collection $\left( {x}^{\delta}_0, {x}^{\delta}_1, \dots, {x}^{\delta_j}_n \right)$ forms a $\delta_j$-ripple division, we have $v_{i}({x}^{\delta_j}_{i-1}, {x}^{\delta_j}_{i}) = v_{i}({x}^{\delta_j}_{i}, {x}^{\delta_j}_{i+1})$, for all $i \in [n-1]$. 

Recall that the value-density function $v_i$ (equivalently, $\Eval_i$) is continuous over $[0,1]^2$. Therefore, the sequential criterion of continuity\footnote{For any continuous function $g: [0,1]^2 \mapsto \mathbb{R}$, if a sequence $\langle a_t\rangle_t \in [0,1]^2$ converges to some $a \in [0,1]^2$, then the sequence $\langle g(a_t) \rangle_t$ must converge to $g(a)$.} gives us $\langle v_{i}({x}^{\delta_j}_{i-1}, {x}^{\delta_j}_{i}) \rangle \to v_i(x^*_{i-1}, x^*_i)$ and $\langle v_{i}({x}^{\delta_j}_{i}, {x}^{\delta_j}_{i+1}) \rangle \to v_i(x^*_i, x^*_{i+1})$, as $\delta_j$ tends to $0$. Applying the algebra of limits\footnote{For any two convergent sequences $\langle a_t \rangle_t \to a$ and $\langle b_t \rangle_t \to b$, if we have $a_t = b_t$ for all $t \geq 1$, then their limits must be equal as well, i.e., $a=b$.} on the two sequences $\langle v_{i}({x}^{\delta_j}_{i-1}, {x}^{\delta_j}_{i})\rangle$ and $\langle v_{i}({x}^{\delta_j}_{i}, {x}^{\delta_j}_{i+1})\rangle$, we obtain that their limits must be equal as well, i.e., $v_i(x^*_{i-1}, x^*_i) = v_i(x^*_i, x^*_{i+1})$.
 
 We now complete the proof by showing that these two equal values must be positive, $v_i(x^*_{i-1}, x^*_i) = v_i(x^*_i,x^*_{i+1})>0$. Since $x^*_n = 1$, we have $x^*_1>0$; otherwise, if $x^*_1 = 0$, then following the value equalities we would have $x^*_n = \RD_n(0) = 0$. In the current setting, all nonempty intervals have positive value for the agents. This follows from the Lipschitz continuity of the cut queries. Hence, $x^*_1> x^*_0 = 0$ gives us $0 < v_1(x^*_0, x^*_1) = v_1(x^*_1, x^*_2)$. This bound also implies $x^*_2 > x^*_1$.  Extending this argument iteratively shows that $0= x^*_{0} < x^*_{1} < \ldots < x^*_n =1$. Hence, the agents' values are positive, $v_i(x^*_{i-1}, x^*_i) >0$.  Overall, we get that the set of points $0=x^*_0<x^*_1< \dots < x^*_n=1$ form a ripple division in $\Cake$. 
\end{proof}
\subsection{Computation of Ripple Divisions}
\label{section:rd-compute}
In this section we present an efficient algorithm for computing $\delta$-ripple divisions. 
Lemma~\ref{lemma:delta-rd-exist} implies that the problem of computing a $\delta$-ripple division reduces to finding a point $x \in (0,1)$ that satisfies $\RD_n(x) \in [1- \delta, 1)$. The algorithm \textsc{BinSearch} (Algorithm~\ref{alg:BinSearch}) finds such a point $x$ (and, hence, a $\delta$-ripple division) via binary search. It is well-known that binary search can be used to compute fixed points in the one-dimensional setting. We provide the relevant details here for completeness. 

Specifically, in \textsc{BinSearch}, we initialize $\ell = 0$ along with $r =1$ and keep bisecting the interval $[\ell, r]$ as long as $\RD_n(\ell) < 1 - \delta$ and $\RD_n(r) =1$. {Recall that $\RD_n(0) = 0$ and $\RD_n(1) = 1$.}  

Since $\RD_n$ is Lipschitz continuous (Proposition~\ref{RDn}), the intermediate value theorem (applied on $[\ell, r]$ with the bounds $\RD_n(\ell) < 1 - \delta$ and $\RD_n(r) =1$) guarantees that in each considered interval $[\ell, r]$ there exists a point $x \in (\ell, r)$ which satisfies $\RD_n(x) = 1- \delta$.\footnote{Note  that this argument (in particular, the use of intermediate value theorem) does not require $\RD_n$ to be monotonic. Also, recall that $\RD_i$s can be computed efficiently in the Robertson-Webb query model. Hence, evaluating $\RD_i$s in \textsc{BinSearch} leads to at most a polynomial overhead in its runtime.} In each iteration of this algorithm the length of the considered interval $[\ell, r]$ reduces by a multiplicative factor of two and, hence, after an appropriate number of iterations, the required point $x$ and the midpoint of the interval $(\ell + r)/2$ get sufficiently close. In such a case, one can show (using the Lipschitz continuity of $\RD_n$) that the midpoint $(\ell + r)/2$ itself initializes a $2 \delta$-ripple division. Appendix~\ref{section:proofRDc} formalizes this runtime analysis and provides a proof of the following lemma. 

	\begin{algorithm}
		{ \small
			{\bf Input:} A cake-division instance $\Cake = \langle [n], \{f_i \}_{i} \rangle$, in the Robertson-Webb query model, and parameter $\delta >0$. \\
			{\bf Output:} A $\delta$-ripple division $0=x_0 \leq x_1 \leq x_2, \ldots \leq x_n \leq 1$.
			\caption{\textsc{BinSearch}} 
			\label{alg:BinSearch}
			\begin{algorithmic}[1]
			
				\STATE Initialize $\ell = 0$ and $r=1$
				\WHILE{$\ell < r$} 
				\IF {$ \RD_n\left(  \frac{\ell+r}{2} \right) < 1 - \delta$} 
				\STATE Update $\ell \leftarrow (\ell+r)/2$ \label{step:update-left} 
				\ELSIF {$ \RD_n\left(  \frac{\ell+r}{2} \right) =1$} 
				\STATE Update $r \leftarrow (\ell+r)/2$ \label{step:update-right} 
				\ELSIF {$ \RD_n\left(\frac{\ell+ r}{2} \right) \in [1 - \delta, 1) $}
				\STATE Set $x_0 = 0$, $x_1 = \frac{\ell+r}{2}$, and $x_i = \RD_i(x_1)$ for all $i \in \{2, \ldots, n\}$ 
				\RETURN the collection of points $x_0, x_1, \ldots, x_n$
				\ENDIF
				\ENDWHILE 
			\end{algorithmic}
		}
	\end{algorithm}

\begin{restatable}{lemma}{RDcomputation} \label{RDcomputation}
Let $\Cake = \langle [n], \{f_i \}_{i} \rangle $ be a cake-division instance in which the cut and eval queries are $\lambda$-Lipschitz. Then, for any $\delta \in (0,1)$ and in the Robertson-Webb query model, a $\delta$-ripple division of $\Cake$ can be computed in $\mathcal{O}\left( {\rm poly} ( n, \log \lambda, \log \frac{1}{\delta}) \right)$ time.
\end{restatable}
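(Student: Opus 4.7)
The plan is to prove correctness and bound the runtime of Algorithm~\textsc{BinSearch}. First I would establish the loop invariant that $\RD_n(\ell) < 1 - \delta$ and $\RD_n(r) = 1$ at every iteration: it holds initially since $\RD_n(0)=0$ and $\RD_n(1)=1$, and each of the updates (lines~\ref{step:update-left} and~\ref{step:update-right}) replaces an endpoint with $m$ only when $\RD_n(m)$ already equals the value demanded of that endpoint. When the algorithm exits the while-loop, the returned midpoint $m=(\ell+r)/2$ satisfies $\RD_n(m)\in[1-\delta,1)$. Setting $x_1 = m$ and $x_i = \RD_i(m)$ for $i\geq 2$, the strict inequality $\RD_n(m)<1$ precludes any cut from being truncated, so the recursive identities $v_i(x_{i-1},x_i) = v_i(x_i,x_{i+1})$ hold by the very definition~\eqref{RDfunction} of the $\RD_i$'s, and positivity follows by the same iterative argument used in the proof of Lemma~\ref{lemma:delta-rd-exist}. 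Together with $x_n = \RD_n(m) \geq 1-\delta$, this certifies the output as a valid $\delta$-ripple division.

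The core of the proof is a termination bound. Applying the intermediate value theorem to the continuous function $\RD_n$ on $[\ell,r]$ (using the invariant) yields a point $\widehat{x}\in(\ell,r)$ with $\RD_n(\widehat{x}) = 1 - \delta$. Proposition~\ref{RDn} then gives
\[
\delta \;=\; \bigl|\RD_n(r) - \RD_n(\widehat{x})\bigr| \;\leq\; \lambda^{2(n-1)} \, (r-\widehat{x}) \;\leq\; \lambda^{2(n-1)} \, (r-\ell).
\]
Hence, as long as the exit condition has not yet triggered, $r - \ell \geq \delta / \lambda^{2(n-1)}$. Since each iteration halves $r-\ell$ from its initial value of $1$, the algorithm must terminate within $\bigl\lceil \log_2(\lambda^{2(n-1)}/\delta) \bigr\rceil + 1 = \mathcal{O}(n\log\lambda + \log(1/\delta))$ bisection steps.

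To close the complexity analysis, I would note that each iteration evaluates $\RD_n$ once, which by its recursive definition unrolls into $\mathcal{O}(n)$ cut and eval queries, and the concluding computation of $x_2,\ldots,x_n$ adds another $\mathcal{O}(n)$ queries. Multiplying by the iteration count gives an overall cost of $\mathcal{O}(\mathrm{poly}(n,\log\lambda,\log(1/\delta)))$ in the Robertson--Webb model, which is precisely the claimed bound. The main obstacle, I expect, is the termination analysis: one must recognize that non-termination forces the Lipschitz function $\RD_n$ to climb from strictly below $1-\delta$ all the way to $1$ somewhere inside $[\ell,r]$, and then convert this mandatory vertical gap into a horizontal width lower bound of $\delta/\lambda^{2(n-1)}$ via Proposition~\ref{RDn}. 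A naive attempt to bound $|\RD_n(m) - (1-\delta)|$ by $\lambda^{2(n-1)}(r-\ell)/2$ would not suffice here, since $\RD_n$ is not known to be monotone near its $(1-\delta)$-level set, and the midpoint's value could in principle hover just outside $[1-\delta,1)$.
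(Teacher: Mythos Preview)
Your proof is correct and essentially matches the paper's: both maintain the invariant $\RD_n(\ell)<1-\delta$, $\RD_n(r)=1$, and combine the intermediate value theorem with the Lipschitz bound of Proposition~\ref{RDn} to cap the number of bisections at $\mathcal{O}(n\log\lambda + \log(1/\delta))$. The one cosmetic difference is that the paper derives termination by the very route you set aside as ``naive'': it picks $x$ with $\RD_n(x)=1-\delta/2$ (rather than $1-\delta$) and bounds $|\RD_n(m)-\RD_n(x)|\le \lambda^{2(n-1)}/2^T$, which once $T>2(n-1)\log(2\lambda/\delta)$ forces $\RD_n(m)\in(1-\delta,1)$ --- shifting the target to the midpoint of the acceptance window is exactly what neutralizes the ``hovering just outside'' concern you raised.
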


\subsection{From Ripple Divisions to Envy-Free Allocations}

The next theorem establishes the crucial connection between ripple divisions and envy-freeness. In particular, setting $\delta = 0$ in this theorem, we obtain that, under MLRP, every (exact) ripple division induces an envy-free allocation. 

\begin{restatable}{theorem}{RD-EF}
	\label{theorem:RD-EF}
Let  $\Cake = \langle [n], \{f_i \}_{i} \rangle$ be  a cake-division instance in which the value densities satisfy the monotone likelihood ratio property and let parameter $\delta \geq 0$. Then, every $\delta$-ripple division, $x_0=0 \leq x_1 \leq \dots \leq x_{n-1} \leq x_n \leq 1$, in $\Cake$ induces an envy-free partial allocation $\{ I_i = [x_{i-1}, x_i] \}_{i=1}^n$.
\end{restatable}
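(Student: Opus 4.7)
Let the $\delta$-ripple points be $x_0 = 0 \leq x_1 \leq \ldots \leq x_n \leq 1$ and set $I_k = [x_{k-1}, x_k]$. I would fix an agent $i \in [n]$ and prove $v_i(I_i) \geq v_i(I_j)$ by splitting into the two cases $j > i$ and $j < i$; the rightward and leftward envy bounds are shown separately but by essentially the same trick. The role of $\delta$ is purely bookkeeping: the argument never uses the value of $x_n$, so the same proof yields envy-freeness both for partial ripple divisions ($\delta > 0$) and for exact ones ($\delta = 0$).

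For the rightward case $j > i$, I would telescope
\[
v_i(x_{i-1}, x_i) \; = \; v_i(x_i, x_{i+1}) \; \geq \; v_i(x_{i+1}, x_{i+2}) \; \geq \; \cdots \; \geq \; v_i(x_{j-1}, x_j).
\]
The initial equality is the ripple condition for agent $i$ (valid since $i \leq n-1$). For a generic step $v_i(x_k, x_{k+1}) \geq v_i(x_{k+1}, x_{k+2})$ with $k \geq i$, I would start from the ripple equality $v_{k+1}(x_k, x_{k+1}) = v_{k+1}(x_{k+1}, x_{k+2})$ and apply Lemma~\ref{lemma:R2R}(i) to the pair $(f_i, f_{k+1})$, which is MLRP-ordered since $i \leq k < k+1$. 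The lemma gives
\[
\frac{v_{k+1}(x_k, x_{k+1})}{v_i(x_k, x_{k+1})} \; \leq \; \frac{v_{k+1}(x_{k+1}, x_{k+2})}{v_i(x_{k+1}, x_{k+2})},
\]
and since the numerators are equal, the denominators must satisfy $v_i(x_k, x_{k+1}) \geq v_i(x_{k+1}, x_{k+2})$, as required.

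The leftward case $j < i$ is handled symmetrically: I would chain $v_i(x_{j-1}, x_j) \leq v_i(x_j, x_{j+1}) \leq \cdots \leq v_i(x_{i-1}, x_i)$, where each step $v_i(x_{k-1}, x_k) \leq v_i(x_k, x_{k+1})$ with $k \leq i-1$ comes from the ripple equality $v_k(x_{k-1}, x_k) = v_k(x_k, x_{k+1})$ (valid since $k \leq n-1$) combined with Lemma~\ref{lemma:R2R}(i) applied to the MLRP pair $(f_k, f_i)$, whose equal denominators on the two adjacent intervals force the desired inequality on the numerators. Note that neither direction ever invokes a ripple equality for agent $n$, so the fact that the ripple definition only imposes conditions for $k \leq n-1$ causes no difficulty, and the extremal cases $i = 1$ (no leftward $j$) and $i = n$ (no rightward $j$) are vacuous. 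The main conceptual step---really the only nontrivial one---is recognizing that Lemma~\ref{lemma:R2R}(i) is the precise instrument for converting an equality of a third agent's values on two adjacent intervals into an inequality for agent $i$ on the same two intervals, with the direction dictated automatically by the MLRP order. Once this is noted, the proof reduces to a short telescoping argument.
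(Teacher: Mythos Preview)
Your proposal is correct and follows essentially the same approach as the paper: split into $j>i$ and $j<i$, apply Lemma~\ref{lemma:R2R}(i) to the appropriate MLRP-ordered pair on adjacent intervals, use the ripple equality to cancel one side of the ratio, and telescope. The only difference is cosmetic indexing (you parameterize the generic step by $k$ rather than by the endpoint agent $j$), and you are slightly more explicit than the paper about the edge cases $i=1$, $i=n$ and about never needing a ripple condition for agent $n$.
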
 

This theorem asserts that here the partial allocation $\mathcal{I}=\{I_1, \ldots, I_n\}$ (with $I_i =  [x_{i-1}, x_i]$) satisfies $v_i(I_i) \geq v_i(I_j)$, for all $i, j \in [n]$, and  at most a $\delta$-length piece of the cake (specifically, $[x_n, 1]$) remains unallocated in $\mathcal{I}$. 

\begin{proof}
We will show that if the points $x_0 = 0 \leq x_1 \leq x_2 \leq \ldots \leq x_n \leq 1$ form a $\delta$-ripple division, then the partial allocation $\{I_i  = [x_{i-1}, x_i]\}_{i=1}^n$ is envy-free. Here, the definition of a $\delta$-ripple division (Definition~\ref{defn:deltaRD}) ensures that, for each agent $i \in [n-1]$, the values the two consecutive intervals $I_i$ and $I_{i+1}$ are equal and positive 
\begin{align} \label{equalityRD}
v_i(I_i) = v_i(x_{i-1}, x_i) & = v_i(x_i, x_{i+1}) = v_i(I_{i+1}) >0 
\end{align}
	
Recall that the agents are indexed following the MLRP order, i.e., for each $i \in [n-1]$, the likelihood ratio $f_{i+1}/f_i$ is nondecreasing. We fix an agent $i \in [n]$ and establish envy-freeness with respect to $i$ by considering two complementary cases (i) for agents to the left of $i$, we prove that $v_i(I_1) \leq v_i(I_2) \leq \ldots \leq v_i(I_{i}) $ and (ii) for agents to the right of $i$, we prove that $v_i(I_i) \geq v_i(I_{i+1}) \geq \ldots \geq v_i(I_n)$. \\

\noindent 
Case {(i)}:  Consider any agent $k \in [n]$ such that $k<i$. Given that $f_k$ and $f_i$ bear MLRP (i.e., $f_i/f_k$ is non-decreasing), property (i) of Lemma~\ref{lemma:R2R}, with $a = x_{k-1}$, $b=c=x_k$, and $d=x_{k+1}$, gives us $\frac{v_i(x_{k-1}, x_{k})}{v_k(x_{k-1},x_{k})}  \leq  \frac{v_i(x_{k}, x_{k+1})}{v_k(x_{k},x_{k+1})}$. That is, for the intervals $I_k = [x_{k-1},x_{k}]$ and $I_{k+1} = [x_{k}, x_{k+1}]$ we have
\begin{align} \label{noleftenvy}
\frac{v_i(I_k)}{v_k(I_k)} & \leq  \frac{v_i(I_{k+1})}{v_k(I_{k+1})}
\end{align}
Instantiating equation~(\ref{equalityRD}) for agent $k$, we can simplify inequality (\ref{noleftenvy}) to $v_i(I_k) \leq v_i(I_{k+1})$. Combining this inequality across all $k<i$, we obtain the desired chain of inequalities for agent $i$, i.e., $v_i(I_1) \leq v_i(I_2) \leq \ldots \leq v_i(I_{i}) $.
	
\noindent 
{Case (ii):} Consider any agent $j \in [n]$ such that $j>i$ Given that $f_i$ and $f_j$ bear MLRP (i.e., $f_j/f_i$ is non-decreasing), property (i) of Lemma~\ref{lemma:R2R}, with $a = x_{j-1}$, $b=c=x_j$, and $d=x_{j+1}$,  gives us $\frac{v_j (x_{j-1}, x_j)}{v_i(x_{j-1}, x_j)}  \leq  \frac{v_j(x_{j}, x_{j+1})}{v_i(x_{j},x_{j+1})}$. That is, for the intervals $I_j = [x_{j-1},x_{j}]$ and $I_{j+1} = [x_{j}, x_{j+1}]$ we have 
\begin{align} \label{norightenvy}
\frac{v_j(I_j)}{v_i(I_j)}  \leq  \frac{v_j(I_{j+1})}{v_i(I_{j+1})}  
\end{align}
Instantiating equation~(\ref{equalityRD}) for agent $j$, we can simplify inequality (\ref{norightenvy}) to $v_i(I_j) \geq v_i(I_{j+1})$. Combining this inequality across all $j>i$, we obtain the desired chain of inequalities for agent $i$, i.e., $v_i(I_i) \geq v_i(I_{i+1}) \geq \ldots \geq v_i(I_n)$.

The above two cases establish that agent $i \in [n]$ does not envy any other other agent, i.e., $\mathcal{I}= \{I_1, \ldots, I_n\}$ is an envy-free partial allocation. Indeed, if $\delta = 0$, then $\mathcal{I}$ covers the entire cake, i.e., we obtain an envy-free allocation. 
\end{proof}

Notably, Lemma~\ref{RDexistence} and Theorem~\ref{theorem:RD-EF} (with $\delta=0$) provide a stand-alone proof of existence of envy-free allocations in cake-division instances with MLRP. The next section establishes an algorithmic counterpart of this existential result; specifically, we show that using Lemma~\ref{RDcomputation} one can directly obtain an efficient algorithm for finding envy-free allocations, under MLRP. 

\subsection{Proof of Theorem~\ref{theorem:EFdivision}} \label{proofthm1}

This section restates and proves our main result (Theorem~\ref{theorem:EFdivision}) for envy-freeness.

\EFdivision*

\begin{proof}
Given a cake-division instance $\Cake$, with MLRP, and precision parameter $\eta >0$, we invoke Lemma~\ref{RDcomputation} to find an $\left(\frac{\eta}{\lambda}\right)$-ripple division in $\mathcal{O}\left( {\rm poly} ( n, \log \lambda, \log \frac{1}{\eta} ) \right)$ time. 

Write  $x_0=0 \leq x_1 \leq \dots \leq x_{n-1} \leq x_n \leq 1$ to denote the computed $\left(\frac{\eta}{\lambda}\right)$-ripple division and let $\mathcal{I} = \{I_1, \ldots, I_n\}$ be the corresponding partial allocation; here $I_i = [x_{i-1}, x_i]$. Theorem~\ref{theorem:RD-EF} ensures that $\mathcal{I}$ is envy-free. 

We will show that coalescing the unassigned (in $\mathcal{I}$) piece $[x_n, 1]$ to agent $n$ provides a complete allocation that satisfies envy-freeness, up to $\eta$ precision. Write $\mathcal{I}^* \coloneqq \{I^*_1, I^*_2, \ldots, I^*_{n-1}, I^*_n\}$ to denote this allocation in which the $n$th agent receives the interval $I^*_n \coloneqq [x_{n-1}, 1]$ (equivalently, $I^*_n = I_n \cup [x_n, 1]$) and $I^*_i = I_i$ for the remaining agents $i \in [n-1]$.   

Note that against all agents $j \in [n-1]$, envy-freeness of $\mathcal{I}^*$ directly follows from the fact that the partial allocation $\mathcal{I}$ is envy-free: $v_i(I^*_i) \geq v_i(I_i) \geq v_i(I_j) = v_i(I^*_j)$, for all $i \in [n]$ and all $j \in [n-1]$. 

Finally, we address envy against agent $n$. Recall that $x_i$s form a $\left(\frac{\eta}{\lambda}\right)$-ripple division, hence $x_n \geq 1 - \frac{\eta}{\lambda}$. In addition, the fact that eval queries are $\lambda$-Lipschitz gives us $v_i([x_n, 1]) \leq \eta$ for all agents $i \in [n]$. Hence, for all $i \in [n]$ we have 
\begin{align*}
v_i(I^*_i) & \geq v_i(I_i) \geq v_i (I_n) \tag{since $\mathcal{I}$ is an envy-free partial allocation} \\
& = v_i (I^*_n) - v_i ( [x_n, 1]) \tag{since $I^*_n = I_n \cup [x_n, 1]$} \\
& \geq v_i(I^*_n) - \eta
\end{align*}
Therefore, $\mathcal{I}^*$ satisfies envy-freeness, up to $\eta$ precision: $v_i(I^*_i) \geq v_i(I^*_j) - \eta$ for all $i, j \in [n]$. 

The time complexity obtained via Lemma~\ref{RDcomputation} implies that the parameter $\eta$ can be driven exponentially close to zero, in time that is polynomial in $\log \frac{1}{\eta}$ (i.e., in the bit complexity of $\eta$). Hence, we can find an envy-free allocation, up to arbitrary precision, in $\mathcal{O}\left( {\rm poly} ( n, \log \lambda ) \right)$ time. Theorem~\ref{theorem:EFdivision} now stands proved.
\end{proof}

\section{Pareto Optimality}

This section shows that, with MLRP in hand, one does not loose out on Pareto optimality by imposing the contiguity requirement. That is, under MLRP,   there always exist allocations (i.e., cake divisions with connected pieces) that are Pareto optimal among all cake divisions, with or without connected pieces. Moreover, such allocations conform to the MLRP order. 

This structural result implies that for maximizing welfare we can restrict attention to allocations wherein the intervals are assigned (left to right on the cake) in accordance with the MLRP order. Intuitively, this leads us to a welfare-maximizing algorithm---specifically, a dynamic program---that recursively finds optimal allocations for intervals placed at the left end of the cake (i.e., for intervals of the form $[0,x]$); see Section~\ref{section:social welfare} and Appendix~\ref{section:nsw} for details. 

Subsequently, Section~\ref{section:weller} (Theorem~\ref{theorem:ef-po}) establishes a strong connection between fairness and (Pareto) efficiency in the MLRP context: if the value densities bear MLRP, then \emph{every} envy-free allocation is necessarily Pareto optimal.




  
 

\begin{restatable}{lemma}{MLRPorder}
	\label{theorem:POorder}
	Let $\Cake =\langle [n], (f_i)_i \rangle$ be a cake-division instance in which the value densities satisfy the monotone likelihood ratio property. Then, for every cake division $\mathcal{D}=\{D_1, \ldots, D_n\}$ in $\Cake$ there exists an allocation $\mathcal{J} = \{J_1, \ldots, J_n \}$ such that $v_i(J_i) \geq v_i(D_i)$, for $1 \leq i \leq n$. 
	
	Furthermore, for every Pareto optimal allocation $\mathcal{I} = \{I_1, \ldots, I_n \}$ in $\Cake$, there exists an allocation $\mathcal{I'} = \{I'_1, \ldots, I'_n \}$ with $v_i(I'_i) = v_i(I_i)$ that conforms to the MLRP order, i.e., if $ \ f_{i+1}/f_i $ is nondecreasing in $[0,1]$, then the interval assigned to agent $i$ (i.e., $I'_i$) appears to the left of the interval assigned to the agent $i+1$ (i.e., $I'_{i+1}$). 		
\end{restatable}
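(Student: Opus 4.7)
I would prove the first part by induction on the number of agents $n$, using a single ``peeling off'' step powered by an MLRP-based swap, and then derive the second part as an immediate corollary.

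For the inductive step, set $\tau_i := v_i(D_i)$ and use the divisibility of the cake with respect to $f_1$ to pick $z\in[0,1]$ with $v_1(0,z)=\tau_1$; assign $J_1 := [0,z]$ to agent $1$. Write $D_1^L := D_1\cap[0,z]$, $F := D_1\cap[z,1]$, and $E_j := D_j\cap[0,z]$ for $j\geq 2$. From $\tau_1 = v_1(D_1^L) + v_1(F)$ and $v_1(0,z) = v_1(D_1^L) + \sum_{j\geq 2} v_1(E_j)$ I get $v_1(F) = \sum_{j\geq 2} v_1(E_j)$, so divisibility of $F$ with respect to $f_1$ lets me partition $F$ into pieces $F_2,\dots,F_n$ with $v_1(F_j)=v_1(E_j)$ for every $j\geq 2$. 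Define a cake division of the sub-cake $[z,1]$ among agents $2,\dots,n$ by $D'_j := (D_j\cap[z,1])\cup F_j$. The inductive hypothesis, applied to this cake division (the statement carries over verbatim to any sub-interval by rescaling), yields a contiguous MLRP-ordered allocation $\{J_2,\dots,J_n\}$ of $[z,1]$ with $v_j(J_j)\geq v_j(D'_j)$, and concatenating $J_1$ with these yields the desired $\mathcal{J}$, automatically in MLRP order.

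The heart of the argument is the inequality $v_j(F_j)\geq v_j(E_j)$ for $j\geq 2$, which gives $v_j(D'_j)\geq v_j(D_j)$. I would establish it via a measurable-set strengthening of Lemma~\ref{lemma:R2R}(i): for any measurable $A\subseteq[0,z]$ and $B\subseteq[z,1]$ and any $j>1$,
\[
\frac{v_j(A)}{v_1(A)} \;\leq\; \frac{f_j(z)}{f_1(z)} \;\leq\; \frac{v_j(B)}{v_1(B)},
\]
which is immediate from the pointwise monotonicity of the likelihood ratio $f_j/f_1$: the ratio is at most $f_j(z)/f_1(z)$ on $A$ and at least $f_j(z)/f_1(z)$ on $B$, so integrating gives the two bounds. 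Applied with $A=E_j$ and $B=F_j$, and using $v_1(E_j)=v_1(F_j)$, the display yields $v_j(F_j)\geq v_j(E_j)$.

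For the second claim, apply the first claim to $\mathcal{I}$ itself, regarded as a cake division, to obtain an MLRP-ordered allocation $\mathcal{I}'=\{I'_1,\dots,I'_n\}$ with $v_i(I'_i)\geq v_i(I_i)$ for every $i$. Since $\mathcal{I}$ is Pareto optimal over all cake divisions and $\mathcal{I}'$ is itself a cake division, no inequality can be strict; hence $v_i(I'_i)=v_i(I_i)$ for all $i$, which is exactly what the second part asserts. The main obstacle I anticipate is purely technical: the sets $F$ and $E_j$ need not be single intervals, so the partition of $F$ and the swap inequality must be justified for measurable sets, not just for intervals as in the stated form of Lemma~\ref{lemma:R2R}. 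The displayed strengthening above handles this cleanly, and the rest of the argument is a clean ``cut the leftmost piece and recurse''.
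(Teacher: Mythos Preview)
Your proof is correct and takes a genuinely different route from the paper. The paper argues by a \emph{local swap}: whenever two adjacent intervals $[p,q]\subseteq D_j$ and $[q,r]\subseteq D_i$ are held out of MLRP order ($i<j$), it exhibits a point $q'\in[p,r]$ (chosen so that $v_j(p,q')/v_j(p,r)=v_i(q,r)/v_i(p,r)$) with $v_i(p,q')\geq v_i(q,r)$ and $v_j(q',r)\geq v_j(p,q)$, and then says one can ``repeatedly apply'' this swap to straighten out any cake division into an MLRP-ordered allocation. Your argument instead peels off agent $1$'s interval on the left and recurses: you fix $z$ with $v_1(0,z)=v_1(D_1)$, redistribute the right-excess $F=D_1\cap[z,1]$ to the remaining agents in $v_1$-equal pieces, and invoke the measurable-set form of the MLRP ratio bound to show each agent $j\geq 2$ only gains in the exchange $E_j\mapsto F_j$. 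Both approaches rest on the same integrated likelihood-ratio inequality (your displayed strengthening is exactly how Lemma~\ref{lemma:R2R}(i) is proved, so extending it to finite unions of intervals is immediate). What your approach buys is a cleaner handle on termination and structure: the induction makes the construction of $\mathcal{J}$ explicit and automatically delivers MLRP order, whereas the paper's ``repeatedly apply'' leaves the convergence of the swap process implicit. The paper's two-agent swap, on the other hand, isolates the elementary reordering fact in its simplest form and is reused elsewhere (e.g., in the social-welfare analysis). Your derivation of the second part from the first via Pareto optimality is the same as the paper's.
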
 
\begin{proof}
Consider a cake division $\mathcal{D}=\{D_1, \ldots, D_n\}$ wherein two consecutive intervals are assigned violating the MLRP order: say, interval $[p,q]$ is assigned to agent $j$ (i.e., this interval is contained in the bundle $D_j$), the adjacent interval $[q,r]$ is assigned to agent $i$, and agent $i$ appears before $j$ in the MLRP order ($f_{j}/f_i$ is non-decreasing over $[0,1]$). 

We will show that in such a case there always exists a point $q' \in [p, r]$ such that $v_i(p,q') \geq v_i(q, r)$ and $v_j(q', r) \geq v_j(p,q)$. That is, one can swap the allocation order between $i$ and $j$ (in the interval $[p,q] \cup [q, r]$) without decreasing the agents' values. Moreover, we note that, if $f_j/f_i$ is strictly increasing in the interval $[p,q] \cup [q, r]$, then this update leads to a strict increase in agent $i$'s or agent $j$'s value. 

Hence, starting with any cake division $\mathcal{D}=\{D_1, \ldots, D_n\}$, we can repeatedly apply the above-mentioned resolution towards the MLRP order and obtain an \emph{allocation} $\mathcal{J} = \{J_1, \ldots, J_n \}$ with the desired property, $v_i(J_i) \geq v_i(D_i)$ for all $i \in [n]$. 

Note that this resolution process also establishes the second part of the theorem, i.e., for every Pareto optimal allocation $\mathcal{I}$, there exists an allocation $\mathcal{I'}$ with $v_i(I'_i) = v_i(I_i)$ that conforms to the MLRP order. 

The remainder of the proof addresses the desired point $q' \in [p,r]$. In particular, we will identify $q'$ such that assigning interval $[p,q']$ to agent $i$ (instead of $[q,r]$) and assigning $[q',r]$ to agent $j$ (instead of $[p,q]$) leads to an increment in values.  

Write $\beta_i \coloneqq \frac{v_i(q,r)}{v_i(p,r)}$ to denote the normalized value of agent $i$ under the initial assignment. Define $q' \in [p,r]$ to be the point that satisfies 
\begin{align}
\frac{v_j(p,q')}{v_j(p,r)} = \beta_i \label{eq:defn-q-prime}
\end{align}


Since the value densities satisfy MLRP, property (ii) of Lemma~\ref{lemma:R2R}, applied to the interval $[p,r]$ and $q' \in [p,r]$, gives us $\frac{v_i(q',r)}{v_i(p,r)}  \leq \frac{v_j(q',r)}{v_j(p,r)} $. Simplifying further we obtain $1- \frac{v_j(q',r)}{v_j(p,r)} \leq 1- \frac{v_i(q',r)}{v_i(p,r)}$, i.e., 
 


\begin{align}
\frac{v_j(p,q')}{v_j(p,r)} \leq \frac{v_i(p,q')}{v_i(p,r)}  \label{ineq:norm-2}
\end{align}


Therefore, we obtain a value bound for agent $i$  
\begin{align*}
\frac{v_i(p,q')}{v_i(p,r)} \geq \frac{v_j(p,q')}{v_j(p,r)} & = \beta_i = \frac{v_i(q, r)}{v_i(p,r)} \tag{via equations (\ref{ineq:norm-2}), (\ref{eq:defn-q-prime}) and the definition of $\beta_i$}
\end{align*} 

That is, agent $i$'s value is preserved through the reassignment, $v_i(p,q') \geq v_i(q,r)$. 

For agent $j$, via property (ii) of Lemma~\ref{lemma:R2R}, with interval $[p,r]$ and $q \in [p,r]$, we have
\begin{align*}
\frac{v_j(q,r)}{v_j(p,r)} \geq \frac{v_i(q,r)}{v_i(p,r)} & = \beta_i  =  \frac{v_j(p,q')}{v_j(p,r)} \tag{by the defintion of $\beta_i$ and equation (\ref{eq:defn-q-prime})} 
\end{align*}

This inequality reduces to $1 - \frac{v_j(p,q')}{v_j(p, r)} \geq 1 - \frac{v_j(q,r)}{v_j(p, r)}$. Simplifying we obtain $\frac{v_j(p, r) - v_j(p,q')}{v_j(p, r)} \geq \frac{v_j(p, r) - v_j(q,r)}{v_j(p, r)}$. Therefore, we have the desired inequality  $v_j(q',r) \geq v_j(p,q)$ and the stated claims follow. 
\end{proof}

\begin{remark}
For cake-division instances with MLRP, we can prove that any allocation $\mathcal{K}$ that conforms to the MLRP order is Pareto optimal (over the set of all cake divisions). Write $0=k_0<k_1< \dots <k_n=1$ to denote the cut-points of $\mathcal{K}$. For contradiction, we assume that $\mathcal{K}$ is not Pareto optimal. That is, there exists a cake-division $\mathcal{L}$ that dominates $\mathcal{K}$ and is Pareto optimal. By Lemma~\ref{theorem:POorder}, we know there exists another Pareto optimal allocation $\mathcal{M}$ with $v_i(M_i)=v_i(L_i)$ for all $i\in[n]$ that conforms to the MLRP order. Write $0=m_0<m_1< \dots <m_n=1$ to denote the cut-points of $\mathcal{M}$. Since, $\mathcal{M}$ Pareto dominates $\mathcal{K}$, we will have $k_i \leq m_i$ for all $i\in[n]$ with at least one strict inequality. This contradicts the fact that $k_n=m_n=1$. Therefore, $\mathcal{K}$ is Pareto optimal.
\end{remark}

\subsection{Extending Weller's Theorem} 
\label{section:weller}

Weller's theorem~\cite{weller1985fair} is a notable result in the cake-cutting literature and it asserts that there always exists some cake division---though, not necessarily with connected pieces---which is both envy-free (fair) and Pareto optimal. While this theorem holds in general, it does not guarantee that envy-freeness and Pareto optimality can be achieved together with allocations. Indeed, there are cake-division instances wherein none of the of envy-free {allocations} are Pareto optimal. 

We show that, by contrast under MLRP, {every} envy-free allocation is Pareto optimal, among all cake divisions (Theorem~\ref{theorem:ef-po}). Therefore, given a cake-division instance with MLRP, the allocation computed by our algorithm (Algorithm \ref{alg:BinSearch}) is not only envy-free but also Pareto optimal, up to an arbitrary precision.

\EFPO*


\begin{proof}
	Write $\mathcal{I} = \{I_1, I_2, \dots, I_n\}$ to denote an envy-free allocation in $\Cake$; here interval $I_i$ is assigned to agent $i \in [n]$. We assume, towards a contradiction, that there exists a cake division $\mathcal{D} = \{D_1, \ldots, D_n\}$ that Pareto dominates $\mathcal{I}$. 
Lemma~\ref{theorem:POorder} implies that in such a case there exists an allocation $\mathcal{J} = \{J_1, \ldots, J_n\}$ which also Pareto dominates $\mathcal{I}$. That is, we have $v_i(J_i) \geq v_i(D_i) \geq v_i(I_i)$, for all agents $i \in [n]$, and there exists some agent $k \in [n]$ such that $v_k(J_k) \geq v_k(D_k) >v_k(I_k)$.

Recall that for an allocation the endpoints of all the constituent intervals are referred to as its {cut points}. We break our analysis into the following two cases depending on whether $\mathcal{I}$ and $\mathcal{J}$ have the same set of cut points. 

\noindent
$\emph{Case 1:}$ The cut points of the allocations $\mathcal{I}$ and $\mathcal{J}$ are identical. In this case, there must exist a permutation $\sigma: [n] \mapsto [n]$ such that $I_{\sigma(i)} = J_i$ for all $i \in [n]$. Since $\I$ is envy-free, we have $ v_i(I_i) \geq v_i(I_{\sigma(i)}) = v_i(J_i)$ for all agents $i \in [n]$. However, this contradicts the fact that $\mathcal{J}$ Pareto dominates the allocation $\I$. 

\noindent
    $\emph{Case 2:}$ The cut points of $\I$ and $\mathcal{J}$ are not identical. Since both the allocations form a partition of the same cake $[0,1]$, there must exist some $s, t  \in [n]$ such that the interval $J_s$ is a strict subset of the interval $I_t$, i.e., $J_s \subset I_t$.  Envy-freeness of $\I$ gives us $v_s(I_s) \geq v_s(I_t) > v_s(J_s)$.  The last strict inequality follows from the fact that the value density $f_s$ of agent $s$ has full support over $[0,1]$ and $J_s$ is a strict subset of $I_t$. This bound $v_s(I_s) > v_s(J_s)$  contradicts the fact that $\mathcal{J}$ Pareto dominates $\I$ and completes the proof. 
\end{proof}



\section{Social Welfare}
\label{section:social welfare}

This section develops an algorithm for social welfare maximization. 
Recall that, under MLRP, Pareto optimal allocations conform to the MLRP order (Lemma \ref{theorem:POorder}). Hence, for maximizing social welfare we can restrict attention to allocations wherein the intervals are assigned (left-to-right on the cake) in accordance with the MLRP order. This observation, in and of itself, leads to a fully-polynomial time approximation scheme for the maximizing social welfare: we can partition the cake into ${\rm poly} (n, 1/\varepsilon)$ contiguous intervals, each of value at most $\varepsilon$, and then solve the problem using a dynamic program. We show that instead of considering a general partition we can identify a set $P$---of $\mathcal{O}(n^2)$ points---such that the cut points of an optimal allocation are contained in $P$. This will enable us to execute a dynamic program focusing only on the points in $P$ and establish Theorem~\ref{theorem:SocialWelfare}.  In sharp contrast to the FPTAS described above, our dynamic program finds an allocation with social welfare $\eta>0$ close to the optimal in time that is dependent on $\log(1/\eta)$.

For value densities $f_i$ and $f_j$, write $L_{ij}$ to denote the set of points at which $f_j$ is at least as large as $f_i$; specifically, $L_{ij} \coloneqq \{x \in [0,1]: {f_j(x)} \geq {f_i(x)} \}$. Since that the densities $f_i$ and $f_j$ are normalized, there must exist a point $x \in [0,1]$ such that $f_j(x) \geq f_i(x)$; in other words, $L_{ij} \neq \emptyset$. Also, observe that this set is bounded below by $0$. Therefore, the greatest lower bound property of $\mathbb{R}$ implies that $L_{ij}$ admits an infimum. We will refer to this infimum as the \emph{switching point} between the two value densities, $p_{ij} \coloneqq \inf_{x \in L_{ij}} \ x$.\footnote{As noted, $p_{ij}$ exists and satisfies $p_{ij} \in [0,1]$. However, since that the value densities $f_i$s are not necessarily continuous (they are only assumed to be Riemann integrable) a point $x$ with the property that $f_i(x) = f_j(x)$ might not exist.} For a cake-division instance $\Cake=\langle[n], \{f_i\}_i \rangle$, we will write $P$ to denote the collection of all switching points $p_{ij}$s, with $ 1\leq i<j \leq n$ (along with $0$ and $1$), i.e., $P\coloneqq \{ p_{i,j} \in [0,1] : 1 \leq i < j \leq n \} \cup \{0, 1\}$; we include the endpoints for cake in $P$ for notational convenience. Also, recall the convention that the agents are indexed following the MLRP order. 

The next lemma provides a useful property about the switching points in the MLRP context, that is reminiscent of the `single-crossing' type condition used in social choice; see \cite{athey2001single, gans1996majority, elkind2014characterization}.
\begin{lemma} \label{switch}
Let $f_i$ and $f_j$ be two value-density functions that satisfy MLRP, i.e., ${f_j}/{f_i}$ is non-decreasing over $[0,1]$. Then,   
    	\begin{itemize}
    		\item[(a)] For all  $y \in [0, p_{i,j})$, the likelihood ratio satisfies ${f_j(y)}/{f_i(y)} <1$.
    		\item[(b)] For all $z \in (p_{ij}, 1]$, the likelihood ratio satisfies ${f_j(z)}/{f_i(z)} \geq 1$.
    	\end{itemize}
	Here, $p_{ij}$ is the switching point between $f_i$ and $f_j$. 
      \end{lemma}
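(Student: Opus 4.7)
The plan is to derive both statements directly from the definition of $p_{ij}$ as the infimum of the set $L_{ij} = \{x \in [0,1] : f_j(x) \geq f_i(x)\}$, using MLRP (i.e., monotonicity of $f_j/f_i$) only where it is genuinely needed, namely in part (b).

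For part (a), I would appeal purely to the greatest-lower-bound property: since $p_{ij} = \inf L_{ij}$, every element of $L_{ij}$ is at least $p_{ij}$. Hence any $y \in [0, p_{ij})$ satisfies $y \notin L_{ij}$, and by the defining condition of $L_{ij}$ this means $f_j(y) < f_i(y)$, i.e., $f_j(y)/f_i(y) < 1$. Note that MLRP is not used in this direction; the argument is purely set-theoretic.

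For part (b), fix $z \in (p_{ij}, 1]$ and set $\varepsilon \coloneqq z - p_{ij} > 0$. By the approximation property of the infimum, there exists a point $x^\star \in L_{ij}$ with $x^\star < p_{ij} + \varepsilon = z$. By definition of $L_{ij}$ we have $f_j(x^\star)/f_i(x^\star) \geq 1$. Since $x^\star \leq z$ and $f_j/f_i$ is non-decreasing by MLRP, we conclude $f_j(z)/f_i(z) \geq f_j(x^\star)/f_i(x^\star) \geq 1$.

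I do not expect any real obstacle: both parts are essentially unpacking the definition of the switching point, with MLRP entering only to transport the inequality $f_j(x^\star)/f_i(x^\star) \geq 1$ from some point $x^\star$ arbitrarily close to (and to the right of) $p_{ij}$ out to the arbitrary $z > p_{ij}$. The one minor subtlety worth flagging is that $p_{ij}$ itself need not lie in $L_{ij}$ (since the $f_i$'s are only Riemann integrable, not continuous), which is precisely why part (b) is stated on the open interval $(p_{ij}, 1]$ and why we must extract a point $x^\star$ strictly less than $z$ via the infimum characterization rather than simply evaluate at $p_{ij}$.
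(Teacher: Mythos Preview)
Your proof is correct and uses essentially the same ideas as the paper. Part (a) is identical in content (the paper phrases it as a contradiction, you state the contrapositive directly), and for part (b) the paper also relies on the infimum characterization together with MLRP monotonicity, just packaged as a contradiction argument (assuming $f_j(z)/f_i(z) < 1$ forces $z$ to be a lower bound of $L_{ij}$) rather than your direct extraction of a witness $x^\star \in L_{ij}$ with $x^\star < z$.
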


 \begin{proof}
As observed previously, $p_{ij} \in [0,1]$ exists and is unique. We begin by proving part (a) of the stated claim. Consider any point $y \in [0, p_{i,j})$ and assume, towards a contradiction, that the likelihood ratio satisfies $\frac{f_j(y)}{f_i(y)} \geq 1$. This implies that the point $y$ belongs to the set $L_{ij} =\{x \in [0,1]: {f_j(x)} \geq {f_i(x)} \}$. Since $y<p_{ij}$, we get a contradiction to the fact that $p_{ij}$ is the infimum of the set $L_{ij}$.
 	
For proving part (b), consider any point $z \in (p_{ij}, 1]$. Assume, towards a contradiction, that at $z$ we have $\frac{f_j(z)}{f_i(z)} < 1$. Since the likelihood ratio $\frac{f_j(x)}{f_i(x)}$ is non-decreasing over $[0,1]$ (by definition of MLRP), we have $\frac{f_j(t)}{f_i(t)} < 1$ for all $0 \leq t \leq z$. That is, there does not exist a point $t \in [0, z]$ with the property that $f_j(t) \geq f_i(t)$. Hence, $z$ constitutes a lower bound for the set $L_{ij}=\{x \in [0,1]: {f_j(x)} \geq {f_i(x)} \}$. Since $p_{ij} <z$, we get a contradiction to the fact that $p_{ij}$ is the infimum (greatest lower bound) of the set $L_{ij}$. This completes the proof.
\end{proof} 
 
The following corollary asserts that, up to an arbitrary precision, each switching point can be determined efficiently.
 
 \begin{corollary} \label{infirmumswitch}
Let $f_i$ and $f_j$ be two value densities that bear MLRP and let parameter $\gamma \in (0,1)$. Then, in the Robertson-Webb model, we can find an interval of length $\gamma$ that contains  the switching point $p_{ij}$ in $\mathcal{O} \left( \log \left( 1/ \gamma \right) \right)$ time. 
 \end{corollary}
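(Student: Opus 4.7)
The plan is to perform a ternary search on the auxiliary function
\[
  g(x) \;:=\; v_j(0,x) - v_i(0,x),
\]
which at any $x \in [0,1]$ can be evaluated with two eval queries in the Robertson-Webb model. The key structural observation I will use is that, under MLRP, $g$ is \emph{unimodal} on $[0,1]$ with its minimum located precisely at $p_{ij}$. This follows directly from Lemma~\ref{switch}: for $a < b \leq p_{ij}$, part (a) yields $f_j < f_i$ on a positive-measure subset of $[a,b]$, so $g(b) - g(a) = v_j(a,b) - v_i(a,b) < 0$; and for $p_{ij} \leq a < b$, part (b) yields $f_j \geq f_i$ on $[a,b]$, so $g(b) - g(a) \geq 0$. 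Hence $g$ is strictly decreasing on $[0, p_{ij}]$ and non-decreasing on $[p_{ij}, 1]$.

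Given this unimodality, I would maintain an interval $[\ell, r] \subseteq [0,1]$ satisfying the invariant $p_{ij} \in [\ell, r]$, initialized to $[0,1]$. In each iteration, I set $m_1 = \ell + (r-\ell)/3$ and $m_2 = \ell + 2(r-\ell)/3$, compute $g(m_1)$ and $g(m_2)$ with four eval queries, and update as follows: if $g(m_1) \leq g(m_2)$, set $r \leftarrow m_2$; otherwise, set $\ell \leftarrow m_1$. The invariant is preserved in each case: if $g(m_1) \leq g(m_2)$ but (for contradiction) $p_{ij} > m_2$, then $[m_1, m_2] \subset [0, p_{ij}]$, on which $g$ is strictly decreasing, forcing $g(m_1) > g(m_2)$; the symmetric case is analogous using the non-decreasing behavior of $g$ on $[p_{ij},1]$. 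Each iteration shrinks the length of $[\ell, r]$ by a factor of $2/3$, so after $\lceil \log_{3/2}(1/\gamma) \rceil = \mathcal{O}(\log(1/\gamma))$ iterations the returned interval has length at most $\gamma$ and still contains $p_{ij}$.

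Since each iteration costs $\mathcal{O}(1)$ eval queries, the total running time is $\mathcal{O}(\log(1/\gamma))$, as claimed. I do not anticipate any serious obstacle in this plan. The one subtlety is that MLRP only guarantees \emph{non-strict} monotonicity of $g$ to the right of $p_{ij}$ (flat regions are possible when $f_i \equiv f_j$ on a subinterval), so the minimizer of $g$ need not be unique; however, the correctness of each update step above requires only the non-strict right-monotonicity together with the strict left-monotonicity of $g$, both of which are guaranteed by Lemma~\ref{switch}. Hence the invariant $p_{ij} \in [\ell,r]$ is preserved throughout, which is all we need.
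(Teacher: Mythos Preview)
Your proof is correct but takes a different route from the paper. The paper discretizes $[0,1]$ into buckets $B_k = [(k-1)\gamma/2,\, k\gamma/2]$ and performs a \emph{binary} search on the sign of $v_j(B_k) - v_i(B_k)$: by Lemma~\ref{switch}, this sign is negative for all buckets strictly left of $p_{ij}$ and nonnegative for all buckets strictly right of it, so binary search locates the transition bucket (and hence a length-$\gamma$ interval containing $p_{ij}$) in $\mathcal{O}(\log(1/\gamma))$ steps. You instead observe that the antiderivative $g(x) = v_j(0,x) - v_i(0,x)$ is unimodal with its minimum at $p_{ij}$ and run a \emph{ternary} search on $g$. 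Both arguments rest on the same structural fact from Lemma~\ref{switch} (the sign of $f_j - f_i$ flips at $p_{ij}$); the paper exploits it as a monotone sign predicate suited to bisection, while you integrate it to obtain unimodality suited to ternary search. The paper's version is marginally more direct (no auxiliary function, and it sidesteps the flat-region subtlety you correctly handle), whereas your version is a clean application of a textbook optimization primitive. Either way, the query and time bounds match.
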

\begin{proof}
Consider intervals of the form $B_k = \left[ (k-1)  \frac{\gamma}{2}, \ k  \frac{\gamma}{2} \right]$ for $k \in \{1,2, \ldots, \frac{2}{\gamma}\}$, i.e., for analysis, we discretize the cake $[0,1]$ evenly into intervals each of length $\gamma/2$. Write $k^*$ to denote the index with the property that $p_{ij} \in B_{k^*} = \left[ (k^*-1)  \frac{\gamma}{2}, \ k^*  \frac{\gamma}{2} \right]$. Part (a) of Lemma~\ref{switch} implies that $v_j(B_k) < v_i(B_k)$ for all $k < k^*$. Similarly, part (b) of Lemma~\ref{switch} gives us $v_j(B_k) \geq v_i(B_k)$ for all $k >k^*$. 

Therefore, applying binary search, we can, in $\mathcal{O} \left( \log \left( 1/ \gamma \right) \right)$ iterations, find the smallest index $k' \in  \{1,2, \dots, \frac{2}{\gamma}\}$ such that $v_j(B_{k'}) < v_i(B_{k'})$ and $v_j(B_{k'+1}) \geq v_i(B_{k'+1})$. Note that the computed index $k'$ satisfies $k' \in \{k^*-1, k^* \}$: if, for contradiction, we have $k' \leq k^*-2$, then it must be the case that $v_j(B_{k'+1}) < v_i(B_{k'+1})$. This inequality contradicts the selection criterion of $k'$. Also, the inequality $k' \geq k^* +1$ would lead to the contradiction $v_j(B_{k'}) \geq v_i(B_{k'})$.

The value comparisons required to execute the binary search can be performed using eval queries. Hence, in $\mathcal{O}\left( \log (1/\gamma) \right)$ iterations we can find an interval $B_{k'} \cup B_{k'+1}$ of length $\gamma$ that contains $p_{ij}$.
\end{proof}

This corollary implies that we can efficiently compute the set of switching points $P =  \{ p_{i,j} \in [0,1] : 1 \leq i < j \leq  n \}$, up to an arbitrary precision. Next, we will establish the usefulness of $P$.

  \begin{lemma} \label{switchingset}
Let $\Cake$ be a cake-division instance in which the value densities satisfy the monotone likelihood ratio property. Then, in $\Cake$, there exists a social welfare maximizing allocation all of whose cut points belong to the set of switching points $P$. 
\end{lemma}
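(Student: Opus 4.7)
The plan is to start from an optimal allocation that respects the MLRP order---guaranteed to exist by Lemma~\ref{theorem:POorder}---and to argue that, after possibly shifting cut points in a welfare-preserving way, every cut point coincides with some switching point in $P$. I would let $\mathcal{I}^* = \{I_1^*, \ldots, I_n^*\}$ be such an allocation, with $I_i^* = [x_{i-1}^*, x_i^*]$, and denote the distinct values among $x_0^*, \ldots, x_n^*$ by $0 = y_0 < y_1 < \cdots < y_k = 1$. Each slot $[y_{j-1}, y_j]$ is the unique non-degenerate interval of some agent $a_j$, and by the MLRP order $a_1 < a_2 < \cdots < a_k$. Since $\{x_0^*, \ldots, x_n^*\} \subseteq \{y_0, \ldots, y_k\}$, it suffices to show each $y_j \in P$; the endpoints $y_0 = 0$ and $y_k = 1$ lie in $P$ by definition, so the real work lies with the interior cut points.

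For each interior $y_j$ (with $1 \le j \le k-1$), local optimality of $\mathcal{I}^*$ forces $y_j$ to maximize
\[ g_j(y) \coloneqq v_{a_j}(y_{j-1}, y) + v_{a_{j+1}}(y, y_{j+1}) \]
over $y \in [y_{j-1}, y_{j+1}]$. I would analyze $g_j$ through the identity $g_j(y_2) - g_j(y_1) = \int_{y_1}^{y_2}(f_{a_j} - f_{a_{j+1}})$ valid for $y_1 \le y_2$ in this range. By Lemma~\ref{switch} together with the (MLRP-guaranteed) strict positivity of the densities, $f_{a_j}(x) > f_{a_{j+1}}(x)$ for all $x < p_{a_j, a_{j+1}}$ and $f_{a_j}(x) \le f_{a_{j+1}}(x)$ for all $x > p_{a_j, a_{j+1}}$, whence $g_j$ is strictly increasing on $(-\infty, p_{a_j, a_{j+1}}]$ and non-increasing on $[p_{a_j, a_{j+1}}, \infty)$ inside its domain of definition.

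The central step is to rule out $p_{a_j, a_{j+1}} \notin [y_{j-1}, y_{j+1}]$. If $p_{a_j, a_{j+1}} > y_{j+1}$, then $g_j$ is strictly increasing on all of $[y_{j-1}, y_{j+1}]$, so pushing $y_j$ up to $y_{j+1}$---thereby collapsing $I_{a_{j+1}}^*$ to a single point and extending $I_{a_j}^*$---would strictly raise social welfare, contradicting optimality of $\mathcal{I}^*$. The case $p_{a_j, a_{j+1}} < y_{j-1}$ is symmetric. Hence $p_{a_j, a_{j+1}} \in [y_{j-1}, y_{j+1}]$ is a maximizer of $g_j$, so $y_j$ may be reassigned to $p_{a_j, a_{j+1}} \in P$ without decreasing welfare.

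The hard part will be carrying out all these substitutions $y_j \leftarrow p_{a_j, a_{j+1}}$ simultaneously while preserving the monotonicity $y_0 \le y_1 \le \cdots \le y_k$, since under non-strict MLRP the maximizer of $g_j$ need not be unique. I would handle this by noting that $y_{j-1}$ is itself a maximizer of $g_{j-1}$, so the leftmost-maximizer characterization yields $p_{a_{j-1}, a_j} \le y_{j-1}$, while the inclusion $p_{a_j, a_{j+1}} \in [y_{j-1}, y_{j+1}]$ already established gives $p_{a_j, a_{j+1}} \ge y_{j-1}$. Chaining these two bounds produces $p_{a_{j-1}, a_j} \le p_{a_j, a_{j+1}}$ for every $j$, so the simultaneous substitution yields a valid, non-decreasing sequence of cut points. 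The resulting optimal allocation then has all its cut points in $P$, which proves the lemma.
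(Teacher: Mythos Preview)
Your approach has a real gap in the step you label ``symmetric.'' The two cases are \emph{not} symmetric: Lemma~\ref{switch}(a) gives the strict inequality $f_{a_{j+1}}(x) < f_{a_j}(x)$ for $x < p_{a_j,a_{j+1}}$, which is why $g_j$ is strictly increasing there and why the case $p_{a_j,a_{j+1}} > y_{j+1}$ genuinely contradicts optimality. But Lemma~\ref{switch}(b) only gives the non-strict inequality $f_{a_{j+1}}(x) \ge f_{a_j}(x)$ for $x > p_{a_j,a_{j+1}}$, so on that side $g_j$ is merely non-increasing. If $p_{a_j,a_{j+1}} < y_{j-1}$, pushing $y_j$ down to $y_{j-1}$ need not strictly raise welfare; it can leave it unchanged, and you get no contradiction to optimality of $\mathcal{I}^*$. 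Consequently the inclusion $p_{a_j,a_{j+1}} \in [y_{j-1}, y_{j+1}]$ is not established, the inequality $p_{a_j,a_{j+1}} \ge y_{j-1}$ fails, and your chaining argument for the simultaneous substitution collapses.

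The paper sidesteps exactly this asymmetry by an extremal choice: among all MLRP-ordered optimal allocations, pick one $\mathcal{S}$ that already places as many cut points as possible in $P$. Then, for a cut point $s_t \notin P$ with $s_t > p_{ij}$, moving $s_t$ left either to $p_{ij}$ or collapsing the slot is welfare-preserving (via Lemma~\ref{switch}(b)) but strictly increases the number of cut points in $P$, contradicting the extremal choice rather than optimality. Your direct-substitution idea can also be repaired---for instance, one can show $p_{a_j,a_{j+1}} \le y_j$ for every $j$ from the strict half alone, and then replace $y_j$ by the running maximum $\max_{\ell \le j} p_{a_\ell,a_{\ell+1}}$, which is ordered, lies in $P$, and (using separability of the welfare in the cut points) does not decrease welfare---but the proof as written does not do this.
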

\begin{proof}
Among all allocations that maximize social welfare, consider the ones that conform to the MLRP order; Lemma~\ref{theorem:POorder} ensures that this collection is nonempty. Furthermore, among these optimal allocations select one $\mathcal{S} = \{S_1, S_2, \ldots, S_n\}$ that minimizes $\left| \{s_0=0, s_1, s_2, \ldots, s_n=1 \} \setminus P \right|$; here $s_i$s denote the cut points of $\mathcal{S}$. That is, $\mathcal{S}$ is a social welfare maximizing allocation that uses as many points from $P$ as possible. We will show that $\{s_i \}_i \setminus P = \emptyset$ and, hence, the claim follows. 

Towards a contradiction, assume that there exists a cut point $s_t$ of the allocation $\mathcal{S}$ that does not belong to $P$. Let $S_i =[s, s_t]$ and $S_j=[s_t, s']$ be the two \emph{nonempty} intervals in $\mathcal{S}$ that are separated by $s_t$. Interval $S_i$ is to the (immediate) left of $S_j$ and, since the allocations conform to the MLRP order, we that $i <j$. 

Given that $s_t \notin P$, we know that $s_t \neq p_{ij}$; here $p_{ij}$ is the switching point between $f_i$ and $f_j$. We will show that in this case we can always move $s_t$ towards $p_{ij}$ and obtain another social welfare maximizing allocation that uses more cut points from $P$ than $\mathcal{S}$. This contradicts the choice of $\mathcal{S}$ and establishes the stated claim. Towards this goal, consider two complementary cases \\ 

\noindent
\emph{Case (i):} $s_t < p_{ij}$. In this case we can move $s_t$ to the right without decreasing the social welfare. In particular, if $p_{ij} \in S_i \cup S_j = [s, s']$, then, instead of $S_i$ and $S_j$, we can assign intervals $[s, p_{ij}]$ and $[p_{ij}, s']$ to agents $i$ and $j$, respectively. Since ${f_j(x)} < {f_i(x)}$ for all $x \in [s_t, p_{ij}]$ (Lemma~\ref{switch}, part (a)), such an update increases the social welfare. This contradicts the optimality (with respect to social welfare) of $\mathcal{S}$. A similar argument holds if $p_{ij} > s'$. Here, we can assign $[s, s']$ entirely to agent $i$ (and an empty set to agent $j$).  For all $x \leq s' < p_{ij}$, we have ${f_j(x)} < {f_i(x)}$ (Lemma~\ref{switch}, part (a)). Therefore, the reassignment increase the social welfare and leads to a contradiction. \\

\noindent
\emph{Case (ii):} $s_t > p_{ij}$. In this case we can move $s_t$ to the left (towards $p_{ij}$). If we have $p_{ij} \in S_i \cup S_j$, then assigning intervals $[s, p_{ij}]$ and $[p_{ij}, s']$ to agents agents $i$ and $j$, respectively, does not decrease the social welfare (Lemma~\ref{switch}, part (b)). Though, at the same time, it does provide an allocation that uses more cut points from $P$ and, hence, contradicts the choice of $\mathcal{S}$. On the other hand, if  $p_{ij} <s$, then we can assign the entire interval $[s,s']$ to agent $j$. As before, the reassignment does not decrease the social welfare. However, it does decrease the cardinality of the set difference between the cut points and $P$. This contradicts the selection criterion of $\mathcal{S}$.

Hence, the cut points of $\mathcal{S}$ satisfy $\{s_i\}_i \subseteq P$ and the stated claim follows. 
\end{proof}

We now present the main result of this section.

\SocialWelfare*

\begin{proof}
 Given a cake-division instance $\Cake$ with MLRP, write $\mathcal{S}^* = \{S^*_1, S^*_2, \dots, S^*_n\}$ to denote the allocation identified in Lemma~\ref{switchingset}; in particular,  $\mathcal{S}^*$ is a social welfare maximizing allocation whose cut points $\{0=s^*_0, s^*_1, \dots, s^*_n=1\}$ belong to the set of switching points $P$.  
 
For a precision parameter $\eta >0$ and for each switching point $p_{ij} \in P$, we invoke Corollary~\ref{infirmumswitch} to find $\widehat{p}_{ij} \in [0,1]$ with the property that $|\widehat{p}_{ij}-p_{ij}| \leq \frac{\eta}{n \lambda}$. Write $\widehat{P}$ to denote the set of these estimates, $\widehat{P} \coloneqq \{ \widehat{p}_{ij} : 1 \leq i < j \leq n \} \cup \{0,1\}$.\footnote{As in the case of $P$, we include $0$ and $1$ in $\widehat{P}$ for ease of presentation.} Applying Corollary~\ref{infirmumswitch} to each $\widehat{p}_{ij}$, we get that the set $\widehat{P}$ can be computed in $\mathcal{O}({\rm poly}(n, \log \lambda, \log \frac{1}{\eta}))$ time. 

Next we will show that there exists an allocation $\widehat{\mathcal{S}}$ whose cut points are contained in $\widehat{P}$ and this allocation has near-optimal social welfare, $\SW(\widehat{\mathcal{S}}) \geq \SW(\mathcal{S}^*) - \eta$. Specifically, for each cut point $s^*_i$ of $\mathcal{S}^*$, let $\widehat{s}_i$ denote its closest point in $\widehat{P}$, i.e., $\widehat{s}_i \coloneqq \argmin_{ \widehat{p} \in \widehat{P}} | s^*_i - \widehat{p} |$; here we break ties, say, lexicographically. By construction of $\widehat{P}$ we have $|s^*_i - \widehat{s}_i | \leq \frac{\eta}{n \lambda}$, for each $0 \leq i \leq n$.

For the optical allocation $\mathcal{S}^* = \{S^*_1, \ldots, S^*_n \}$ we have $S^*_i = [s^*_{i-1}, s^*_i]$ for all $i \in [n]$. Write allocation $\widehat{\mathcal{S}} \coloneqq \{\widehat{S}_1, \widehat{S}_2, \ldots, \widehat{S}_n\}$, where interval $\widehat{S}_i = [\widehat{s}_{i-1}, \widehat{s}_i]$ is assigned to agent $i \in [n]$. The cut points of allocations $\widehat{\mathcal{S}}$ are contained in $\widehat{P}$. Also, given that $\mathcal{S}^*$ conforms to the MLRP order, so does $\widehat{\mathcal{S}}$. 
 
Using the bounds $|\widehat{s}_i - s^*_i| \leq \frac{\eta}{n \lambda}$, for $0 \leq i \leq n$, and the fact that eval queries are $\lambda$-Lipschitz, we obtain $|v_i(\widehat{S}_i) - v_i(S^*_i)| \leq \frac{\eta}{n}$, for all agents $i \in [n]$. Summing over all $i\in [n]$, we obtain the desired social-welfare bound: $|\sum_{i \in [n]} v_i(\widehat{S}_i) - \sum_{i \in [n]}v_i(S^*_i)| = |\SW(\widehat{\mathcal{S}}) - \SW(\mathcal{S}^*)|  \leq \eta$.

Therefore, there exists an allocation $\widehat{\mathcal{S}}$ with the properties that (i) $\widehat{\mathcal{S}}$ has near-optimal social welfare, (ii) cut points of $\widehat{\mathcal{S}}$ are contained in $\widehat{P}$, and (iii) $\widehat{\mathcal{S}}$ conforms to the MLRP order. To complete the proof of the theorem we will show that, among all allocations that satisfy properties (ii) and (iii), we can find one that maximizes social welfare. We accomplish this algorithmic result by a simple dynamic program. 

Recall that cardinality of the set $\widehat{P}$ is $\mathcal{O}(n^2)$ and this set can be computed in $\mathcal{O}({\rm poly}(n, \log \lambda, \log \frac{1}{\eta}))$ time using eval queries. We index the elements of the computed set $\widehat{P} = \{ \widehat{p}_t \}_t$ such that $0 = \widehat{p}_0  < \widehat{p}_1 < \ldots < \widehat{p}_{|\widehat{P}|} = 1$  

For each $k \in [n]$ and $1 \leq t \leq |\widehat{P}|$, we write $M(k, t)$ to denote the maximum social welfare that one can achieve by allocating the interval $[0, \widehat{p}_t]$ among the first $k$ agents (in order).\footnote{By convention, the agents are indexed following the MLRP order.}

The following recursive equation for $M(k, t)$ gives us the desired dynamic program
\begin{align*}
M(k, t) & \coloneqq \max_{1 \leq t' \leq t } \left\{ M(k-1, t') + v_k(\widehat{p}_{t'}, \widehat{p}_t) \right\}
\end{align*}

Here, we initialize $ M(1,t) \coloneqq v_1(0, \widehat{p}_t) $ for all $ 1 \leq t \leq |\widehat{P}|$. One can directly show, via induction, that $M(n, |\widehat{P}|)$ is equal to the optimal social welfare among allocations that satisfy the above-mentioned properties (ii) and (iii). That is, $M(n, |\widehat{P}|) \geq \SW(\mathcal{S}^*) - \eta$. Hence, the dynamic program gives us the desired allocation. For the runtime analysis, note that the dynamic program runs in $\mathcal{O}( n^2 |\widehat{P}|)$ time and only requires eval queries. 

Overall, we can find an allocation with social welfare $\eta$ close to the optimal in time $\mathcal{O}({\rm poly}(n, \log \lambda, \log \frac{1}{\eta}))$. Since the precision parameter $\eta$ can be driven exponentially close to zero, in time that is polynomial in $\log \frac{1}{\eta}$ (i.e., in the bit complexity of $\eta$) the stated claim follows. 
\end{proof}

\begin{remark2}
For cake-division instances with MLRP, Theorem~\ref{theorem:SocialWelfare} also shows that we can efficiently maximize social welfare among all cake divisions, and not necessarily among allocations. In particular, let $\mathcal{D}^* = \{D^*_1, D^*_2, \dots, D^*_n\}$ denote a cake division that maximizes social welfare in a given instance. We know that there exists an allocation $\mathcal{J}=\{J_1, \ldots, J_n \}$ such that $v_i(J_i) \geq v_i(D_i)$, for all $i \in [n]$ (Theorem~\ref{theorem:POorder}). Summing over $i$ we get $\SW(J) \geq \sum_{i=1}^n v_i(D^*_i)$.  Therefore, the allocation computed through Theorem~\ref{theorem:SocialWelfare} is (near) optimal among all cake divisions. 
\end{remark2}

\section{Egalitarian Welfare} \label{section:max-min}

This section presents an algorithm for maximizing egalitarian welfare in cake-division instances with MLRP (Theorem~\ref{theorem:Maxmin}).



Towards this end, we will define a ``moving-knife'' procedure that leads to the desired algorithm. Given a target value $\tau >0$, we consider cut points obtained by iteratively selecting intervals that are of value $\tau$ to agents $1$ through $n$, respectively. That is, the first cut point $x_1$ is chosen to satisfy $v_1(0,x_1) = \tau$. Inductively, for $i \geq 2$, the cut point $x_i$ is selected such that $v_i(x_{i-1}, x_i ) = \tau$. Recall that the agents are indexed following the MLRP order. To specify these points, we recursively define $n$ functions, $\MK_i: \mathbb{R}_+ \mapsto [0,1]$, for $i \in [n]$: 
\begin{align*}
\MK_1(\tau) &\coloneqq  \Cut_1(0, \tau) \nonumber \\ 
\MK_i(\tau) &\coloneqq  \Cut_i(\MK_{i-1}(\tau), \tau)
\end{align*}

Here, $\MK_n(\tau)$ denotes the last cut point which we obtain by executing the moving-knife procedure with target value of $\tau \in [0,1]$ for every agent. Given that $\MK_i$s can be expressed as a composition of cut queries, these functions can be efficiently computed in the Robertson-Webb query model.

The following proposition establishes that $\MK_n$ is monotonic.  


\begin{proposition} \label{proposition:MKn}
For any cake-division instance $\Cake = \langle [n], \{f_i\}_i \rangle$ and each $i \in [n]$, the function $\MK_i$ is monotonically increasing. 
\end{proposition}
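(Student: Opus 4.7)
The plan is to argue by induction on the agent index $i$, using the additivity of $v_i$ and the ``leftmost-point'' convention of the $\Cut$ query. Write, for brevity, $g_i(\tau) := \MK_i(\tau)$. The base case is immediate: $g_1(\tau)=\Cut_1(0,\tau)$, and since $v_1(0,\cdot)$ is a non-decreasing function of the right endpoint (and becomes $1$ by convention when $\tau > v_1(0,1)$), whenever $\tau'\geq \tau$ we must have $g_1(\tau')\geq g_1(\tau)$; otherwise $\tau' = v_1(0,g_1(\tau')) \leq v_1(0,g_1(\tau)) = \tau$, a contradiction.

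For the inductive step, assume $g_{i-1}$ is monotonically non-decreasing and fix $\tau'\geq \tau$. Set $a = g_{i-1}(\tau)$, $a' = g_{i-1}(\tau')$, $y = g_i(\tau) = \Cut_i(a,\tau)$, and $y' = g_i(\tau') = \Cut_i(a',\tau')$; by the inductive hypothesis $a'\geq a$. The goal is to establish $y'\geq y$. I will first dispose of the truncation cases: if $y'=1$ there is nothing to prove, and if $y=1$ then $v_i(a,1)\leq \tau$, so by monotonicity of $v_i$ in its left endpoint and $a'\geq a$ we get $v_i(a',1)\leq \tau\leq \tau'$, forcing $y'=1$ as well by the Robertson-Webb convention.

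In the remaining case both cut points lie strictly below $1$, so $v_i(a,y)=\tau$ and $v_i(a',y')=\tau'$. Suppose towards a contradiction that $y'<y$. Since $y'\geq a'\geq a$, the interval $[a,y']$ is well defined, and sigma additivity gives
\begin{equation*}
v_i(a,y') \;=\; v_i(a,a') + v_i(a',y') \;=\; v_i(a,a') + \tau' \;\geq\; \tau',
\end{equation*}
while the monotonicity of $v_i(a,\cdot)$ in the right endpoint and $y'<y$ give $v_i(a,y')\leq v_i(a,y)=\tau$. Combining these two inequalities yields $\tau'\leq \tau$, contradicting $\tau'\geq \tau$ unless $\tau'=\tau$ and $v_i(a,a')=0$. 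In this last borderline situation $v_i(a,y')=\tau$ holds while $y'<y$, which contradicts the ``leftmost'' selection rule built into $\Cut_i(a,\tau)$ (whose output $y$ is by definition the smallest point in $[a,1]$ attaining value $\tau$). Hence $y'\geq y$, completing the induction.

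I do not expect any real obstacle here: the statement is almost syntactic once one unpacks that $\Cut_i(\ell,\tau)$ is jointly monotone in $(\ell,\tau)$, which is itself a direct consequence of sigma additivity plus the leftmost convention. The only mildly delicate point is the bookkeeping around the truncation-to-$1$ convention and the equality case $\tau'=\tau$, which is handled cleanly by invoking the leftmost-point definition of $\Cut_i$.
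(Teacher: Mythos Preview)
Your proof is correct and follows the same inductive approach as the paper. The paper's argument is terser: it simply chains $\Cut_i(\MK_{i-1}(\tau),\tau)\leq \Cut_i(\MK_{i-1}(\tau'),\tau)\leq \Cut_i(\MK_{i-1}(\tau'),\tau')$, appealing to nonnegativity of the densities for the joint monotonicity of $\Cut_i$, whereas you unpack that monotonicity explicitly (including the truncation bookkeeping and the leftmost-point convention).
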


\begin{proof}
Consider two target values $\tau, \tau' \in \mathbb{R}_+$ such that $\tau \leq \tau'$. We will show by inducting over $i \in [n]$ that $\MK_i(\tau) \leq \MK_i(\tau')$. For the base case $i =1$, observe that $\Cut_1(0,\tau) \leq \Cut_1(0, \tau')$, since $f_1(x) \geq 0$ for all $x \in [0,1]$, In other words, we have $\MK_1(\tau) \leq \MK_1(\tau')$. Next, with the induction hypothesis $\MK_{i-1}(\tau) \leq \MK_{i-1}(\tau')$ in hand, we will establish $\MK_{i}(\tau) \leq \MK_{i}(\tau')$:
\begin{align*}
\MK_i(\tau) =  \Cut_i(\MK_{i-1}(\tau), \tau) \leq \Cut_i(\MK_{i-1}(\tau'), \tau) \leq \Cut_i(\MK_{i-1}(\tau'), \tau') = \MK_{i}(\tau')
\end{align*}
Here, we use the fact that the value densities are nonnegative. This establishes the stated claim. 
\end{proof}

We now establish the main result for egalitarian welfare.
\Maxmin*

\begin{proof} 
For the given cake-division instance $\Cake$, write $\mathcal{R}^* = \{R^*_1, R^*_2, \ldots, R^*_n\}$ to denote an allocation that maximizes egalitarian  welfare among all allocations in $\Cake$. Without loss of generality we can assume that $\mathcal{R}^*$ is Pareto optimal an conforms to the MLRP order (Lemma~\ref{theorem:POorder}). Write $\tau^* \coloneqq \RW(\mathcal{R}^*) =  \min_i v_i(R^*_i)$.  

Note that $\MK_n(0) = 0$ and $\MK_n(1) = 1$; by convention, $\Cut_i( \ell, \tau)$ is truncated to $1$ iff $\tau$ is greater than the entire value to the right of $\ell$, i.e., if $v_i(\ell, 1) < \tau$.

We will say that a target value $\tau$ is \emph{feasible} iff $v_i(\MK_{i-1}(\tau),\MK_i(\tau)) = \tau$ for all $i \in [n]$. That is, executing the moving-knife procedure with target value $\tau$ ensures that each agent receives an interval $[\MK_{i-1}(\tau), \MK_i (\tau)]$ of value $\tau$. Otherwise, we say that $\tau$ is \emph{infeasible}. Note that if $\tau$ is infeasible, then for some $i \in [n]$ the $\MK_i(\tau)$ gets truncated at one--in such a case, $i$ and subsequent agents $j >i$ do not receive intervals of value $\tau$. Hence, $\tau=0$ is a feasible target value, whereas $\tau=1$ is infeasible. Note that we can efficiently determine whether a given $\tau$ is feasible or not.\footnote{As mentioned previously, the functions $\MK_i$s can be computed efficiently in the Robertson-Webb model.}

Proposition~\ref{proposition:MKn} implies that we have monotonicity with respect to the feasibility of target values. Specifically, ({\rm P}): for all $\tau \leq \tau'$,  if $\tau$ is infeasible, then so is $\tau'$. To establish property ({\rm P}) we note that, for an infeasible $\tau$, there exists, by definition, an agent $i \in [n]$ such that $v_i(\MK_{i-1}(\tau), \MK_i(\tau)) < \tau$. This happens when $\Cut_i(\MK_{i-1}(\tau), \tau) = \MK_i(\tau)$ gets truncated to $1$. Applying Proposition~\ref{proposition:MKn} to agent $i-1$, with $\tau \leq \tau'$, we obtain $\MK_{i-1}(\tau) \leq \MK_{i-1}(\tau')$. Therefore, for agent $i$ the value of the remaining cake $[\MK_{i-1}(\tau'),1]$ is less than $\tau'$. Once again $\Cut_i(\MK_{i-1}(\tau'), \tau')$ gets truncated at $1$ and we obtain $v_i \left(\MK_{i-1}(\tau'), \MK_{i}(\tau') \right) < \tau'$, i.e., $\tau'$ is infeasible.


With a precision parameter $\eta >0$ in hand, we perform binary search over integer multiples of ${\eta}$, i.e., over the set $\left\{k \eta \right\}_{k = 0}^{1/\eta}$. Recall that  the values of the agents are normalized and, hence, $\tau^* \in [0,1]$. Also, $\tau=0$ is feasible, while $\tau=1$ is infeasible., 

This switch in feasibility (between $0$ and $1$) along with property ({\rm P}), imply that there exists a unique index $k_0 \in \{0, 1, \ldots, 1/\eta \}$ with the property that $k_0 \eta$ is feasible and $(k_0+1) \eta$ is infeasible. We can identify $k_0$ by $\mathcal{O}\left( \log (1/\eta) \right)$ iterations of binary search. 

The relevant observation here is that $\tau^* \in [k_0 \eta, (k_0 + 1) \eta]$. Indeed, the optimal value $\tau^*$ is feasible: $\mathcal{R}^*=\{R^*_1, \ldots, R^*_n\}$ conforms to the MLRP order and $v_i(R^*_i) \geq \tau^*$ for all $i \in [n]$. Hence, property ({\rm P}) ensures that $\tau^*$ cannot be greater than the infeasible target $(k_0 + 1) \eta$. Furthermore, the optimality of $\tau^*$ gives us $ \tau^* \geq k_0 \eta$.

For any feasible $\tau$, the moving-knife procedure provides an allocation with egalitarian welfare at $\tau$. Specifically, for the computed index $k_0$, consider allocation $\widehat{\mathcal{R}} = \{\widehat{R}_1, \widehat{R}_2, \ldots, \widehat{R}_n\}$ in which $\widehat{R}_i = [\MK_{i-1}(k_0 \eta), \MK_i(k_0 \eta) ]$, for $i \in [n-1]$ and $\widehat{R}_n = \left[\MK_{n-1}(k_0 \eta), \MK_{n}(k_0 \eta)  \right] \cup [\MK_{n}(k_0 \eta), 1]$. For notational convince here we set $\MK_0(k_0 \eta) = 0$. Feasibility of $k_0 \eta$ ensures that $v_i(\widehat{R}_i) = k_0 \eta \geq \tau^* - \eta$ for all $i \in [n]$. Therefore, we have $\RW(\widehat{\mathcal{R}}) \geq \tau^* - \eta$. 

For the runtime analysis, note that the binary search finds the desired index $k_0$ in $\mathcal{O}({\rm poly}(n,  \log \frac{1}{\eta}))$ time. The dependency on $\lambda$ stems from the fact that the bit-complexity of the output (i.e., of the computed cut points) can be $\mathcal{O} \left( \log \frac{\eta}{\lambda}\right)$. 

Overall, these arguments show that we can find an allocation $\widehat{\mathcal{R}}$ with egalitarian welfare $\eta$ close to the optimal in time $\mathcal{O}({\rm poly}(n, \log \lambda, \log \frac{1}{\eta}))$. The precision parameter $\eta$ can be driven exponentially close to zero in time that is polynomial in $\log \frac{1}{\eta}$ and, hence, the stated claim follows.
\end{proof}



\section{Nash Social Welfare } \label{section:nsw}
This section presents an FPTAS for maximizing Nash social welfare in cake-division instances with MLRP.

\NashSocialWelfare*

\begin{proof}
	Given a cake-division instance $\Cake$ with MLRP, write $\mathcal{A}^* = \{A^*_1, A^*_2, \dots, A^*_n\}$ to denote an allocation that maximizes the Nash social welfare in $\Cake$. Arunachaleswaran et al. \cite{arunachaleswaran2019fair} have shown that the bundles in any Nash optimal allocation $\mathcal{A}^*= \{A^*_1, A^*_2, \dots, A^*_n\}$ satisfy $v_i(A^*_i) \geq \frac{1}{4} v_i(A^*_j)$ for all $i,j \in[n]$; this result holds even in the absence of MLRP. For a fixed agent $i \in [n]$, we  sum the inequalities $v_i(A^*_i) \geq \frac{1}{4} v_i(A^*_j)$ over all $j \in [n]$ to obtain $v_i(A^*_i) \geq \frac{1}{4n}$. {Recall that $v_i(0,1)=1$ for all $i \in [n]$.} 


For an approximation parameter $\varepsilon >0$, we consider points $c_0= 0 <  c_1 < c_2 < \ldots < c_N=1$ such that  $v_i(c_{t-1}, c_{t}) \leq \frac{\varepsilon}{8n}$ for all $t \in [N]$ and for all agents $i \in [n]$. In particular, we start with $c_0=0$ and iteratively select points $c_t \coloneqq \min_i \ \Cut_i \left(c_{t-1},  \frac{\varepsilon}{8n} \right)$, for all $t \geq 1$. Note that the following upper bound holds for total number of selected points $N \leq \frac{8 n^2}{\varepsilon}$. 

We round each cut point in the optimal allocation $\mathcal{A}^*$ to its closest point in the collection $\{c_i\}_{i=0}^N$. This leads us to another allocation, $\widehat{\mathcal{A}} = \{\widehat{A}_1, \widehat{A}_2, \ldots, \widehat{A}_n\}$, wherein interval $\widehat{A}_i$ is assigned to agent $i \in [n]$. By construction, the cut points of $\widehat{\mathcal{A}}$ are contained in the set $\{c_t\}_{t=0}^N$. Also, since $\mathcal{A}^*$  conforms to the MLRP order (Lemma~\ref{theorem:POorder}), so does $\widehat{\mathcal{A}}$. Next we show that the Nash social welfare of $\widehat{\mathcal{A}}$ is comparable to that of $\mathcal{A}^*$. For all $i \in [n]$, we have 
	\begin{align*}
	v_i(\widehat{A}_i) &\geq v_i(A^*_i) - \frac{\varepsilon}{4n} \tag{since $v_i(c_{t-1}, c_{t}) \leq \frac{\varepsilon}{8n}$ for all $t \in [N]$}\\
	& \geq v_i(A^*_i) - \varepsilon v_i(A^*_i)  \tag{since $v_i(A^*_i) \geq \frac{1}{4n}$ } \\
	& \geq (1 - \varepsilon) \ v_i(A^*_i) 
	\end{align*}
	Multiplying the above inequality over  $i\in [n]$, we obtain $\NSW(\widehat{\mathcal{A}}) = \left( \prod_{i=1}^n v_i(\widehat{A}_i) \right)^{1/n} \geq (1-\varepsilon) \left( \prod_{i=1}^n v_i(A^*_i) \right)^{1/n} = (1 - \varepsilon) \ \NSW(\mathcal{A}^*)$.
	
	Therefore, there exists an allocation $\widehat{\mathcal{A}}$ with the properties that (i) $\widehat{\mathcal{A}}$ has Nash social welfare at least $(1- \varepsilon)$ times the optimal (Nash social welfare), (ii) the cut points of $\widehat{\mathcal{A}}$ are contained in the set $\{ c_t \}_t$, and (iii) $\widehat{\mathcal{A}}$ conforms to the MLRP order. To complete the proof of the theorem we will show that, among all the allocations that satisfy (ii) and (iii), we can find (via a dynamic program) one that maximizes the Nash social welfare. 
	
	For $t \in [N]$ and $k \in [n]$, we write $H(k,t)$ to denote the optimal Nash product (i.e., the product of valuations) that one can achieve by allocating the interval $[0,c_t]$ among the first $k$ agents (in order).
	\begin{align*}
	H(k,t) \coloneqq \max_{1 \leq t' \leq t} \left\{  H(k-1, c_{t'}) \cdot v_k(c_{t'},c_t)  \right\}
	\end{align*}
	Here, we initialize the dynamic program by setting $H(1,t) \coloneqq v_1(0,c_t)$ for all $1 \leq t \leq N$. One can show, via induction, that $H(n, N)$ is the optimal Nash product among allocations that satisfy properties (ii) and (iii). That is, $(H(n, N))^{1/n} \geq (1 - \varepsilon) \ \NSW(\mathcal{A}^*)$. Hence, the dynamic program computes the desired allocation using $\mathcal{O}(n^2 N)$ eval queries.
	
	Therefore, we can find an allocation with Nash social welfare at least $(1- \varepsilon)$ times the optimal in $\mathcal{O} \left( {\rm poly} \left(n, 1/\varepsilon, \log \lambda \right) \right)$ time; the dependency on $\log \lambda$ stems from the fact that the bit-complexity of the output (i.e., of the computed cut points) can be $\mathcal{O} \left( \log \frac{\varepsilon}{\lambda}\right)$.
Overall, we get that  maximizing Nash social welfare admits an FPTAS under MLRP.
\end{proof}

\section{Conclusion and Future Work}
The current work studies algorithmic aspects of contiguous cake division under the monotone likelihood ratio property. The scope of this property ensures that the developed algorithms are applicable in various cake-division settings. We also note that while under MLRP the value densities must have full support, the developed framework is somewhat robust to this requirement. For example, our results extend to the class of non-full-support value densities considered in \cite{alijani2017envy}. In particular, Alijani et al. \cite{alijani2017envy} established that a contiguous envy-free cake division can be efficiently computed if every agent uniformly values a single interval and these intervals satisfy an \emph{ordering} property. Appendix \ref{appendix:structured-perturbations-for-MLRP} shows that here one can modify the value densities to a small degree and obtain MLRP (with full support).  Hence, applying our results, one can efficiently compute an allocation with arbitrary small envy in the modified instance and, hence, also in the original one.  Generalizing such ideas to address, say, value densities that bear first-order stochastic dominance is an interesting direction of future work.

For instances with MLRP, finding an allocation that maximizes various welfare notions among the set of envy-free allocations is an important thread for future work. Another relevant notion of fairness in the context of cake cutting is that of a \emph{perfect division}~\cite{alon1987splitting}. In such a division $\mathcal{D} = \{D_1, D_2, \ldots, D_n\}$, each agent $i \in [n]$ values every piece at $1/n$, i.e., $v_i(D_j)  = 1/n$ for all $i, j \in [n]$. In contrast to the other solution concepts considered in the present paper, perfect divisions are not guaranteed to exist under the contiguity requirement; a perfect \emph{allocation} might not exist even with MLRP (Appendix \ref{appendix:perfect-cuts-nonexample}).  However, Alon~\cite{alon1987splitting} has shown that a perfect division with $n(n-1)$ cuts always exists. Perfect cake divisions are particularly useful since they lead to truthful mechanisms for cake division \cite{mossel2010truthful, chen2013truth}. Hence, developing efficient algorithms to find (noncontiguous) perfect divisions under MLRP is a relevant thread for future work. 

More broadly, it would be interesting to identify tractable classes through MLRP in other computational social choice contexts, such as discrete fair division and voting. 








\section*{Acknowledgements}
We thank Manjunath Krishnapur for helpful discussions and references. Siddharth Barman gratefully acknowledges the support of a Ramanujan Fellowship (SERB - {SB/S2/RJN-128/2015}) and a Pratiksha Trust Young Investigator Award. Nidhi Rathi's research is generously supported by an IBM PhD Fellowship.

\bibliographystyle{alpha}
\bibliography{references}

\newcommand{\etalchar}[1]{$^{#1}$}
\begin{thebibliography}{LMMS04}

\bibitem[ABKR19]{arunachaleswaran2019fair}
Eshwar~Ram Arunachaleswaran, Siddharth Barman, Rachitesh Kumar, and Nidhi
  Rathi.
\newblock Fair and efficient cake division with connected pieces.
\newblock In {\em International Conference on Web and Internet Economics},
  pages 57--70. Springer, 2019.

\bibitem[ADH13]{aumann2013computing}
Yonatan Aumann, Yair Dombb, and Avinatan Hassidim.
\newblock Computing socially-efficient cake divisions.
\newblock In {\em Proceedings of the 2013 international conference on
  Autonomous agents and multi-agent systems}, pages 343--350. International
  Foundation for Autonomous Agents and Multiagent Systems, 2013.

\bibitem[AFG{\etalchar{+}}17]{alijani2017envy}
Reza Alijani, Majid Farhadi, Mohammad Ghodsi, Masoud Seddighin, and Ahmad~S
  Tajik.
\newblock Envy-free mechanisms with minimum number of cuts.
\newblock In {\em Thirty-First AAAI Conference on Artificial Intelligence},
  2017.

\bibitem[Alo87]{alon1987splitting}
Noga Alon.
\newblock Splitting necklaces.
\newblock {\em Advances in Mathematics}, 63(3):247--253, 1987.

\bibitem[AM16]{aziz2016discrete}
Haris Aziz and Simon Mackenzie.
\newblock A discrete and bounded envy-free cake cutting protocol for any number
  of agents.
\newblock In {\em 2016 IEEE 57th Annual Symposium on Foundations of Computer
  Science (FOCS)}, pages 416--427. IEEE, 2016.

\bibitem[Ath01]{athey2001single}
Susan Athey.
\newblock Single crossing properties and the existence of pure strategy
  equilibria in games of incomplete information.
\newblock {\em Econometrica}, 69(4):861--889, 2001.

\bibitem[BCH{\etalchar{+}}12]{bei2012optimal}
Xiaohui Bei, Ning Chen, Xia Hua, Biaoshuai Tao, and Endong Yang.
\newblock Optimal proportional cake cutting with connected pieces.
\newblock In {\em Twenty-sixth AAAI Conference on Artificial Intelligence},
  2012.

\bibitem[BN17]{branzei2017query}
Simina Br{\^a}nzei and Noam Nisan.
\newblock The query complexity of cake cutting.
\newblock {\em arXiv preprint arXiv:1705.02946}, 2017.

\bibitem[BT95]{brams1995envy}
Steven~J Brams and Alan~D Taylor.
\newblock An envy-free cake division protocol.
\newblock {\em The American Mathematical Monthly}, 102(1):9--18, 1995.

\bibitem[BT96]{brams1996fair}
Steven~J Brams and Alan~D Taylor.
\newblock {\em Fair Division: From cake-cutting to dispute resolution}.
\newblock Cambridge University Press, 1996.

\bibitem[CB02]{casella2002statistical}
George Casella and Roger~L Berger.
\newblock {\em Statistical inference}, volume~2.
\newblock Duxbury Pacific Grove, CA, 2002.

\bibitem[CLPP11]{cohler2011optimal}
Yuga~J Cohler, John~K Lai, David~C Parkes, and Ariel~D Procaccia.
\newblock Optimal envy-free cake cutting.
\newblock In {\em Twenty-Fifth AAAI Conference on Artificial Intelligence},
  2011.

\bibitem[CLPP13]{chen2013truth}
Yiling Chen, John~K Lai, David~C Parkes, and Ariel~D Procaccia.
\newblock Truth, justice, and cake cutting.
\newblock {\em Games and Economic Behavior}, 77(1):284--297, 2013.

\bibitem[CP12]{cechlarova2012computability}
Katar{\'\i}na Cechl{\'a}rov{\'a} and Eva Pill{\'a}rov{\'a}.
\newblock On the computability of equitable divisions.
\newblock {\em Discrete Optimization}, 9(4):249--257, 2012.

\bibitem[DQS12]{deng2012algorithmic}
Xiaotie Deng, Qi~Qi, and Amin Saberi.
\newblock Algorithmic solutions for envy-free cake cutting.
\newblock {\em Operations Research}, 60(6):1461--1476, 2012.

\bibitem[DS61]{dubins1961cut}
Lester~E Dubins and Edwin~H Spanier.
\newblock How to cut a cake fairly.
\newblock {\em The American Mathematical Monthly}, 68(1P1):1--17, 1961.

\bibitem[EFS14]{elkind2014characterization}
Edith Elkind, Piotr Faliszewski, and Piotr Skowron.
\newblock A characterization of the single-peaked single-crossing domain.
\newblock In {\em Proceedings of the AAAI Conference on Artificial
  Intelligence}, volume~28, 2014.

\bibitem[Fol67]{foley1967resource}
Duncan~K Foley.
\newblock Resource allocation and the public sector.
\newblock 1967.

\bibitem[GLS12]{grotschel2012geometric}
Martin Gr{\"o}tschel, L{\'a}szl{\'o} Lov{\'a}sz, and Alexander Schrijver.
\newblock {\em Geometric algorithms and combinatorial optimization}, volume~2.
\newblock Springer Science \& Business Media, 2012.

\bibitem[GS58]{stern1958puzzle}
George Gamow and Marvin Stern.
\newblock Puzzle-math, 1958.

\bibitem[GS96]{gans1996majority}
Joshua~S Gans and Michael Smart.
\newblock Majority voting with single-crossing preferences.
\newblock {\em Journal of public Economics}, 59(2):219--237, 1996.

\bibitem[HGS19]{hollender2019contiguous}
A~Hollender, PW~Goldberg, and W~Suksompong.
\newblock Contiguous cake cutting: Hardness results and approximation
  algorithms.
\newblock 2019.

\bibitem[Jew91]{jewitt1991applications}
Ian Jewitt.
\newblock Applications of likelihood ratio orderings in economics.
\newblock {\em Lecture Notes-Monograph Series}, pages 174--189, 1991.

\bibitem[KLP13]{kurokawa2013cut}
David Kurokawa, John~K Lai, and Ariel~D Procaccia.
\newblock How to cut a cake before the party ends.
\newblock In {\em Twenty-Seventh AAAI Conference on Artificial Intelligence},
  2013.

\bibitem[KN79]{kaneko1979nash}
Mamoru Kaneko and Kenjiro Nakamura.
\newblock The nash social welfare function.
\newblock {\em Econometrica: Journal of the Econometric Society}, pages
  423--435, 1979.

\bibitem[LM{\etalchar{+}}01]{larsen2001introduction}
Richard~J Larsen, Morris~L Marx, et~al.
\newblock {\em An introduction to mathematical statistics and its
  applications}, volume~5.
\newblock Prentice Hall Upper Saddle River, NJ, 2001.

\bibitem[LMMS04]{lipton2004approximately}
Richard~J Lipton, Evangelos Markakis, Elchanan Mossel, and Amin Saberi.
\newblock On approximately fair allocations of indivisible goods.
\newblock In {\em Proceedings of the 5th ACM conference on Electronic
  commerce}, pages 125--131. ACM, 2004.

\bibitem[Mou04]{moulin2004fair}
Herv{\'e} Moulin.
\newblock {\em Fair division and collective welfare}.
\newblock MIT press, 2004.

\bibitem[MT10]{mossel2010truthful}
Elchanan Mossel and Omer Tamuz.
\newblock Truthful fair division.
\newblock In {\em International Symposium on Algorithmic Game Theory}, pages
  288--299. Springer, 2010.

\bibitem[NJ50]{nash1950bargaining}
John~F Nash~Jr.
\newblock The bargaining problem.
\newblock {\em Econometrica: Journal of the Econometric Society}, pages
  155--162, 1950.

\bibitem[Pro15]{procaccia2015cake}
Ariel~D Procaccia.
\newblock Cake cutting algorithms.
\newblock In {\em Handbook of Computational Social Choice, chapter 13}.
  Citeseer, 2015.

\bibitem[RW98]{robertson1998cake}
Jack Robertson and William Webb.
\newblock {\em Cake-cutting algorithms: Be fair if you can}.
\newblock AK Peters/CRC Press, 1998.

\bibitem[Sim80]{simmons1980private}
FW~Simmons.
\newblock {P}rivate communication to {M}ichael {S}tarbird.
\newblock 1980.

\bibitem[Ste48]{S48problem}
Hugo Steinhaus.
\newblock {The Problem of Fair Division}.
\newblock {\em Econometrica}, 16:101--104, 1948.

\bibitem[Str80]{stromquist1980cut}
Walter Stromquist.
\newblock How to cut a cake fairly.
\newblock {\em The American Mathematical Monthly}, 87(8):640--644, 1980.

\bibitem[Str08]{stromquist2008envy}
Walter Stromquist.
\newblock Envy-free cake divisions cannot be found by finite protocols.
\newblock {\em the electronic journal of combinatorics}, 15(1):11, 2008.

\bibitem[Su99]{edward1999rental}
Francis~Edward Su.
\newblock Rental harmony: Sperner's lemma in fair division.
\newblock {\em The American mathematical monthly}, 106(10):930--942, 1999.

\bibitem[SW14]{saumard2014log}
Adrien Saumard and Jon~A Wellner.
\newblock Log-concavity and strong log-concavity: a review.
\newblock {\em Statistics surveys}, 8:45, 2014.

\bibitem[Wel85]{weller1985fair}
Dietrich Weller.
\newblock Fair division of a measurable space.
\newblock {\em Journal of Mathematical Economics}, 14(1):5--17, 1985.

\bibitem[WW19]{wang2019cake}
Chenhao Wang and Xiaoying Wu.
\newblock Cake cutting with single-peaked valuations.
\newblock In {\em International Conference on Combinatorial Optimization and
  Applications}, pages 507--516. Springer, 2019.

\end{thebibliography}

\appendix

\section{Implications of MLRP}

\subsection{Proof of Lemma~\ref{lemma:R2R}}
\label{appendix:proof-race-to-ratio}
In this section we restate and prove Lemma~\ref{lemma:R2R}. 

\RacetotheRatios*

 \begin{proof}
 Given that $f_i$ and $f_j$ bear MLRP, we will first prove that they satisfy property (i). For  $b \leq c$, we have 
\begin{align} \label{ineq:MLRP}
 	\frac{f_{j}(x)}{f_i(x)} \leq 	\frac{f_{j}(b)}{f_i(b)} \leq 	\frac{f_{j}(y)}{f_i(y)} \quad \text{for all} \ x \in [a,b] \ \ \text{and for all} \ y \in [c,d]
\end{align}

Recall that MLRP value densities are, by definition, positively valued, $f_i(x) > 0$ for all $x \in [0,1]$. Therefore, equation (\ref{ineq:MLRP}) gives us $f_{j}(x) \leq  \ \frac{f_{j}(b)}{f_i(b)}  \ f_i(x)$ for all $x \in [a,b]$. Integrating we obtain $\int \limits_a^b f_{j}(x) dx \leq \int \limits_a^b \frac{f_{j}(b)}{f_i(b)}  \ f_i(x) dx$ and, hence,\footnote{Recall that the integral of a positive function is positive.} 
\begin{align}
\frac{\int_a^b f_{j}(x)dx}{\int_a^b f_i(x)dx} \leq & \ \frac{f_{j}(b)}{f_i(b)} \label{ineq:left-sandwich}
\end{align}

Starting with equation (\ref{ineq:MLRP}) and integrating over the interval $[c,d]$, we can also establish the following equality
 \begin{align}
   \frac{f_{j}(b)}{f_i(b)} \leq  \frac{\int_c^d f_{j}(x)dx}{\int_c^d f_i(x)dx} \label{ineq:right-sandwich}
 \end{align}
 
Equations (\ref{ineq:left-sandwich}) and (\ref{ineq:right-sandwich}) lead to property (i):
 \begin{align*}
  \frac{\int_a^b f_{j}(x)dx}{\int_a^b f_i(x)dx}  \leq  \frac{\int_c^d f_{j}(x)dx}{\int_c^d f_i(x)dx}. 
 \end{align*}
 
 Next we will prove that properties (i) and (ii) are equivalent. Since $f_i$ and $f_j$ satisfy property (i), the equivalence of properties (i) and (ii) will imply that they satisfy property (ii) as well; thereby completing the proof.
 
To establish that property (i) implies property (ii), we instantiate (i) over the intervals $[a,x]$ and $[x,b]$: $\frac{\int_a^x f_{j}}{\int_a^x f_{i}} \leq \frac{\int_x^b f_{j}}{\int_x^b f_{i}}$. Cross multiplying the terms\footnote{Recall that the value densities are strictly positive.} and adding one to both sides of the inequality, gives us $\frac{\int_a^x f_{j}}{\int_x^b f_{j}} + 1 \leq \frac{\int_a^x f_{i}}{\int_x^b f_{i}} + 1$. Simplifying further we obtain  $\frac{\int_a^x f_{j} + \int_x^b f_{j}}{\int_x^b f_{j}} \leq \frac{\int_a^x f_{i} + \int_x^b f_{i}}{\int_x^b f_{i}}$. This gives us the desired bound
\begin{align}
\frac{\int_x^b f_{i}}{\int_a^b f_i} \leq \frac{\int_x^b f_{j}}{\int_a^b f_{j}} \label{ineq:prop-three}
\end{align}

For the reverse direction, i.e., (ii) implies (i), we begin by cross multiplying the terms in inequality (\ref{ineq:prop-three}) and subtracting  one from both sides yield $\frac{\int_a^b f_{j}}{\int_x^b f_{j}} - 1  \leq \frac{\int_a^b f_i }{ \int_x^b f_i} - 1$. This inequality simplifies to $\frac{\int_a^x f_{j}}{\int_x^b f_{j}} \leq \frac{\int_a^x f_i}{\int_x^b f_i}$. That is, we obtain property (i) for intervals $[a,x]$ and $[x,b]$. Reapplying this bound (with $a$, $x$, and $b$ set appropriately) shows that (ii) implies (i).  This completes the proof. 
\end{proof}

\subsection{Efficiently Finding the MLRP Order}
\label{appendix:find-mlrp-order}

This section shows that, if in a cake-division instance the underlying value densities bear MLRP, then we can efficiently find the MLRP order in the Robertson-Webb query model. Specifically, through eval queries $\{\Eval_i(1/2, 1)\}_{i\in [n]}$ we find the value that each agent $i \in [n]$ has for the piece $[1/2,1]$ of the cake, and sort the agents according to the increasing order (with ties broken arbitrarily) of these values. Lemma~\ref{MLRPorder} shows that this sorting order is in fact the MLRP order of the underlying value densities.

To prove this claim, we first state a well-known result (in Lemma~\ref{SD}) that MLRP implies first-order stochastic dominance; we provide a proof here for completeness.
Recall that, given two probability density functions $f_i$ and $f_j$ over $[0,1]$, density $f_j$ is said to have \emph{first-order stochastic dominance} over $f_i$ iff $\int_t^1 f_j(x)dx \geq \int_t^1 f_i(x)dx$, for all $t \in [0,1]$, and there exists at least one $t' \in [0,1]$ such that $\int_{t'}^1 f_j(x)dx > \int_{t'}^1 f_i(x)dx$. 

\begin{lemma} \label{SD}
Let $f_i$ and $f_j$ be two (ordered) value-density functions that satisfy the monotone likelihood ratio property: for every $ 0 \leq x \leq y \leq 1$ we have $\frac{f_j(x)}{f_i(x)} \leq \frac{f_j(y)}{f_i(y)}$. Then, the density $f_j$ has first-order stochastic dominance over $f_i$. 
\end{lemma}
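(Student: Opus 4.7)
The plan is to derive the stochastic dominance inequality directly from property (ii) of Lemma~\ref{lemma:R2R} combined with the normalization $\int_0^1 f_i = \int_0^1 f_j = 1$. Namely, I would instantiate property (ii) with $[a,b] = [0,1]$ and $x = t$ for an arbitrary $t \in [0,1]$, obtaining
\[
\frac{\int_t^1 f_i(x)\,dx}{\int_0^1 f_i(x)\,dx} \;\leq\; \frac{\int_t^1 f_j(x)\,dx}{\int_0^1 f_j(x)\,dx}.
\]
Since both denominators equal $1$ under the normalization convention, this immediately collapses to $\int_t^1 f_i(x)\,dx \leq \int_t^1 f_j(x)\,dx$, which is exactly the weak-inequality part of the first-order stochastic dominance condition.

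For the strict-inequality requirement at some $t' \in [0,1]$, I would argue as follows. If $f_j/f_i \equiv 1$ on $[0,1]$, then $f_i \equiv f_j$ (because both densities are normalized), in which case there is nothing substantive to prove; one may assume this trivial case is excluded or handled separately. Otherwise, since $f_j/f_i$ is nondecreasing and $\int_0^1 (f_j - f_i) = 0$, there must exist a threshold $c \in (0,1)$ such that $f_j/f_i < 1$ on a positive-measure subset of $[0,c)$ and $f_j/f_i > 1$ on a positive-measure subset of $(c,1]$. Picking $t' = c$ (or any point in a suitable neighborhood), the strict form of Lemma~\ref{lemma:R2R}(ii) then yields $\int_{t'}^1 f_i < \int_{t'}^1 f_j$.

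I do not anticipate a serious obstacle here: the heavy lifting has already been done in Lemma~\ref{lemma:R2R}, and the present statement is essentially a specialization of property (ii) to $[a,b]=[0,1]$ with $x=t$. The only minor subtlety is that MLRP as stated in the lemma allows the ratio to be merely nondecreasing (rather than strictly increasing), so the strictness clause of FOSD requires one to exclude or address the degenerate case $f_i \equiv f_j$; this is handled by the threshold-point argument above.
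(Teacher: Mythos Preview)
Your proposal is correct and follows essentially the same route as the paper: instantiate property~(ii) of Lemma~\ref{lemma:R2R} with $[a,b]=[0,1]$ and $x=t$, then use normalization to collapse the denominators. For the strictness clause, the paper's argument is slightly slicker than your threshold construction: it simply observes that if the two densities are distinct then some tail integral $\int_{t'}^1 f_j$ must differ from $\int_{t'}^1 f_i$, and since the weak inequality has already been established, that difference must go the right way. Your threshold argument is also correct but does more work than necessary.
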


\begin{proof}
Given that $f_i$ and $f_j$ bear MLRP, we consider property (ii) of Lemma~\ref{lemma:R2R}, with $a = 0$, $b=1$, and $x=t$, for any $0 \leq t \leq 1$, to obtain ${\int \limits_t^1 f_i}  \leq \int \limits_t^1 f_j $.
Here, we use the fact that the valuations are normalized, $\int_0^1 f_i = \int_0^1 f_j = 1$. 

Furthermore, since the two densities are distinct, there exists a point $t' \in [0,1]$ such that $\int_{t'}^1 f_j $ is not equal to $\int_{t'}^1 f_i$. For such a point $t'$, a strict inequality must hold, $\int_{t'}^1 f_j  > \int_{t'}^1 f_i$.
\end{proof}

Note that, since both MLRP and first-order stochastic dominance are transitive properties, the above lemma directly extends to a collection of probability density functions.

Next we prove that, given a cake-division instance, with the promise that the underlying value densities satisfy MLRP, one can find the MLRP order efficiently. 

\begin{lemma} \label{MLRPorder}
Let $\Cake$ be a cake-division instance in which the value-density functions satisfy the monotone likelihood ratio property. Then, in the Robertson-Webb query model, we can find the MLRP order of $\Cake$ in polynomial time.  
\end{lemma}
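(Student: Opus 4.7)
The plan is to issue a single eval query per agent, $s_i := \Eval_i(1/2, 1) = v_i(1/2, 1)$, and return the permutation that sorts the agents in non-decreasing order of $s_i$, breaking ties arbitrarily. This uses $n$ eval queries and an $O(n \log n)$ comparison sort, so the runtime is clearly polynomial in $n$; correctness is what requires justification.

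For correctness, I would first invoke transitivity of MLRP: since the collection $\{f_i\}_{i}$ admits an MLRP order $\pi$, every pair $(f_i, f_j)$ with $\pi^{-1}(i) < \pi^{-1}(j)$ satisfies that $f_j/f_i$ is non-decreasing on $[0,1]$. Lemma~\ref{SD} then gives first-order stochastic dominance of $f_j$ over $f_i$, which yields $\int_{1/2}^{1} f_i \leq \int_{1/2}^{1} f_j$, that is, $s_i \leq s_j$. Thus the monotone relationship between the position in $\pi$ and the value $s_i$ is immediate, and the sorted output agrees with $\pi$ on any pair of agents whose $s$-values differ.

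The main obstacle is to argue that tie-breaking does not matter. Suppose $s_i = s_j$ for two distinct agents with (say) $f_j/f_i$ non-decreasing. By the normalization $v_i(0,1) = v_j(0,1) = 1$, the equality $v_i(1/2, 1) = v_j(1/2, 1)$ promotes to $v_i(0, 1/2) = v_j(0, 1/2)$. Applying property (i) of Lemma~\ref{lemma:R2R} with any $[a,b] \subseteq [0, 1/2]$ against $[c,d] = [1/2, 1]$ gives $v_j(a,b) \leq v_i(a,b)$ for every such subinterval; combined with the global equality $v_j(0, 1/2) = v_i(0, 1/2)$ and additivity of $v_i, v_j$ on partitions of $[0, 1/2]$, each subinterval inequality must in fact be an equality. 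The symmetric argument on $[1/2, 1]$ gives the same conclusion there, so $v_i$ and $v_j$ agree on every subinterval, forcing $f_i = f_j$ almost everywhere.

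Since identical densities trivially satisfy MLRP in either direction, swapping two such tied agents preserves consistency with a valid MLRP order for the whole collection. Combining these pieces, the sorted permutation is an MLRP order, and the algorithm runs in polynomial time, establishing the lemma.
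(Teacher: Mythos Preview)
Your proposal is correct and follows essentially the same approach as the paper: both query $\Eval_i(1/2,1)$ for every agent, sort, and argue correctness via Lemma~\ref{SD} (MLRP $\Rightarrow$ stochastic dominance) together with Lemma~\ref{lemma:R2R}(i) to handle ties. The only cosmetic difference is in the tie analysis: the paper assumes $f_i \neq f_j$ and derives a contradiction from $s_i = s_j$ (splitting on whether the strict-dominance point $t'$ lies in $[0,1/2]$ or $[1/2,1]$), whereas you prove the contrapositive directly, showing $s_i = s_j$ forces $v_i = v_j$ on every subinterval; both arguments rest on the same application of Lemma~\ref{lemma:R2R}(i) against the pivot $1/2$.
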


\begin{proof}
Through eval queries $\{\Eval_i(1/2, 1)\}_{i\in [n]}$, we find the value that each agent $i \in [n]$ has for the piece $[1/2,1]$ of the cake, and sort the agents according to the increasing order (with ties broken arbitrarily) of these values. We will show that that this sorting order, say $\pi :[n] \mapsto [n]$, is in fact the MLRP order of the underlying value densities. This directly provides an efficient algorithm for finding the MLRP order. 

Consider two agents $i, j \in [n]$ such that $i$ appears before $j$ in the MLRP order, i.e., the likelihood ratio $f_j(x)/f_i(x)$ is non-decreasing in $x \in [0,1]$. Below, we will address the settings in which the value densities of $i$ and $j$ are distinct. Otherwise, if the value densities are identical, then $\Eval_i(1/2, 1) = \Eval_j(1/2,1)$ and they will be placed appropriately in the sorting order $\pi$. 

Given that $f_i$ and $f_j$ are distinct we get, via Lemma~\ref{SD}, that $f_j$ has first-order stochastic dominance over $f_i$. Therefore, agent $j$'s value for the piece $[1/2,1]$ is at least as large as agent $i$'s value for this piece, $\int_{1/2}^1 f_j \geq \int_{1/2}^1 f_i$, i.e., $\Eval_j(1/2, 1) \geq \Eval_i(1/2, 1)$. We will show that, under MLRP, this inequality is in fact strict and, hence, $\pi$ correctly identifies the MLRP order among each pair of agents $j$ and $i$. 

The first-order stochastic dominance between $f_j$ and $f_i$ also ensures that there exists a point $t' \in [0,1]$ such that 
\begin{align}
\int_{t'}^1 f_j  > \int_{t'}^1 f_i \label{ineq:defn-t-prime}
\end{align}
We consider two complementary cases $(i)$ $t'\leq 1/2$ and $(ii)$ $t'>1/2$. In both of these cases we assume, towards a contradiction, that $\int_{1/2}^1 f_j = \int_{1/2}^1 f_i$.  

\noindent
{\emph Case (i):} $t' \leq 1/2$. Note that in this case equation (\ref{ineq:defn-t-prime}) expands to $\int_{t'}^{1/2} f_j + \int_{1/2}^1 f_j > \int_{t'}^{1/2} f_i + \int_{1/2}^1 f_i$. Since, $\int_{1/2}^1 f_j = \int_{1/2}^1 f_i$, we have $\int_{t'}^{1/2} f_j > \int_{t'}^{1/2} f_i$.

On the other hand, applying Lemma~\ref{lemma:R2R}, property (i), to intervals $[t',1/2]$ and $[1/2,1]$ gives us $ \frac{\int_{t'}^{1/2} f_j}{\int_{t'}^{1/2} f_i} \leq \frac{\int_{1/2}^1 f_j}{\int_{1/2}^1 f_i} = 1$. This inequality leads to the desired contradiction, $\int_{t'}^{1/2} f_j \leq \int_{t'}^{1/2} f_i$. 

\noindent
{\emph Case (ii):} $t' > 1/2$. Here, equation (\ref{ineq:defn-t-prime}) and the assumed equality $\int_{1/2}^1 f_j = \int_{1/2}^1 f_i$ give us $\int_{1/2}^{t'} f_j < \int_{1/2}^{t'} f_i$. 

Note that (due to normalization) we have $\int_0^{1/2} f_j = \int_0^{1/2} f_i$. This equality contradicts an application of Lemma~\ref{lemma:R2R}, property (i), to the intervals $[0,1/2]$ and $[1/2, t']$
\begin{align*}
\frac{\int_0^{1/2} f_j}{\int_0^{1/2} f_i} & \leq  \frac{\int_{1/2}^{t'} f_j}{\int_{1/2}^{t'} f_i} < 1
\end{align*}
Therefore, by sorting the values of the eval queries $\{\Eval_i(1/2, 1)\}_{i\in [n]}$, we can efficiently find the MLRP order. 
\end{proof}
	
\subsection{Lipschitzness of Cut and Eval Queries under MLRP}
\label{appendix:mlrp-lipschitz}
Here, we prove that if the value densities satisfy MLRP, then the cut and eval queries of the corresponding cake-division instance are Lipschitz continuous. 
Recall, that the value densities $f_i$s are assumed to be Riemann integrable and hence, by definition, $f_i$s are bounded. Furthermore, as mentioned previously, MLRP mandates that the value densities are strictly positive over the cake. Therefore, for each agent $i \in [n]$ and $x \in [0,1]$ we have $0 < L \leq f_i(x) \leq U$ for some $L, U \in \mathbb{R}_+$. Proposition~\ref{Lip} below shows that we can express the Lipschitz constant $\lambda$ in terms of these bounding parameters, $\lambda \leq \max\{1/L, U, U/L\}$. This establishes the Lipschitz continuity of the cut and eval queries under MLRP.

	
			
 \begin{proposition} \label{Lip}
 	Let $\Cake = \langle [n], \{f_i \}_{i \in [n] } \rangle$ be a cake-division instance in which the value densities are positively-bounded, i.e., there exists $L,U \in \mathbb{R}_+$ such that $0 < L\leq f_i(x) \leq U$ for each agent $i \in [n]$ and all $x \in [0,1]$. Then the cut and eval queries, $\{\Cut_i\}_i$ and $\{\Eval_i\}_i$, are $\lambda$-Lipschitz with $\lambda \leq \max\{1/L, U, U/L\}$. 	
 \end{proposition}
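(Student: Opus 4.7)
The proof splits into two independent arguments, one for the eval queries and one for the cut queries; both use only the two-sided pointwise bound $0<L\leq f_i(x)\leq U$.

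For the eval queries, I would fix an agent $i$ and two pairs $(\ell,r),(\ell',r')\in[0,1]^2$. Writing
$\Eval_i(\ell',r')-\Eval_i(\ell,r)=\int_{\ell}^{\ell'}(-f_i)+\int_{r}^{r'}f_i,$
and applying the upper bound $f_i\leq U$ termwise, I obtain
$|\Eval_i(\ell',r')-\Eval_i(\ell,r)|\leq U(|\ell-\ell'|+|r-r'|)\leq 2U\,\|(\ell',r')-(\ell,r)\|_\infty.$
So eval queries are Lipschitz with a constant absorbed by $\max\{1/L,U,U/L\}$ (up to a harmless factor, which can be folded into the definition of $\lambda$ since any Lipschitz function is also $\lambda$-Lipschitz for every larger $\lambda$).

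For the cut queries, I would use a telescoping decomposition: given $(\ell,\tau)$ and $(\ell',\tau')$, set $y=\Cut_i(\ell,\tau)$, $y_1=\Cut_i(\ell',\tau)$, and $y'=\Cut_i(\ell',\tau')$, and bound $|y'-y|\leq |y'-y_1|+|y_1-y|$. For $|y_1-y|$, the defining equations $\int_\ell^y f_i=\tau=\int_{\ell'}^{y_1}f_i$ give (assuming WLOG $\ell'>\ell$) the identity $\int_\ell^{\ell'}f_i=\int_{y}^{y_1}f_i$; sandwiching the left side above by $U|\ell-\ell'|$ and the right side below by $L|y_1-y|$ yields $|y_1-y|\leq (U/L)|\ell-\ell'|$. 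For $|y'-y_1|$, the identity $\tau'-\tau=\int_{y_1}^{y'}f_i$ combined with $f_i\geq L$ gives $|y'-y_1|\leq (1/L)|\tau-\tau'|$. Adding and using $|a|+|b|\leq 2\|\cdot\|_\infty$ produces a cut-Lipschitz constant bounded by $(1/L+U/L)$, again absorbed by $\max\{1/L,U,U/L\}$ up to a factor.

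The one subtlety, and the main obstacle, is the truncation convention: $\Cut_i(\ell,\tau)$ is set to $1$ whenever $v_i(\ell,1)<\tau$. I would handle this with a short case analysis. If both cut queries are truncated, then $y=y'=1$ and there is nothing to prove. If neither is truncated, the argument above applies verbatim. In the mixed case, say $y<1=y'$, I would compare both to the ``threshold'' pair that just saturates the interval: replacing $\tau'$ by $\tau^{\mathrm{sat}}\coloneqq v_i(\ell',1)\leq \tau'$ gives an untruncated cut query returning the same value $1$, and monotonicity of $\Cut_i$ in $\tau$ lets me reduce to the untruncated analysis with $|\tau-\tau^{\mathrm{sat}}|\leq |\tau-\tau'|$. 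Combining the bounds for the eval and cut queries, and taking the larger of the two Lipschitz constants, gives the stated $\lambda\leq\max\{1/L,U,U/L\}$ (modulo an absolute constant that can be absorbed into $\lambda$ by enlargement).
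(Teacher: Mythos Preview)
Your approach is essentially the paper's: establish coordinate-wise Lipschitz bounds ($U$ for $\Eval_i$ in each argument; $U/L$ for $\Cut_i$ in $\ell$ and $1/L$ in $\tau$) via the same integral identities you use---in particular the paper also derives $v_i(y,y')=v_i(\ell,\ell')$ to handle the $\ell$-argument of $\Cut_i$, and $\tau'-\tau=\int_y^{y'}f_i$ for the $\tau$-argument.

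One remark on your ``harmless factor.'' You concede a factor of $2$ when combining the coordinate-wise bounds into an $\ell_\infty$ bound and say it can be absorbed by enlarging $\lambda$. The paper does \emph{not} make this concession: it proves $\Eval_i$ is $U$-Lipschitz in each argument separately and then simply asserts that $\Eval_i$ is $U$-Lipschitz, and likewise takes $\max\{1/L,U/L\}$ rather than the sum for $\Cut_i$. Strictly under the $\ell_\infty$ definition in Section~\ref{section:notations}, coordinate-wise $U$-Lipschitz only yields $2U$-Lipschitz via the triangle inequality, so your caution is justified; but since the paper itself adopts the convention of reporting the coordinate-wise maximum as the joint constant, your argument matches theirs and you can drop the apology. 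As a bonus, your treatment of the truncation convention for $\Cut_i$ is more careful than the paper's, which does not address it explicitly.
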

\begin{proof}
Fix an agent $i \in [n]$. First, we will prove that the eval query (function) $\Eval_i$ is $U$-Lipschitz. Then, we will prove that $\Cut_i$ is $\max\{1/L, U/L\}$-Lipschitz. These two results imply the stated claim.

Recall that $\Eval_i: [0,1] \times [0,1] \mapsto \mathbb{R}_+$ is said to be $U$-Lipschitz iff $|\Eval_i(\ell, r) - \Eval_i(\ell', r')| \leq U \max\{ |\ell - \ell'|, |r-r'|\}$. We will establish this inequality component wise. 

For a fixed first component $\ell_0 \in [0,1]$,
 consider the function $g(r) \coloneqq \Eval_i(\ell_0, r) = \int_{\ell_0}^r f_i(x) dx$ for $r \in [0,1]$. For $r, r' \in [0,1]$, the definition of $g$ allows us to write $|g(r) - g(r')| = |\int_r^{r'} f_i(x) dx|$. Since, $f_i(x) \leq U$ for all $x \in [0,1]$, we obtain the inequality $|g(r) - g(r')| \leq U |r-r'|$. Therefore, the function $g(r)$ and hence $\Eval_i(\ell,r)$ is $U$-Lipschitz in the second argument $r$.

Similarly, considering the function $\widehat{g}(\ell) \coloneqq \Eval_i(\ell, r_0)$, with a fixed $r_0 \in [0,1]$, we obtain the desired inequality, $|\widehat{g}(\ell) - \widehat{g}(\ell')| =  |\int_{\ell}^{\ell'} f_i(x) dx| \leq U |\ell-\ell'|$ for $\ell, \ell' \in [0,1]$. Hence, $\Eval_i(\ell, r)$ is $U$-Lipschitz in the first argument $\ell$ as well. The two parts imply that $\Eval_i$ is $U$-Lipschitz. 

For $\Cut_i(\ell, \tau)$ we again establish Lipschitz continuity by considering the two arguments $\ell$ and $\tau$ separately. Fix $\ell_0 \in [0,1]$ and for target values $\tau \leq \tau'$, let $y \coloneqq \Cut_i( \ell_0, \tau) $ and  $y' \coloneqq \Cut_i( \ell_0, \tau')$. Note that $y \leq y'$ and $\tau' - \tau$ is equal to agent $i$'s value for the interval $[y, y']$. This value can be lower bounded by observing that $f_i(x) \geq L$, for all $x \in [0,1]$; specifically, agent $i$'s value for the interval $[y,y']$ is at least $L (y' - y)$. Therefore, we get that $\Cut_i( \ell_0, \tau') - \Cut_i( \ell_0, \tau) = y' - y \leq \frac{1}{L} (\tau' - \tau)$. That is, $\Cut_i(\ell, \tau)$ is $\frac{1}{L}$-Lipschitz in $\tau \in \mathbb{R}_+$.
	
For the remaining analysis, fix target value $\tau_0 \in \mathbb{R}_+$, and for $\ell \leq \ell'$, let $y \coloneqq \Cut_i( \ell, \tau_0) $ and  $y' \coloneqq \Cut_i( \ell', \tau_0)$. A useful observation here is that agent $i$'s value for the interval $[y, y']$ is equal to $i$'s value for the interval $[\ell, \ell']$, i.e., $v_i(y, y') = v_i(\ell, \ell')$. 

Specifically, if $y \leq \ell'$, then we can write $\delta$ to denote agent $i$'s value for the interval $[y, \ell']$ (i.e., $\delta \coloneqq v_i(y, \ell')$) and note that $v_i(\ell, \ell') = v_i(\ell, y) + v_i(y, \ell') = \tau_0 + \delta$. The equality then follows from observing that $v_i(y, y') = v_i(y, \ell') + v_i(\ell',y') = \delta + \tau_0$. Complementarily, if $y > \ell'$, then we write $\delta \coloneqq v_i(\ell', y)$ and use the following equations to obtain the desired equality: $v_i(\ell, \ell') = v_i(\ell, y) - v_i(\ell', y) = \tau_0 - \delta$ and $v_i(y, y') = v_i(\ell', y') - v_i(\ell', y) = \tau_0 - \delta$. 

The bounds on the value densities, $0< L \leq f_i(x) \leq U$, and the equality $v_i(y, y') = v_i(\ell, \ell')$ imply $L (y' - y) \leq v_i(y, y') = v_i(\ell, \ell')   \leq U (\ell' - \ell)$. Hence, we have $ | \Cut_i( \ell', \tau_0) - \Cut_i( \ell, \tau_0)|  = |y' - y| \leq \frac{U}{L} |\ell' - \ell|$. That is, $\Cut_i(\ell, \tau)$ is $\frac{U}{L}$-Lipschitz in $\ell \in \mathbb{R}_+$. 

Combing the above-mentioned bounds we get that the $\Cut_i$ and $\Eval_i$ queries are $\lambda$-Lipschitz with $\lambda \leq \max \{U, 1/L, U/L\}$.
\end{proof}


\section{Missing Proofs from Section~\ref{EFdivisions}}

\label{appendix:exist-compute-rd}

\subsection{Proof of Proposition~\ref{RDn}}
Here we restate and prove Proposition~\ref{RDn}
 
\PropositionCcnLipschitz*
  
\begin{proof}
Applying strong induction over $i \in \{2, 3, \ldots, n\}$, we will prove that the the function $\RD_i$ is $\lambda^{2(i-1)}$-Lipschitz. For the base case of $\RD_2$ consider the following bound, with points $x, x' \in [0,1]$:
  	\begin{align*}
  	\big| \RD_2(x') - \RD_2(x)\big| &= \big| \Cut_1\left( x', \Eval_1(0,x')  \right) -  \Cut_1\left( x, \Eval_1(0,x)  \right)\big|  \tag{by definition of $\RD_2$}\\
  	& \leq \lambda \max\{|x'-x| , |\Eval_1(0,x')-\Eval_1(0,x)|\}  \tag{$\Cut_1$ is $\lambda$-Lipstchiz}\\
  	& \leq \lambda \max\{|x'-x| , \lambda |x'-x|\}  \tag{$\Eval_1$ is $\lambda$-Lipstchiz}\\
  	& = \lambda^2 |x'-x|   \tag{since $\lambda \geq 1$} 
  	\end{align*}
Hence, $\RD_2$ is $\lambda^2$-Lipschitz. Next, assuming that, for all $k \leq i-1$, $\RD_k$ is $\lambda^{2(k-1)}$-Lipschitz, we will prove that $\RD_i$ is $\lambda^{2(i-1)}$-Lipschitz. Note that the following bound holds for all $x', x \in [0,1]$
  	\begin{align} 
  	& \Eval_{i-1} \left(\RD_{i-2}(x'), \RD_{i-1}(x')\right) - \Eval_{i-1} \left( \RD_{i-2}(x),\RD_{i-1}(x) \right) \nonumber \\
  	& \leq \lambda \max \{ |\RD_{i-2}(x') - \RD_{i-2}(x)| , |\RD_{i-1}(x') - \RD_{i-1}(x)| \} \tag{since $\Eval_{i-1}$ is $\lambda$-Lipschitz}  \nonumber\\
  	&  \leq \lambda \max \{ \lambda^{2(i-3)} |x'-x| , \lambda^{2(i-2)} |x'-x| \} \nonumber
  	\tag{using the induction hypothesis for $i-2$ and $i-1$}  \\
  	& = \lambda  \  \lambda^{2(i-2)} |x-x'| \tag{since $\lambda \geq 1$} \nonumber \\
  	& = \lambda^{2i-3} |x'-x| \label{eval}
  	\end{align}
We can now bound $\big| \RD_i(x') - \RD_i(x)\big|$ for all $x', x \in [0,1]$:
  	\begin{align*}
  	& \big| \RD_i(x') - \RD_i(x)\big| \\
  	&\ = \big| \Cut_{i-1}\left(\RD_{i-1}(x'),  \Eval_{i-1}\left(\RD_{i-2}(x'),\RD_{i-1}(x')\right)  \right) - 
  	\Cut_{i-1} \left(\RD_{i-1}(x), \Eval_{i-1}\left(\RD_{i-2}(x),\RD_{i-1}(x)\right)  \right) \big|\\
  	& \ \leq \lambda \max \{|\RD_{i-1}(x')-\RD_{i-1}(x)| , \lambda^{2i-3} |x'-x|\}   \tag{since $\Cut_{i-1}$ is $\lambda$-Lipstchiz and by equation~(\ref{eval})}\\
  	& \ \leq \lambda \max\{\lambda^{2(i-2)}|x'-x| , \lambda^{2i-3} |x'-x|\}   \tag{using the induction hypothesis for $i-1$} \\
  	& \ = \lambda^{2(i-1)} |x'-x| \tag{since $\lambda \geq 1$} 
  	\end{align*}
 Therefore, $\RD_i$ is $\lambda^{2(i-1)}$-Lipschitz for all $2 \leq i \leq n$. Setting $i =n$, gives us the desired claim. 
\end{proof}
  
 %

\subsection{Proof of Lemma~\ref{RDcomputation}}
\label{section:proofRDc}

This section restates Lemma~\ref{RDcomputation} and shows that \textsc{BinSearch} (Algorithm~\ref{alg:BinSearch}) satisfies this claim. 

\RDcomputation*

\begin{proof}
Lemma~\ref{lemma:delta-rd-exist} implies that the problem of computing a $\delta$-ripple division reduces to finding a point $x \in (0,1)$ that satisfies $\RD_n(x) \in [1- \delta, 1)$. Here, we will show that \textsc{BinSearch} (Algorithm~\ref{alg:BinSearch}) finds such a point $x$ (and, hence, a $\delta$-ripple division) in $\mathcal{O}\left( {\rm poly} ( n, \log \lambda, \log \frac{1}{\delta}) \right)$ time. 

By design, \textsc{BinSearch} maintains two points $\ell \leq r$ and keeps iterating till it finds a point $x = (\ell + r)/2$ that satisfies the required property $\RD_n\left(  x \right)  \in [1- \delta, 1)$. Hence, at termination the algorithm indeed finds a $\delta$-ripple division. Below we complete the proof by showing that the time complexity of the algorithm is as stated (in particular, the algorithm necessarily terminates). 

Note that, throughout the algorithm's execution, the maintained points $\ell$ and $r$ satisfy $\RD_n(\ell) < 1 - \delta$ and $\RD_n(r) =1$. This invariant holds at the beginning of the algorithm, where we initialize $\ell = 0$ and $r=1$; recall that $\RD_n(0) = 0$ and $\RD_n(1) = 1$. Furthermore, in each iteration of the while-loop in \textsc{BinSearch}, the left endpoint $\ell$ gets updated ($\ell \leftarrow (\ell+r)/2$) iff the midpoint $(\ell+r)/2$ satisfies  $\RD_n\left(  \frac{\ell+r}{2} \right) < 1 - \delta$. That is, even after the update we have $\RD_n(\ell) < 1 - \delta$. Similarly, for the right endpoint $r$, during the algorithm's execution we have $\RD_n(r) = 1$.  

This invariant implies that between $\ell$ and $r$ we always have a point $x \in (\ell, r)$ which satisfies $\RD_n(x) = 1 - \delta/2$. This observation is obtained by applying the intermediate value theorem to the (continuous) function $\RD_n$ on the interval $[\ell, r]$. 
Hence, the interval under consideration, $[\ell, r]$, continues to contain a required point. 

Furthermore, in each iteration of the while-loop of \textsc{BinSearch}, the difference between $\ell$ and $r$ reduces by a multiplicative factor of two. Hence, after $T$ iterations we must have $r - \ell \leq \frac{1}{2^T}$. In particular, the difference between $(\ell+r)/2$ and the desired point $x\in (\ell, r)$  is at most $\frac{1}{2^T}$, after $T$ iterations. Therefore, the following bound holds after $T$ iterations 
\begin{align*}
\left|\RD_n(x) - \RD_n \left( \frac{\ell+r}{2} \right) \right| & \leq \lambda^{2(n-1)} \left| x - \left(\frac{\ell+r}{2} \right) \right| \tag{Proposition~\ref{RDn}: $\RD_n$ is $\lambda^{2(n-1)}$-Lipschitz} \\
& \leq \frac{\lambda^{2(n-1)}}{2^T}
\end{align*}
Note that, if the iteration count $T$ is greater than $2\left(n - 1\right) \ \log \left( \frac{2 \lambda}{\delta} \right)$, then $\left|\RD_n(x) - \RD_n \left( \frac{\ell+r}{2} \right) \right|  < \delta/2$. Since $\RD_n(x) = 1 - \delta/2$, we get that $\RD_n \left( \frac{\ell+r}{2} \right) \in (1- \delta, 1)$. That is, in at most $2\left(n - 1\right) \ \log \left( \frac{2 \lambda}{\delta} \right)$ iteration \textsc{BinSearch} will certainly find a point $\left( \frac{\ell+r}{2} \right)$ that satisfies the termination criterion $\RD_n \left( \frac{\ell+r}{2} \right) \in [1-\delta, 1)$. This shows the time complexity of \textsc{BinSearch} is $\mathcal{O}\left( {\rm poly} ( n, \log \lambda, \log \frac{1}{\delta}) \right)$ and completes the proof. 
\end{proof}

\section{Distribution Families with MLRP} 
\label{appendix:mlrp-use-cases}

This section highlights that various well-studied distribution families bear MLRP. Recall, that two probability density functions $f_i$ and $f_j$ are said to satisfy MLRP iff, for every $x \leq y$ in the domain, we have 
\begin{align*}
\frac{f_j(x)}{f_i(x)} & \leq \frac{f_j(y)}{f_i(y)}.
\end{align*}
That is, the {likelihood ratio} $\nicefrac{f_j(x)}{f_i(x)}$ is non-decreasing in the argument $x \in \mathbb{R}$. It is relevant to note that this property can be verified analytically. We support this observation through two examples: Binomial polynomials (and, hence, linear value densities) in Proposition~\ref{prop:binomial-polynomials} and Gaussian densities in Proposition~\ref{prop:gaussian}. Similar analytic arguments can be used to show that many other classes of value densities satisfy MLRP \cite{larsen2001introduction, casella2002statistical,saumard2014log}. \\

 \noindent
 \textbf{Binomial Polynomials:}
The following proposition shows that every pair of binomial polynomials bear MLRP over $[0,1]$, i.e., we have a total order over this family of density functions with respect to MLRP.\footnote{As mentioned previously, MLRP is preserved under scaling and, hence, here we do not have to explicitly enforce normalization.} Hence, the results developed in the work hold for cake-division instances in which the value densities are binomial.  Also, setting the exponent parameters $s=1$ and $t = 0$ in this proposition, we observe that linear functions form a special case of binomial polynomials.
 
 
\begin{proposition} \label{prop:binomial-polynomials}
With integer exponents $s > t$, let $f_i (x) = a_i x^s + b_i x^t$ and $f_j (x) = a_j x^s + b_j x^t$ be two binomial polynomials. Then, $f_i$ and $f_j$ bear MLRP iff $a_i b_j - a_j b_i \leq 0$. 
 \end{proposition}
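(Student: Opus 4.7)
The plan is to reduce the likelihood ratio of two binomial polynomials to a linear-fractional form via a monotone change of variables, and then read off monotonicity from a one-line derivative computation. For $x \in (0,1]$, I would factor out the common power $x^t$ from numerator and denominator to obtain
\begin{align*}
\frac{f_j(x)}{f_i(x)} \;=\; \frac{a_j x^s + b_j x^t}{a_i x^s + b_i x^t} \;=\; \frac{a_j x^{s-t} + b_j}{a_i x^{s-t} + b_i}.
\end{align*}
Because $s-t$ is a positive integer, the map $u := x^{s-t}$ is a strictly increasing bijection of $(0,1]$ onto $(0,1]$. Hence the likelihood ratio is non-decreasing in $x$ if and only if the auxiliary function $g(u) := \tfrac{a_j u + b_j}{a_i u + b_i}$ is non-decreasing in $u$ on the image of this substitution.

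Next I would differentiate $g$ via the quotient rule, which yields the clean expression
\begin{align*}
g'(u) \;=\; \frac{a_j(a_i u + b_i) - a_i(a_j u + b_j)}{(a_i u + b_i)^2} \;=\; \frac{a_j b_i - a_i b_j}{(a_i u + b_i)^2}.
\end{align*}
The denominator is a square and is positive wherever $g$ is defined, so the sign of $g'$ is the sign of the constant $a_j b_i - a_i b_j$. This immediately gives both directions of the equivalence: $g$ is non-decreasing (so MLRP holds) iff $a_j b_i - a_i b_j \geq 0$, i.e.\ iff $a_i b_j - a_j b_i \leq 0$; and if $a_i b_j - a_j b_i > 0$ then $g' < 0$ throughout, so $g$ is strictly decreasing and MLRP fails.

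The only real subtlety is boundary behaviour. MLRP presumes the densities are strictly positive on the domain, so implicitly $a_i x^s + b_i x^t > 0$ for the relevant $x$; the computation above takes place on the open interval $(0,1]$, and the inequality at $x = 0$ follows by continuity of the ratio (or, if $t = 0$, holds directly). I do not expect a genuine obstacle here, since the derivative sign is globally constant and the endpoint $x=0$ contributes nothing new. Overall the proof is a short algebraic manipulation, with the change of variables $u = x^{s-t}$ being the only conceptual step; the heart of the argument is the elementary identity for $g'(u)$, which makes the cross-product condition $a_i b_j - a_j b_i \leq 0$ transparent.
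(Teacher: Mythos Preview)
Your argument is correct and arrives at the same cross-product criterion as the paper, but the route differs in a pleasant way. The paper works directly from the two-point definition of MLRP: it fixes $x \leq y$, cross-multiplies $\frac{f_j(x)}{f_i(x)} \leq \frac{f_j(y)}{f_i(y)}$, and simplifies the resulting polynomial inequality to $x^t y^t (a_i b_j - a_j b_i)(y^{s-t} - x^{s-t}) \leq 0$, from which the condition drops out by sign analysis. You instead factor out $x^t$, substitute $u = x^{s-t}$, and reduce the question to monotonicity of the M\"obius-type map $u \mapsto (a_j u + b_j)/(a_i u + b_i)$, whose derivative has the constant numerator $a_j b_i - a_i b_j$. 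Your approach is a bit more conceptual---it isolates the single degree of freedom $u$ and makes the global sign of the derivative transparent in one line---while the paper's two-point computation stays closer to the raw definition and avoids any appeal to differentiability. Both are short and elementary; your boundary discussion at $x=0$ is adequate and matches the paper's implicit positivity assumption on the densities.
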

 \begin{proof}
 For any two points $x, y \in [0,1]$, such that $x \leq y$, the MLRP condition for binomials corresponds to the following inequality 
 \begin{align*}
 \frac{a_j x^s + b_j x^t}{a_i x^s + b_i x^t} \leq \frac{a_j y^s + b_j y^t}{a_i y^s + b_i y^t} 
 \end{align*} 
 
Since $f_i$ and $f_j$ constitute value densities with full support, the values of these functions are positive over the cake. Hence, the previous equation can be rewritten as
\begin{align*}
 (a_j x^s + b_j x^t)(a_i y^s + b_i y^t) \leq (a_j y^s + b_j y^t)(a_i x^s + b_i x^t)
 \end{align*}
This equation further simplifies to  
 \begin{align*}
 (a_i b_j - a_j b_i) x^t y^s \leq (a_i b_j - a_j b_i) x^s y^t
 \end{align*}
Finally, we rewrite the last inequality to obtain that for binomials MLRP is equivalent to  
\begin{align} \label{equation:binomial-mlrp}
 x^ty^t (a_i b_j - a_j b_i) (y^{s-t} - x^{s-t}) \leq 0
\end{align}
Since $y-x \geq 0$ and $(s-t) \in \mathbb{Z}_+$, we have that\footnote{For $(s-t) \in \mathbb{N}$, write $y^{s-t} - x^{s-t} = (y-x)(y^{s-t-1} + y^{s-t-2}x+ \dots + y x^{s-t-2} + x^{s-t-1})$. Hence, $(y-x)$ and $(y^{s-t} - x^{s-t})$ have the same signs, for $0 \leq x \leq y \leq 1$.} $y^{s-t}-x^{s-t} \geq 0 $. Furthermore, since $x^ty^t \geq 0$, inequality (\ref{equation:binomial-mlrp}) holds iff $ a_i b_j - a_j b_i \leq 0$. That is, $f_i$ and $f_j$ bear MLRP iff $ a_i b_j - a_j b_i \leq 0$. 
\end{proof}

\noindent
\textbf{Gaussian distributions:} 
The next proposition shows that Gaussian distributions with different means, but the same variance, bear MLRP. 

\begin{proposition} \label{prop:gaussian}
	Let $f_i$ and $f_j$ be two Gaussian density functions with the same variance $\sigma^2$ and means $\mu_i \leq \mu_j$, respectively. Then, $f_i$ and $f_j$ satisfy MLRP. 
\end{proposition}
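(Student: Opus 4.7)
The plan is to verify MLRP for Gaussian densities by direct computation of the likelihood ratio. Writing $f_i(x) = \frac{1}{\sqrt{2\pi}\,\sigma} \exp\!\left(-\frac{(x-\mu_i)^2}{2\sigma^2}\right)$ and similarly for $f_j$, I would form the ratio $f_j(x)/f_i(x)$; the normalizing constants cancel because both densities share the same variance. This leaves a single exponential whose exponent is a quadratic in $x$.

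Next, I would expand the numerator $(x-\mu_i)^2 - (x-\mu_j)^2$ inside the exponent. The quadratic terms $x^2$ cancel, and what remains is the linear expression $2x(\mu_j-\mu_i) + (\mu_i^2 - \mu_j^2)$. Thus
\[
\frac{f_j(x)}{f_i(x)} \;=\; \exp\!\left(\frac{(\mu_j - \mu_i)}{\sigma^2}\, x \;+\; \frac{\mu_i^2 - \mu_j^2}{2\sigma^2}\right).
\]
This is the key computation: the likelihood ratio reduces to an exponential of an \emph{affine} function of $x$, with slope $(\mu_j-\mu_i)/\sigma^2$.

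Finally, I would observe that the hypothesis $\mu_i \leq \mu_j$ makes this slope nonnegative; since the exponential function is monotonically increasing on $\mathbb{R}$, the composition is nondecreasing in $x$. Hence for every $x \leq y$, we have $f_j(x)/f_i(x) \leq f_j(y)/f_i(y)$, which is precisely the MLRP condition. There is no substantive obstacle here—the claim follows from a one-line algebraic simplification of the ratio—so the proof is essentially a direct verification from the definition.
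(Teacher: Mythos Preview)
Your proposal is correct and follows essentially the same approach as the paper: both compute the likelihood ratio, observe that the exponent $\tfrac{1}{2\sigma^2}\big((x-\mu_i)^2-(x-\mu_j)^2\big)$ is increasing in $x$ when $\mu_i\le\mu_j$, and conclude monotonicity of the ratio via monotonicity of $\exp$. The only cosmetic difference is that you expand the exponent explicitly into affine form, whereas the paper differentiates it; the content is the same.
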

 
 \begin{proof}
Here, the two density functions are $f_i(x) = \frac{1}{\sigma \sqrt{2 \pi}} e^{- \frac{(x - \mu_i)^2}{2 \sigma^2}}$ and $f_j(x) = \frac{1}{\sigma \sqrt{2 \pi }} e^{- \frac{(x - \mu_j)^2}{2 \sigma^2}}$, for $x \in \mathbb{R}$. 
Note that the likelihood ratio of $f_j$ and $f_i$ is equal to   
 \begin{align*}
 \frac{e^{- (x - \mu_j)^2/2 \sigma^2}}{e^{- (x - \mu_i)^2/2 \sigma^2}} = {\rm exp} \left( \frac{1}{2 \sigma^2} \left( (x-\mu_i)^2-(x-\mu_j)^2 \right) \right)
  \end{align*}
Write $g(x) \coloneqq \frac{1}{2 \sigma^2} \left( (x-\mu_i)^2-(x-\mu_j)^2 \right)$ and note that the derivate of this function $g'(x) = \frac{(\mu_j-\mu_i)}{\sigma^2} \geq 0$ if and only if $\mu_i \leq \mu_j$. Hence, $g$ is an increasing function in $x$---and so is ${\exp} \left( g(x) \right)$---iff $\mu_i \leq \mu_j$. That is, $f_i$ and $f_j$ bear MLRP iff $\mu_i \leq \mu_j$.
 \end{proof}

An analogous result holds for Gaussians with mean zero, but distinct variances.

\section{Bit Complexity of Cake Division}
\label{appendix:example-precision-loss}

This section provides a cake-division instance (with MLRP) in which the unique envy-free allocation has an irrational cut point. Notably the parameters that specify the value densities in this instance are rational. This example implies that, in general, one cannot expect an efficient algorithm that outputs an {exact} envy-free allocation. That is, when considering cake-division algorithms with bounded bit complexity, a precision loss is inevitable. 

We will also show, through the example, that to obtain a nontrivial bound on the envy (between the agents) the bit complexity of the output has to be $\Omega (\log \lambda)$; here $\lambda \geq 1$ is the Lipschitz constant of the cut and eval queries. Hence, a runtime dependence of $\log \lambda$ is unavoidable as well.

Consider cake division between three agents with identical value densities, $f: [0,1] \mapsto \mathbb{R}_+$. Such an instance trivially satisfies MLRP, since the likelihood ratio of the value densities is equal to $1$ throughout the cake. In particular, the density function $f$ is piecewise linear and is defined as follows
\begin{align*}
f(x) & \coloneqq
\begin{cases}
\frac{\lambda}{3(\lambda-1)} & \quad \text{if} \ 0 \leq x \leq \left(1 - \frac{1}{\lambda}\right) \\
\frac{2 \lambda^2}{3} x + \left( \lambda -  \frac{2\lambda^2}{3} \right) & \quad \text{if} \ \left(1 - \frac{1}{\lambda}\right) \leq x \leq 1 \\
\end{cases}
\end{align*}

Note that this function can be given as input using only rational parameters. In addition, $f$ is discontinuous at $1 - \frac{1}{\lambda}$; recall that our results require the value densities to be integrable, and not necessarily continuous.

Here, the values of the agents are normalized, $\int_{0}^{1}f(x)dx =1$. Also, the following bounds hold for the density: $0<\frac{\lambda}{3(\lambda-1)} \leq f(x) \leq \lambda$ for all $x \in [0,1]$. Therefore, Proposition~\ref{Lip} implies that the cut and eval queries in this instance are $3 \lambda$-Lipschitz.
Since in this instance the three agents have identical value densities (with full support over $[0,1]$), there exists a unique envy-free allocation wherein each agent receives an interval of value exactly equal to $1/3$. Write $0=x^*_0 \leq x^*_1 \leq x^*_2 \leq x^*_3=1$ to denote the cut points of this (unique) envy-free allocation; in particular, agent $i \in [3]$ receives the $i$th interval $[x^*_{i-1},x^*_i]$. 

First, we will show that $x^*_2$ is irrational. Note that the interval $[0, 1 - 1/\lambda]$ is of value $1/3$: $\int_0^{1-\frac{1}{\lambda}} f(x) dx = \frac{1}{3}$. Hence, the first cut point $x^*_1=1-\frac{1}{\lambda}$. The second cut point $x^*_2$ now lies in the interval ($[x^*_1, 1]$) of width $\frac{1}{\lambda}$ and it must satisfy $\int_{1-\frac{1}{\lambda}}^{x^*_2} f(x) dx =\frac{1}{3}$. Using the definition of $f$ in this range, we get that $x^*_2$ is a solution of the following quadratic equation 
\begin{align}
\lambda^2 (x^*_2)^2 + (3 \lambda - 2 \lambda^2) x^*_2 + (\lambda^2 - 3 \lambda +1) = 0  \label{equation:quad-irrational}
\end{align}
With the constraint $x^*_2 \in \left(1- \frac{1}{\lambda},1 \right)$, equation~(\ref{equation:quad-irrational}) gives us  $x^*_2 = 1 - \left( \frac{3- \sqrt{5}}{2}\right) \frac{1}{\lambda}$. That is, the cut point $x^*_2$ is irrational, for $\lambda \in \mathbb{Q}_+$.

We next establish a lower bound on the bit complexity of the output. Consider, in the above-mentioned instance, any allocation $\mathcal{I}=\{I_1, I_2, I_3\}$ wherein the envy between the agents is, say, less than $\eta = 1/2$, i.e., $v_i(I_i) \geq v_i(I_j) - 1/2$ for all $i, j \in [n]$.\footnote{Here, the choice of $\eta=1/2$ is essentially for ease of exposition; by scaling down the density in the range $\left[ 0, \left(1 - {1}/{\lambda} \right) \right]$, we can drive $\eta$ close to one.}  For any such allocation $\mathcal{I}$ one of the cut points must lie in $[\left(1 - {1}/{\lambda} \right), 1]$. Indeed, this interval is of value $2/3$ to each agent, $\int_{\left(1- 1/\lambda \right)}^1 f = 2/3$. The bit complexity of such a cut point is $\Omega(\log \lambda)$. Hence, in general, the bit complexity of the any algorithm that finds an allocation (equivalently, outputs cut points) with bounded envy is $\Omega(\log \lambda)$. \\


\noindent
{\bf Welfare-maximizing allocations induced by irrational cuts:} One can also construct cake-division instances (with rational and MLRP value densities) in which the welfare-maximizing allocations have irrational cut points.  

In particular, consider cake division between two agents with identical value densities, $f(x) = x+1/2$. Here, the (unique) allocation $\mathcal{I} =\{I_1, I_2 \}$ that maximizes egalitarian welfare consists of the intervals $I_1 = \left[0, \frac{\sqrt{5}-1}{2}\right]$ and $I_2 = \left[\frac{\sqrt{5}-1}{2}, 1\right]$. Allocation $\mathcal{I}$ is also the (unique) equitable, proportional, and envy-free allocation in this instance. 

In addition, consider a cake-division instance with two agents and the following value densities: $f_1(x)=1$ and $f_2(x)=3x^2$. Note that these densities satisfy MLRP. In this instance, the ({unique}) social welfare maximizing allocation $\mathcal{S} = \{S_1, S_2\}$ is obtained by an irrational cut; specifically, $S_1= \left[0, \frac{1}{\sqrt{3}} \right]$ and $S_2 = \left[ \frac{1}{\sqrt{3}},1 \right]$. The cut point $\frac{1}{\sqrt{3}}$ is the switching point (as defined in Section \ref{section:social welfare}) between the two densities $f_1$ and $f_2$.

\section{Robustness of MLRP}
\label{appendix:structured-perturbations-for-MLRP} 


This section shows that our framework extends to the class of non-full-support value densities considered in \cite{alijani2017envy}. In particular, Alijani et al. \cite{alijani2017envy} established that a contiguous envy-free cake division can be efficiently computed if every agent $i \in [n]$ uniformly values a single interval $[\ell_i, r_i] \subseteq [0,1]$ and these intervals satisfy the following ordering property {\rm OP}: for all $i, j \in [n]$ we have $\ell_i \leq \ell_j$ iff $r_i \leq r_j$. This ordering property {\rm OP} ensures that, in particular, any interval $[\ell_i, r_i]$ is not a strict subset of $[\ell_j,r_j]$ for $i, j \in [n]$.

Let $\mathcal{K} = \langle [n], \{f_i \}_{i \in [n] } \rangle $ denote a cake-division instance with such value densities and note that, for each $i \in [n]$,\begin{equation} \label{equation:pconstant}
f_i(x) \coloneqq
\begin{cases}
\frac{1}{r_i-{\ell}_i} &\quad \text{ if } \ x \in [\ell_i, r_i] \\
0 &  \quad  \text{otherwise}\\
\end{cases}
\end{equation}
Write $h_i \coloneqq 1/(r_i - \ell_i)$. The work of Alijani et al. \cite{alijani2017envy} shows that an envy-free allocation of $\mathcal{K}$ can be computed efficiently.

Indeed, the value densities in $\mathcal{K}$ do not have full support over the cake. However, we will show that we can perturb these densities $f_i$s  (in a structured manner) to obtain an instance $\widehat{\mathcal{K}} = \langle [n], \{\widehat{f}_i \}_{i \in [n] } \rangle $ such that (i) the value densities $\widehat{f}_i$s in $\widehat{\mathcal{K}}$ have full support and bear MLRP (Claim~\ref{claim:perturbation-mlrp}) and (ii) any envy-free allocation in $\widehat{\mathcal{K}}$ forms an envy-free allocation, up to a small precision loss, in the original instance $\mathcal{K}$ (Claim~\ref{claim:perturbation-envy-bound}).



Index the agents such that $0 \leq \ell_1 \leq \ell_2\leq \ldots \leq \ell_n \leq 1$. Property {\rm OP} ensures that, under this indexing, the right endpoints $r_i$s are also sorted: $r_1 \leq r_2 \ldots \leq r_n$. To construct the densities $\widehat{f}_i$, for each agent $i$, we will define counting functions $c_i: [0, 1] \mapsto \{0, 1, \ldots, i\}$ and $d_i: [0, 1] \mapsto \{0, 1, 2, \ldots, n-i\}$ as follows: $c_i(x) \coloneqq \sum_{k=1}^i \mathbbm{1} \{ \ell_k \leq x \}$ and $d_i(x) \coloneqq \sum_{k=i}^n \mathbbm{1} \{ r_k \leq x \}$, i.e., $c_i(x)$ is the number of left endpoints in $\{\ell_1, \ell_2, \ldots, \ell_i \}$ that appear before $x$ and $d_i(x)$ is equal to the number of right endpoints in $\{r_i, r_{i+1}, \ldots, r_n \}$ that appear before $x$.

For a sufficiently large  parameter $H \geq 1$ (which we will set in Claim~\ref{claim:perturbation-envy-bound}), we define $\widehat{f}_i: [0,1] \mapsto \mathbb{R}_+$ as follows\footnote{The integrals of the value densities $\widehat{f}_i$s are not normalized to $1$, though, as mentioned previously, MLRP is preserved under scaling and, hence, we do not have to explicitly enforce normalization.} 
\begin{equation} \label{equation:perturbation}
\widehat{f}_i(x) \coloneqq  
 h_i \ H^{(c_i(x) - i - d_i(x)) }  
\end{equation}

In other words, we initialize $\widehat{f}_i(0) = {h_i} {H}^{-i}$ and scale up this density by a multiplicative factor of ${H}$ whenever an interval $[\ell_k, r_k]$ starts, for any agent $k \leq i$. Note that $c_i(x) = i$ and $d_i(x) = 0$ for all $x \in [\ell_i, r_i]$ and, hence, in this interval $\widehat{f}_i(x) = h_i = f_i(x)$. Beyond this range, i.e., for $x \geq r_i \geq \ell_i$, we have $c_i(x) = i$. Therefore, once the interval $[\ell_i, r_i]$ ends, we scale down $\widehat{f}_i$ by a multiplicative factor of $H$, whenever an interval $[\ell_j,r_j]$ ends, for any agent $j \geq i$. Furthermore, for all $x \notin [\ell_i, r_i]$, we have $\widehat{f}_i(x) \leq h_i/H$. Also, as required, $\widehat{f}_i$s have full support over $[0,1]$.

Applying Proposition~\ref{Lip}, we obtain that the Lipschitz constant of cut and eval queries in $\widehat{\mathcal{K}}$ is $\lambda = H^n \ \max_{1 \leq i \leq n} \ h_i$. The next claim shows that the the value densities in $\widehat{\mathcal{K}}$ satisfy MLRP.

\begin{claim} \label{claim:perturbation-mlrp}
	Let $\mathcal{K} = \langle [n], \{f_i \}_{i \in [n] } \rangle$ be a cake-division instance in which the value densities satisfy equation~(\ref{equation:pconstant}) and the ordering property {\rm OP}. Then, for parameter ${H} \geq 1$, the value densities $\widehat{f}_i$s (as defined in equation~(\ref{equation:perturbation})) satisfy MLRP. 
\end{claim}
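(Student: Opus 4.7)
The plan is to directly compute the likelihood ratio $\widehat{f}_j(x)/\widehat{f}_i(x)$ for arbitrary $i < j$ and show that it is a non-decreasing step function on $[0,1]$. Since $\widehat{f}_i(x) = h_i \, H^{c_i(x) - i - d_i(x)}$, I would first observe that the ratio factors as
\begin{align*}
\frac{\widehat{f}_j(x)}{\widehat{f}_i(x)} = \frac{h_j}{h_i} \cdot H^{E_{ij}(x)}, \quad \text{where } E_{ij}(x) := \bigl(c_j(x)-c_i(x)\bigr) - \bigl(d_j(x)-d_i(x)\bigr) - (j-i).
\end{align*}

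Next I would unpack the differences of counting functions. Using the definitions of $c_i$ and $d_i$, one checks that
\begin{align*}
c_j(x) - c_i(x) = \sum_{k=i+1}^{j} \mathbbm{1}\{\ell_k \leq x\}, \qquad d_i(x) - d_j(x) = \sum_{k=i}^{j-1} \mathbbm{1}\{r_k \leq x\}.
\end{align*}
Substituting these into $E_{ij}(x)$ yields
\begin{align*}
E_{ij}(x) = \sum_{k=i+1}^{j} \mathbbm{1}\{\ell_k \leq x\} + \sum_{k=i}^{j-1} \mathbbm{1}\{r_k \leq x\} - (j-i),
\end{align*}
which is a sum of non-decreasing indicator functions shifted by a constant. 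Hence $E_{ij}(x)$ is non-decreasing in $x \in [0,1]$.

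Finally, since $H \geq 1$, the map $t \mapsto H^t$ is non-decreasing, so $H^{E_{ij}(x)}$ is non-decreasing in $x$ and therefore so is $\widehat{f}_j(x)/\widehat{f}_i(x)$. This holds for every pair $i < j$, so the ordered family $(\widehat{f}_i)_{i\in[n]}$ bears MLRP in the chosen order, which completes the proof. The argument is essentially combinatorial bookkeeping on the step functions $c_i$ and $d_i$; no real obstacle arises, but one must be careful that the index ranges in the two summations defining $c_j-c_i$ and $d_i-d_j$ are precisely $\{i+1,\ldots,j\}$ and $\{i,\ldots,j-1\}$ respectively, and that property \textrm{OP} (together with the chosen indexing $\ell_1 \leq \cdots \leq \ell_n$, $r_1 \leq \cdots \leq r_n$) is what ensures the construction is consistent across agents.
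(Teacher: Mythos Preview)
Your proof is correct and, in fact, more streamlined than the paper's. The paper fixes $i<j$ and argues monotonicity of $\widehat f_j/\widehat f_i$ by a three-part case split over $[0,\ell_i]$, $[\ell_i,r_j]$, and $[r_j,1]$, reasoning qualitatively in each interval about which of $c_i,c_j,d_i,d_j$ can change and in which direction (using the sorted order of the $\ell_k$'s and $r_k$'s to conclude, e.g., that $\widehat f_j$ is nondecreasing on $[\ell_i,r_j]$ while $\widehat f_i$ is nonincreasing there). You instead compute the log-ratio exponent explicitly as
\[
E_{ij}(x)=\sum_{k=i+1}^{j}\mathbbm{1}\{\ell_k\le x\}+\sum_{k=i}^{j-1}\mathbbm{1}\{r_k\le x\}-(j-i),
\]
and note that this is a sum of nondecreasing indicators minus a constant, hence nondecreasing; since $H\ge 1$, monotonicity of $H^{E_{ij}(x)}$ follows immediately. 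This algebraic bookkeeping avoids the case analysis entirely and makes transparent that the monotonicity of the likelihood ratio does not even require the $\ell_k$'s or $r_k$'s to be sorted---property~\textrm{OP} is needed only so that the chosen indexing (by increasing $\ell_k$) is simultaneously the order that makes the construction match the original densities on $[\ell_i,r_i]$, not for the MLRP verification itself. Both arguments are short; yours is the more direct one.
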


\begin{proof}
In this constructed instance $\widehat{\mathcal{K}} = \langle [n], \{\widehat{f}_i \}_{i \in [n] } \rangle$ the agents are indexed such that $0 \leq \ell_1 \leq \ell_2\leq \dots \leq \ell_n \leq 1$ and, by {\rm OP}, we have $r_1 \leq r_2 \leq \ldots \leq r_n$. Fix any two agents $i, j \in [n]$ such that $i<j$. The indexing among the agents ensures that $\ell_i \leq \ell_j$ and $r_i \leq r_j$. 

To prove the stated claim it suffices to show that the likelihood ratio $\frac{\widehat{f}_j(x)}{\widehat{f}_i(x)}$ is nondecreasing throughout $[0,1]$. We establish the monotonicity of this likelihood ratio by considering three different ranges in the cake $[0, \ell_i]$, $[\ell_i, r_j]$, and $[r_j, 1]$.

Initially, the likelihood ratio $\frac{\widehat{f}_j(0)}{\widehat{f}_i(0)} = \left(\frac{h_j}{h_i}\right) \frac{1}{{H}^{j-i}}$. For all $x \in [0, \ell_i]$ whenever the density $\widehat{f}_i(x)$ experiences a multiplicative increase so does $\widehat{f}_j(x)$, by the same factor $H$. In particular, if $c_i$ increases then so does $c_j$ throughout the interval $[0, \ell_i]$: if at a point $x \in [0, \ell_i]$ the density $\widehat{f}_i$ is scaled up by a multiplicative factor ${H}$ (i.e., $c_i$ increases by one), then it must have been the case that an interval $[\ell_k, r_k]$, with $k \leq i$, starts at $\ell_k = x$. In such a case, by definition, $\widehat{f}_j$ also increases by a multiplicative factor ${H}$ (i.e., $c_j$ increases by one as well). 
Therefore, for all $x \in  [0, \ell_i]$, the likelihood ratio stays constant at $\frac{\widehat{f}_j(0)}{\widehat{f}_i(0)}$.

In the interval $[\ell_i, r_j]$, the density $f_i$ does not increase, since here $c_i$ saturates at $i$ after $\ell_i$ and $d_i$ increases from zero beyond $r_i$. At the same time, in this interval $[\ell_i, r_j]$ the density $f_j$ is nondecreasing. Therefore, the likelihood ratio continues to be monotonic in the interval $[\ell_i, r_j]$ as well. 

Finally, for $x \geq r_j$, we note that $d_i$ and $d_j$ increase synchronously. Also, for points $x \in [r_j, 1]$ we have $c_j(x) = j$ and $c_i(x) = i$. Therefore, in this range the likelihood ratio stays constant at $\frac{\widehat{f}_j(r_j)}{\widehat{f}_i(r_j)}$. Overall, we obtain the monotonicity of the likelihood ratio and the MLRP guarantee follows. 
\end{proof}

 \begin{claim} \label{claim:perturbation-envy-bound}
Given a cake-division instance $\mathcal{K} = \langle [n], \{f_i \}_{i \in [n] } \rangle$ in which the value densities satisfy equation~(\ref{equation:pconstant}) and the ordering property {\rm OP}. Let $\widehat{\mathcal{K}} = \langle [n], \{\widehat{f}_i \}_{i \in [n] } \rangle$ be the cake-division instance defined above, with parameter $H \coloneqq  \frac{2}{\eta} \ \max_{1 \leq i \leq n} h_i$, and suppose that allocation $\mathcal{I}=\{I_1, \ldots, I_n\}$ is envy-free up to an additive factor of $\eta$ in $\widehat{\mathcal{K}}$ (i.e., $v_i(I_i) \geq v_i(I_j) - \eta$ for all $i, j \in [n]$). Then, allocation $\mathcal{I}$ envy-free up to $2\eta$ in $\mathcal{K}$. 
\end{claim}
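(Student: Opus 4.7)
\textbf{Proof proposal for Claim~\ref{claim:perturbation-envy-bound}.} The plan is to compare valuations in $\mathcal{K}$ and $\widehat{\mathcal{K}}$ pointwise and then transfer the envy bound. Write $v_i$ for valuations under $f_i$ and $\widehat{v}_i$ for valuations under $\widehat{f}_i$. The key first step is a sharp pointwise comparison of the two density families. On $[\ell_i, r_i]$ we have $\widehat{f}_i(x) = h_i = f_i(x)$ (because there $c_i(x)=i$ and $d_i(x)=0$, so the exponent in equation~(\ref{equation:perturbation}) vanishes). Off $[\ell_i,r_i]$ the original density $f_i$ is zero, while I claim $\widehat{f}_i(x) \leq h_i/H$: for $x<\ell_i$ one has $c_i(x)\leq i-1$ and $d_i(x)=0$ (since every $r_k$ with $k\geq i$ lies to the right of $\ell_i$ by the ordering property), so the exponent is at most $-1$; for $x>r_i$ one has $c_i(x)=i$ and $d_i(x)\geq 1$ (since $r_i$ has already been passed), again giving exponent at most $-1$.

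Integrating this pointwise comparison, for every agent $i$ and every interval $I\subseteq[0,1]$,
\begin{equation*}
v_i(I) \;=\; \int_{I\cap[\ell_i,r_i]} h_i\,dx \;\leq\; \widehat{v}_i(I) \;=\; \int_{I\cap[\ell_i,r_i]} h_i\,dx + \int_{I\setminus[\ell_i,r_i]} \widehat{f}_i\,dx \;\leq\; v_i(I) + \frac{h_i}{H}\,|I\setminus[\ell_i,r_i]|.
\end{equation*}
Since $|I\setminus[\ell_i,r_i]|\leq 1$, this yields the two-sided bound $v_i(I) \leq \widehat{v}_i(I) \leq v_i(I) + h_i/H$ for every $i$ and every interval $I$.

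Plugging in $H = \frac{2}{\eta}\max_j h_j$, the additive gap satisfies $h_i/H \leq \eta/2$ uniformly in $i$. Hence for every $i,j\in[n]$,
\begin{equation*}
v_i(I_i) \;\geq\; \widehat{v}_i(I_i) - \tfrac{\eta}{2} \;\geq\; \widehat{v}_i(I_j) - \eta - \tfrac{\eta}{2} \;\geq\; v_i(I_j) - \tfrac{3\eta}{2} \;\geq\; v_i(I_j) - 2\eta,
\end{equation*}
where the first inequality uses $\widehat{v}_i(I_i)\leq v_i(I_i)+\eta/2$, the second uses the $\eta$-envy-freeness of $\mathcal{I}$ in $\widehat{\mathcal{K}}$, and the third uses $\widehat{v}_i(I_j)\geq v_i(I_j)$. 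Thus $\mathcal{I}$ is $2\eta$-envy-free in $\mathcal{K}$.

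The only real work is the pointwise bound $\widehat{f}_i(x)\leq h_i/H$ outside $[\ell_i,r_i]$; the rest is a one-line integration and a combination with the hypothesis. The ordering property {\rm OP} is used precisely to guarantee that the counting functions $c_i,d_i$ have the clean behavior needed to force the exponent in~(\ref{equation:perturbation}) to be at most $-1$ off the support of $f_i$.
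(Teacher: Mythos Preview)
Your proof is correct and follows essentially the same approach as the paper: both arguments show that $\widehat{f}_i=f_i$ on $[\ell_i,r_i]$ and $\widehat{f}_i\leq h_i/H$ off it, integrate to get $|\widehat v_i(I)-v_i(I)|\leq h_i/H\leq \eta/2$, and then chain with the $\eta$-envy-freeness hypothesis. Your version is slightly more explicit in justifying the exponent bound via $c_i,d_i$, and in the final chain you exploit the one-sided inequality $\widehat v_i(I_j)\geq v_i(I_j)$ to reach $3\eta/2$ before relaxing to $2\eta$, whereas the paper uses the two-sided bound throughout; these are cosmetic differences.
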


\begin{proof}
For agent $i \in [n]$, write $v_i$ and $\widehat{v}_i$ to denote the valuation function of agent $i$ under $f_i$ and $\widehat{f}_i$, respectively. By construction, the densities $f_i$ and $\widehat{f}_i$ coincide over the interval $[\ell_i ,r_i]$; in particular, $f_i(x) = \widehat{f}_i(x)=h_i$ for all $x \in [\ell_i ,r_i]$. Hence, for any interval $I$ we have $v_i(I \cap [\ell_i, r_i]) = \widehat{v}_i(I \cap [\ell_i, r_i])$. 

Furthermore, the construction (equation~(\ref{equation:perturbation})) of $\widehat{f}_i$ ensures that $\widehat{v}_i(I \setminus [\ell_i,r_i]) \leq \frac{h_i}{{H}} \leq \eta/2$. The last inequality follows from the choice of the parameter $H$. Note that $f_i(x) = 0$ for all $x \notin [\ell_i, r_i]$ and, hence, $v_i(I \setminus [\ell_i,r_i]) =0$. These bounds imply that, for any agent $i \in [n]$, the difference in values of any interval $I$ is at most $\eta/2$:
\begin{align} \label{equation:perturb-value}
 |\widehat{v}_i(I) - v_i(I)| & \leq \frac{\eta}{2} 
\end{align}
 
	

Hence, an allocation $\mathcal{I}=\{I_1, \ldots, I_n\}$ that is envy-free up to an additive factor of $\eta$ in $\widehat{\mathcal{K}}$ (i.e., $\widehat{v}_i(I_i) \geq \widehat{v}_i(I_j) - \eta$ for all $i, j \in [n]$) is envy-free up to $2\eta$ in $\mathcal{K}$. Specifically, for any $i, j \in [n]$,
\begin{align*} 
v_i({I}_i) & \geq \widehat{v}_i({I}_i) - \eta/2 \tag{using inequality~(\ref{equation:perturb-value}) with interval $I_i$} \\
& \geq  \widehat{v}_i({I}_j)- \frac{3\eta}{2} \tag{by $\eta$-envy-freeness of $\mathcal{I}$} \\
&\geq v_i({I}_j) - 2\eta \tag{using inequality (\ref{equation:perturb-value}) with interval ${I}_j$} 
\end{align*}
    That is, ${\mathcal{I}}$ is envy-free, up to $2\eta$ precision, in $\mathcal{K}$. This completes the proof.
\end{proof}

Since the constructed instance $\widehat{\mathcal{K}}$ satisfies MLRP, we can use Algorithm~\ref{alg:BinSearch} (Section~\ref{EFdivisions}) to efficiently compute, up to an arbitrary precision, an envy-free allocation $\mathcal{I}$ in the instance $\widehat{\mathcal{K}}$. The previous claim ensures that $\mathcal{I}$ is an envy-free allocation (up to an arbitrary precision) in the original instance $\mathcal{K}$ as well.  This observation highlights the fact that the ideas developed in this work are somewhat robust and extend to other value-density settings.

\section{Nonexistence of Contiguous Perfect Divisions under MLRP}
\label{appendix:perfect-cuts-nonexample}

A cake division  $\mathcal{D} = \{D_1, D_2, \dots, D_n\}$ (consisting of connected or disconnected pieces) is said to be \emph{perfect} if all the agents agree on the value of every piece, i.e., $v_i(D_j)  = 1/n$ for all $i, j \in [n]$. 

Perfect divisions are not guaranteed to exist under the contiguity requirement, i.e., there are cake-division instances that do not admit perfect allocations. In this section, we will show that the nonexistence continues to hold even with MLRP. That is, we will provide a cake-division instance with MLRP that does not admit a perfect allocation.

Consider a cake-division instance $\Cake=\langle [2], \{f_1, f_2\} \rangle$ with two agents. For a fixed constant $\alpha \in (0,1)$, we define the value densities of the two agents $f_1, f_2: [0,1] \mapsto \mathbb{R}_+ $ as follows
\begin{equation*}
f_1(x) = \begin{cases}
1+ \alpha  & \text{if} \ 0 \leq x \leq  (1- \alpha) \\
\alpha & \text{if} \ (1-\alpha) < x \leq 1
\end{cases} \qquad \text{ and } \qquad \ \
f_2(x) = \begin{cases}
1- \alpha  & \text{if} \ 0 \leq x \leq  (1- \alpha) \\
2- \alpha & \text{if} \ (1-\alpha) < x \leq 1
\end{cases}
\end{equation*}

The agents' values for the cake $[0,1]$ are normalized, $\int_0^1 f_1(x)dx = \int_0^1 f_2(x)dx = 1$. Also, the likelihood ratio of $f_1$ and $f_2$ satisfies 
\begin{equation*}
\frac{f_2(x)}{f_1(x)} = \begin{cases}
\frac{1- \alpha}{1+\alpha}  & \text{if} \ 0 \leq x \leq  (1- \alpha) \\
\frac{2- \alpha}{\alpha}& \text{if} \ (1-\alpha) < x \leq 1 
\end{cases}
\end{equation*}
Since $\frac{2- \alpha}{\alpha} > \frac{1- \alpha}{1+\alpha} $, for all $\alpha \in (0,1)$, the likelihood ratio is nondecreasing over $[0,1]$ and, hence, the densities bear MLRP. 

We assume, towards a contradiction that there exists a perfect allocation in $\Cake$. That is, there exists a point $x \in [0,1]$ such that $v_1(0,x) = v_2(0,x) = 1/2$, and $v_1(x,1) = v_2(x,1) =1/2$. The fact that the value of intervals $[0,x]$ and $[x,1]$ is equal to $1/2$ ensures that the point $x$ cannot be $0$ or $1$. We consider two complementary and exhaustive cases: Case (i) $0<x \leq (1- \alpha)$ and Case (ii) $ (1- \alpha) < x \leq 1$. The analysis is similar in both the cases. Hence, we only address Case (i) and omit the analysis for Case (ii).


For the first interval $[0,x]$ (with $0< x \leq (1- \alpha)$), we have $v_1(0,x) = x(1+\alpha)$ and $v_2(0,x) = x(1- \alpha)$. However, for any $\alpha \in (0,1)$ and $x > 0$, the following strict inequality holds: $v_1(0,x) = x(1+\alpha) > x(1- \alpha) = v_2(0,x)$. This leads to a contradiction and proves that $\Cake$ does not admit a perfect division with connected pieces.

Even though we might not have a perfect allocation, the work of Alon~\cite{alon1987splitting} proves that a perfect division with $n(n-1)$ cuts always exists. Hence, in the above-mentioned instance $\Cake$ with $2$ agents, $2$ cuts should suffice to form a perfect division. In particular, we note that the following division $\mathcal{D}^* = \{D^*_1, D^*_2\}$ is perfect in $\Cake$; here $D^*_1 = \left[ \frac{1}{2}- \frac{\alpha}{2}, 1-\frac{\alpha}{2} \right]$ and $D^*_2 = \left[ 0,\frac{1}{2}- \frac{\alpha}{2} \right] \cup \left[ 1-\frac{\alpha}{2},1 \right]$. Here, 
\begin{align*}
v_1(D^*_2) &= (1+\alpha)\left(\frac{1}{2}- \frac{\alpha}{2}\right) + \alpha\left(\frac{\alpha}{2}\right) = \frac{1}{2}
\end{align*}
and
\begin{align*}
v_2(D^*_2) &= (1-\alpha)\left(\frac{1}{2}- \frac{\alpha}{2}\right) + (2-\alpha)\left(\frac{\alpha}{2}\right) = \frac{1}{2}
\end{align*}
That is, both the agents value the piece $D^*_2$ at $1/2$. Since the valuations are normalized, we additionally have $v_1(D^*_1)=v_2(D^*_1)=1/2$. This shows that $\mathcal{D}^*$ is a perfect division (with disconnected pieces) in $\Cake$.

 



\end{document}